\definecolor{cblack}{rgb}{0,0,0}
\definecolor{cverde}{rgb}{0,.5,0}
\definecolor{light}{rgb}{.8,.8,.8}
\definecolor{dark}{rgb}{.5,.5,.5}
\definecolor{auburn}{rgb}{0.43,0.21,0.1}
\definecolor{airforceblue}{rgb}{0.36,0.54,0.66}
\newcommand{\bA}{\boldsymbol{A}}
\newcommand{\bB}{\boldsymbol{B}}
\newcommand{\be}{\boldsymbol{e}}
\newcommand{\bF}{\boldsymbol{F}}
\newcommand{\bH}{\boldsymbol{H}}
\newcommand{\bK}{\boldsymbol{K}}
\newcommand{\bL}{\boldsymbol{L}}
\newcommand{\bm}{\boldsymbol{m}}
\newcommand{\bT}{\boldsymbol{T}}
\newcommand{\bO}{\boldsymbol{\Omega}}
\newcommand{\bzs}{\boldsymbol{\sigma}}
\newcommand{\C}{\mathbb{C}}
\newcommand{\CA}{\mathcal{A}}
\newcommand{\CB}{\mathcal{B}}
\newcommand{\CC}{\mathcal{C}}
\newcommand{\CD}{\mathcal{D}}
\newcommand{\CP}{\mathcal{P}}
\newcommand{\dn}{{\text{\tiny dn}}}
\newcommand{\ds}{\displaystyle}
\newcommand{\R}{\mathbb{R}}
\newcommand{\tcm}{\textcolor{magenta}}
\newcommand{\up}{{\text{\tiny up}}}
\newcommand{\za}{\alpha}
\newcommand{\zl}{\lambda}
\newcommand{\zs}{\sigma}
\newcommand{\zS}{\Sigma}
\newcommand{\deff}{\stackrel{\text{\tiny def}}{=}}
\newtheorem{theorem}{\textbf{Theorem}}
\newtheorem{proposition}[theorem]{\textbf{Proposition}}
\newtheorem{remark}[theorem]{\textbf{Remark}}
\newtheorem{assumption}[theorem]{\textbf{Assumption}}
\newenvironment{proof}
{\begin{trivlist} \item [\hskip\labelsep\quad\textsl{Proof.}
\hspace{0.5 em}]}{\hfill$\tcm{\Box}$ \end{trivlist}}
\renewcommand*{\@fnsymbol}[1]{\ensuremath{\ifcase#1\or a\or b\or *\or
   \mathsection\or \mathparagraph\or \|\or **\or \dagger\dagger
   \or \ddagger\ddagger \else\@ctrerr\fi}}
\numberwithin{theorem}{section}
\numberwithin{equation}{section}
\author{
F. Demontis\footnote{Dipartimento di Matematica e Informatica, Universit\`a degli Studi di Cagliari, Viale Merello 92, 09121 Cagliari, Italy},~~
S. Lombardo\footnote{Department of Mathematics, Physics and Electrical Engineering, Northumbria University, Newcastle upon Tyne, NE1 8ST, United Kingdom} \footnote{Department of Mathematical Sciences, School of Science, Loughborough University, Loughborough, LE11 3TU, United Kingdom},\\ ~~
M. Sommacal\footnotemark[2] \footnote{Corresponding author. e-mail: \texttt{matteo.sommacal@northumbria.ac.uk}},~~
C. van der Mee\footnotemark[1],~~
and F. Vargiu\footnotemark[1]
}
\title{Effective Generation of Closed-form Soliton Solutions of the Continuous Classical Heisenberg Ferromagnet Equation}
\begin{document}
\date{}
\maketitle
\thispagestyle{plain}

\begin{abstract}
The non-topological, stationary and propagating, soliton solutions of the classical continuous Heisenberg ferromagnet equation are investigated. A general, rigorous formulation of the Inverse Scattering Transform for this equation is presented, under less restrictive conditions than the Schwartz class hypotheses and naturally incorporating the non-topological character of the solutions. Such formulation is based on a new triangular representation for the Jost solutions, which in turn allows an immediate computation of the asymptotic behaviour of the scattering data for large values of the spectral parameter, consistently improving on the existing theory. A new, general, explicit multi-soliton solution formula, amenable to computer algebra, is obtained by means of the matrix triplet method, producing all the soliton solutions (including breather-like and multipoles), and allowing their classification and description.

\bigskip
\noindent\textbf{AMS Mathematics Subject Classification (2010):} 35C08, 35G25, 35P25, 35Q40, 35Q51

\bigskip
\noindent\textbf{Keywords:} Classical Heisenberg ferromagnet equation, Soliton solutions, Inverse Scattering Transform, Magnetic droplet, Ferromagnetic materials
\end{abstract}

\section{Introduction}\label{sec:1}

Recently, a number of theoretical and mostly experimental advancements have ignited renewed interest toward the study of propagating (non-to\-po\-lo\-gi\-cal) magnetic-droplet soliton configurations in ferromagnetic materials at the nanometer length-scale \cite{LauShaw2011}, particularly in view of potential applications to spintronics, magnonics and other future, spin-based information storage and processing technologies \cite{BaderParkin2010}.

The experimental observation of solitons and solitary waves in ferromagnetic systems has proved challenging, mainly due to the dimensions of the length-scale at which these phenomena are expected to occur \cite{IvanovKosevich1977, KosevichIvanovKovalev1990}. Nano-contact spin-torque oscillators (NC-STO) have been predicted to act as soliton creators in ultra thin, two-dimensional magnetic film with strong uniaxial, perpendicular anisotropy \cite{HoeferSilvaKeller2010, BoTiCoFiMuMaSlAk2010}. These configurations, denominated (non-topological) magnetic-droplet solitons, had been studied in \cite{IvaSte1989, PieZak1998, IvZaYa2001} and more recently in \cite{HoeSom2012} and \cite{HoeSomSil2012}. Finally, the first generation (enucleation) of a magnetic-droplet soliton in an NC-STO device has been announced in 2013 \cite{MohseniSaniPerssonNguyenChung2013}, along with the observation of interesting dynamical properties, including droplet oscillatory motion, droplet ``spinning'' and droplet ``breather'' states. This breakthrough led to further theoretical and experimental investigation (see, for instance, \cite{MacBacKen2014, MoSaDuPeNgChPoMuIaEkAk2014, ChMoSaIaDuNgPoMuEkHoAk2014, MaBoHo2014, ChMoEkDuRaSaNgRaDuAk2015, BooHoe2015, ChEkIaMoSaBoHoDuAk2016}).

In \cite{IaDuBoMoChHoAk2014}, it has been shown how, as an extended magnetic thin film is reduced to a nano-wire with a nano-contact of fixed size at its center, the observed excited modes undergo transitions from a fully localized two-dimensional droplet into a two-dimensional droplet ``edge'' mode, and then a pulsating one-dimensional droplet, linking the study of low-dimensional droplet solitons to the recent experimental discoveries.

In view of this connection, we have launched an ambitious research programme aimed at completing the investigation of non-topological, localized solutions of the underlying mathematical model of ferromagnetism at the nanometer length-scale, namely the one-dimensional, continuous Landau-Lifshitz equa\-tion, both in the ab\-sen\-ce and in the pre\-sen\-ce of (uni\-axial and bi\-axial) ani\-so\-tro\-py, see \cite{LandauLifschitz1935, Aharoni}. For this equation -- or, more precisely, for this family of models -- we aim at obtaining closed-form expressions for all (including potentially new) soliton solutions, allowing their classification and the description of their interactions. In the present paper we focus on the classical, continuous Heisenberg ferromagnet chain equation (\textit{i.e.} the one-dimensional, isotropic Landau-Lifshitz equation), which is the simplest and most fundamental of the continuous, integrable models of ferromagnetism \cite{Lak, T, ZT, Fogedby1980, Fogedby1980b}.

Let
\begin{equation}\label{eq:magnetization}
\bm:\R\times\R\to\mathbb{S}^2\,,\quad
\bm(x,t)=\sum_{j=1}^{3}m_{j}(x,t)\,\be_{j}
\end{equation}
be the magnetization vector at position $x$ and time $t$, where the vectors $\be_{j}$, $j=1,2,3$, are the standard Cartesian basis vectors for $\R^{3}$, $\mathbb{S}^2$ is the unit sphere in $\R^3$ and then $\|\bm(x, t)\|=1$. The position $x$ is taken on the real line orientated as $\be_{1}$. Then, the Heisenberg ferromagnet equation reads (in non-dimensional form):
\begin{subequations}\label{eq:HF1}
\begin{equation}\label{eq:HF1a}
\bm_t=\bm\wedge\bm_{xx},
\end{equation}
on which we impose the asymptotic condition
\begin{equation}\label{eq:HF1b}
\bm(x, t)\to\be_3 \mbox{ as } x\to\pm\infty\,.
\end{equation}
\end{subequations}
Equation \eqref{eq:HF1a} is the well-known continuous limit of the (quantum) ferromagnetic Heisenberg chain in a constant field when the wavelength of the excited modes is larger than the lattice distance (see, for instance, \cite{Aharoni} for a detailed discussion, or \cite{Fogedby1980} for a quick derivation; the effects of the discreteness of the lattice on the classical continuum limit of the Heisenberg chain are discussed in \cite{LakPorDan1980}). We assume that the constant spin field of the ground state of the Heisenberg chain points is $\be_3$. Then the boundary condition \eqref{eq:HF1b} has been chosen in analogy to the boundary condition for the uni-axial Landau-Lifshitz equation with perpendicular anisotropy (easy-axis). Finally, we observe that in the right-hand side of \eqref{eq:HF1a} one can add a term of the form $h\,\bm\wedge\be_3$, $h\in\R$, which can be scaled out by a convenient change of variables (\textit{e.g.} see \cite{Fogedby1980, HoeSom2012}).

It is well known that \eqref{eq:HF1} is integrable (see, for instance, \cite{Fogedby1980} for a brief time-line of the early original results on integrability). Localized, propagating, solitary waves (as well as periodic wave train solutions) had been derived in \cite{NakSas1974, LakRuiTho1976, Tjon}. In \cite{Lak}, Lakshmanan proved that \eqref{eq:HF1} has an infinite number of constants of motion and showed that the associated energy and current densities can be given in terms of the solutions of the nonlinear Schr\"{o}dinger equation \cite{ZS}. In \cite{T}, Takhtajan showed that \eqref{eq:HF1} admits a Lax pair representation. Let us briefly recall here that, if $V$ is a $2\times 2$ invertible matrix depending on position $x\in\R$, time $t\in\R$, and a spectral parameter $\lambda$, then (see \cite{T}) the Lax pair $\left(\bA,\bB\right)$ associated to \eqref{eq:HF1} is given by:
\begin{align}\label{eq:pair}
\left\{\begin{array}{l}
V_x=\bA\,V=[i\zl(\bm\cdot\bzs)]\,V\\
\\
V_t=\bB\,V=[-2i\zl^2(\bm\cdot\bzs)-i\zl(\bm\wedge\bm_x\cdot\bzs)]\,V\,,
\end{array}\right.
\end{align}
where $\bzs$ is the column vector with entries the Pauli matrices
\begin{equation*}
\zs_1=\begin{pmatrix}0&1\\1&0\end{pmatrix},\qquad
\zs_2=\begin{pmatrix}0&-i\\i&0\end{pmatrix},\qquad\zs_3=\begin{pmatrix}1&0\\0&-1
\end{pmatrix}\,.
\end{equation*}
Of course, the knowledge of the Lax pair for \eqref{eq:HF1} assures that the Inverse Scattering Transform (IST) (see \cite{AblSeg, CdG, FT}) can be applied to solve the initial-value problem \cite{T, ZT},
\begin{align}\label{eq:initial}
\left\{\begin{array}{l}
\bm_t=\bm\wedge\bm_{xx}\\
\\
\bm(x,0)\quad\text{known}\,.
\end{array}\right.
\end{align}
In \cite{T}, the Marchenko equations and the time dependence of the scattering data are presented, as well as the one-soliton solution and the phase and centre-of-mass shifts for a two-soliton collision. In \cite{ZT}, a gauge equivalence between \eqref{eq:HF1} and the nonlinear Schr\"{o}dinger equation is proved to exist. In \cite{Fogedby1980}, extending the results in \cite{T}, a diagonal action-angle representation of \eqref{eq:HF1} is exhibited.

The aim of the present paper is twofold. The first goal is to present a new, more general, rigorous theory for the IST. In particular, the direct scattering problem is proved to be well-posed for potentials satisfying the following conditions, which will be assumed to be valid throughout the work:
\begin{assumption}\label{HP1}
As a function of the position, the matrix $\bm(x)\cdot\bzs$ has an almost everywhere existing derivative with respect to $x$ with entries in $L^1(\R)$. 
Thus $\bm(x)\cdot\bzs$ is bounded and continuous in $x\in\R$.
\end{assumption}
\begin{assumption}\label{HP2}
The inequality $m_3(x)>-1$ holds for all $x\in\R$.
\end{assumption}
These conditions are less restrictive than the usual (see \cite{FT}) Schwartz class hypotheses. Moreover, it is worth observing that, under the first Assumption \ref{HP1}, $\bm(x)$ is absolutely continuous for $x\in\R$; thus its {\it point-wise} values make sense and it makes mathematical sense to assume that, in addition, $m_3(x)>-1$ for each $x\in\R$. Moreover, unexpectedly and rather remarkably, Assumption \ref{HP2} automatically entails the non-topological character of the solutions (see Sec. 5 in \cite{KosevichIvanovKovalev1990}), which is otherwise verified \textit{a posteriori} (solutions for which the magnetization $\bm(x)$ maps to a curve on $\mathbb{S}^{2}$ that is closed and contractible by continuous deformations to the north pole are called \textit{non-topological}, whereas solutions which map to lines on $\mathbb{S}^{2}$ connecting the two poles are called \textit{topological}).

For potentials satisfying Assumptions \ref{HP1} and \ref{HP2} we establish the analyticity properties of eigenfunctions and scattering data. In order to derive these results we define a convenient set of Jost solutions (see Section \ref{sub:a}) which enables the study of their asymptotic behaviour at large $\zl$. Then, differently from \cite{BianGuoLing2014} (where the conditions in Assumption \ref{HP1} are used for developing the IST theory for \eqref{eq:HF1} exploiting the gauge equivalence to the nonlinear Schr\"{o}dinger equation and by solving the corresponding Riemann-Hilbert problem), in our treatment the inverse scattering problem is formulated directly in terms of the Marchenko integral equations. They are obtained by using a new triangular representation of the Jost solutions (see Propositions \ref{P2} and \ref{P3} in Section \ref{sec:2}) which differs substantially from the triangular representations in \cite{T} and \cite{ZT} (\textit{e.g.}, see formulae (13) and (17) in \cite{ZT}). In fact, the triangular representations introduced in \cite{T} and \cite{ZT} (and used in the literature thereafter) feature the spectral parameter $\lambda$ as a factor multiplying the integral of the kernels, and this results in a rather involuted computation of the asymptotic behaviour for large $\zl$ of the Jost solutions (and consequently also of the scattering data), requiring the equivalence between \eqref{eq:HF1} and the nonlinear Schr\"{o}dinger equation to be obtained (see \cite{FT}). On the contrary, in \eqref{eq:3.10} the said factor $\lambda$ does not appear, and this allows us to establish directly and straightforwardly the asymptotic behaviour of the scattering data.

A further, remarkable advantage of the new triangular representation of the Jost solutions, \eqref{eq:3.10} and \eqref{eq:3.12}, lies in the fact that they can be immediately generalized for the Landau-Lifshitz equation with easy-axis anisotropy (see \cite{BK, BKK, Mik, Zong}) -- and possibly with any kind of anisotropy -- allowing us to establish the analytical properties of the associated eigenfunctions.

More generally, we believe that the ideas used in the present paper to make the direct and inverse scattering theory of the first equation of \eqref{eq:pair} rigorous can be extended to the scattering operator associated to the Landau-Lifshitz equation, thus paving the way to the construction of explicit solutions for this model via the IST (both in the uniaxial and in the biaxial versions).

The second objective of this paper is to find a general, explicit multi-soliton solution formula for (\ref{eq:HF1}). This formula -- which differs from those obtained by means of the Darboux dressing method (\textit{e.g.}, see \cite{Wang2005, SalHas2009, BianGuoLing2014, ChenWan2014}), requiring one to treat the problem of inverting an $N\times N$ matrix featuring Jost solutions as its elements -- contains and allows an immediate classification of all the reflectionless solutions, irrespective of the number and the nature of the discrete eigenvalues in the spectrum, providing their direct physical interpretation (\textit{e.g.}, explicit expressions for the speed and precession frequency, as well as the location and time of the interactions, period of the oscillations for the breather-like solutions, separation of the maxima for the creation of entangled states, etc, see Section \ref{sec:4}). Indeed, by choosing in a proper way the parameters featured by this formula and naturally linked to the spectral data, we are able to generate explicit expressions for all the solutions already known in the literature \cite{NakSas1974, LakRuiTho1976, Tjon, Wang2005, SalHas2009, BianGuoLing2014, ChenWan2014}, and, in particular, general, explicit expressions for the breather-like and multipole solutions (see Section \ref{sec:4}). As for the latter, it is important to underline here that, in principle, the existence of multipole solutions for (\ref{eq:HF1}) might be inferred from the gauge equivalence to the nonlinear Schr\"{o}dinger equation as derived in \cite{ZT}. However, although multipole solutions can be obtained using the formula in Theorem 11 in \cite{BianGuoLing2014} (where multisoliton, multibreathers, and multipole solutions are collectively called high-order solitons), in the present article we derive and exploit a general, explicit expression for multiple-pole solitons (irrespective of the number and the order of the multiple poles), that is capable of providing their immediate classification and which does not need the computation of any auxiliary parameters. On a more general note, it is worth clarifying that, even if the Riemann-Hilbert problem for Zakharov-Shabat systems as well as the reflectionless solutions of the nonlinear Schr\"{o}dinger equation have been extensively investigated for a very long time (\textit{e.g.}, see \cite{AblSeg, CdG, FT, CORBOOK, APT, D}), nonetheless the gauge equivalence \cite{ZT} does not automatically entail that from there one can easily and immediately recover a general, explicit, multi-soliton solution formulae for (\ref{eq:HF1}) (see \cite{BianGuoLing2014}).

To obtain this result we will develop the matrix triplet method, already employed to solve exactly, in the reflectionless case, several other integrable equations (\textit{e.g.}, see \cite{AktosunMee2006, DM0, DM1, DM2, DM3, DM4}). The idea of this method is to represent the Marchenko kernel as $Ce^{-x\,A}B$, where $(A, B, C)$ is a suitable matrix triplet (\ref{eq:5.1b}), in such a way that the Marchenko integral equation can be solved explicitly via separation of variables. The solutions obtained in this way will not contain anything more complicated than matrix exponentials and solutions of Lyapunov equations \cite{Dym, CORBOOK}, hence can be ``unzipped'' into lengthy expressions containing elementary functions. Moreover, for the one-soliton solution we specialize the expression obtained by using this algebraic approach in terms of the physical parameters used to characterize the one-soliton solution, \textit{i.e.} the velocity $v$ along the $x$-axis and the precessional frequency $\omega$, thereby obtaining the physical interpretation of the discrete eigenvalues and the norming constants. Furthermore, starting from \cite{ZT} we rederive the existence of a gauge transformation between the solutions of the nonlinear Schr\"{o}dinger equation and the solutions of classical Heisenberg ferromagnet equation. We postpone to future investigation the effect of such transformation when both solutions are expressed in terms of the same triplet of matrices.

Closed form solutions of the Heisenberg equation can be generated by the matrix triplet method \cite{CORBOOK}, and by the Riemann-Hilbert method \cite{APT}. The matrix triplet method is explicit in terms of matrix exponentials and inverse matrices, where a proof of the existence of the matrix inverses is available in the literature \cite{DM0}. The solution formulas are amenable to using matrix algebra methods and can be (and have been) used to test the accuracy of numerical methods to solve integrable nonlinear evolution equations \cite{FermoMeeSeatzu2016}. Also, explicit solution formulas obtained by means of the matrix triplet method for the nonlinear Schr\"{o}dinger and modified Korteweg-de Vries equations have been verified by direct substitution, disregarding entirely the IST method to derive them \cite{DM0,DM1}. On the other hand, the Riemann-Hilbert method requires one to solve a system of linear equations to determine certain parameters featured by the solution (\textit{e.g.}, see page 33 in \cite{APT}).

The paper is organized as follows. In Section \ref{sec:2} we study the analyticity of the Jost solutions and scattering data and determine their time-evolution. Furthermore, we formulate the inverse scattering problem in terms of the Marchenko integral equation. In Section \ref{sec:3}, combining the IST and the matrix triplet method, we get an explicit solution formula for \eqref{eq:HF1}. Finally, in Section \ref{sec:4} we exploit the solution formula to suggest a classification of all the (reflectionless) soliton solutions, including new breather-like and multipole solutions. In Appendix \ref{sec:A} we determine the Marchenko equations by using the triangular representation introduced in Section \ref{sec:2}, and in Appendix \ref{sec:B} we give further details of the derivation of the solution formula and we provide alternative (and more explicit) formulations of it.

\section{Direct and inverse scattering theory}\label{sec:2}
In this section we focus on the direct and inverse scattering theory associated to the first of equation \eqref{eq:pair}. In particular, we study the analyticity properties and the asymptotic behaviour at large $\zl$ for the Jost solutions and the scattering data, and formulate the inverse scattering problem in terms of the Marchenko integral equations. When treating the direct and inverse scattering in Subsections \ref{sub:a}, \ref{sub:b}, and \ref{sub:c} we disregard the time variable (\textit{e.g.} $\bm(x,t)$ will be considered as a function of $x$ only and represented as $\bm(x)$). Time will be subsequently reintroduced starting from Subsection \ref{sub:d}.

\subsection{Jost solutions}\label{sub:a}
The main purpose of this subsection is the study of the asymptotic behaviour of the Jost solutions (see Theorem \ref{theo1} below). To do so we represent them by using a new triangular representation (see Proposition \ref{P2}) different to the one proposed in \cite{T, ZT}.

Let us define the {\it Jost matrices} $\Psi(x,\zl)$ and $\Phi(x,\zl)$ as those solutions of the linear eigenvalue problems $\Psi_x=\bA\,\Psi$ and $\Phi_x=\bA\,\Phi$, where $\bA$ is the Lax matrix defined in $\eqref{eq:pair}$, and satisfying the asymptotic conditions:
\begin{subequations}\label{eq:1.4}
\begin{alignat}{3}
\Psi(x,\zl)&=\begin{pmatrix}\psi(x,\zl)&\overline{\psi}(x,\zl)
\end{pmatrix}=e^{i\zl x\zs_3}[I_2+o(1)],&\qquad&x\to+\infty,\label{eq:1.4a}\\
\Phi(x,\zl)&=\begin{pmatrix}\overline{\phi}(x,\zl)&\phi(x,\zl)
\end{pmatrix}=e^{i\zl x\zs_3}[I_2+o(1)],&\qquad&x\to-\infty.\label{eq:1.4b}
\end{alignat}
\noindent with $I_2$ being the $2\times2$ identity matrix. The columns $\psi(x,\zl)$, $\overline{\psi}(x,\zl)$, $\overline{\phi}(x,\zl)$, and $\phi(x,\zl)$
are called {\it Jost functions}. As a note of caution, we warn the reader that here and thereafter the bar over a symbol does not indicate complex conjugation, which instead is indicated by means of an asterisk in superscript. In the sequel, we also use the following notations:
\begin{align}
\Psi(x,\zl)=
\begin{pmatrix}\psi^{up}(x,\zl)&\overline{\psi}^{up}(x,\zl)\\
\psi^{dn}(x,\zl)&\overline{\psi}^{dn}(x,\zl)
\end{pmatrix},\quad
\Phi(x,\zl)&=\begin{pmatrix}\overline{\phi}^{up}(x,\zl)&\phi^{up}(x,\zl)\\
\overline{\phi}^{dn}(x,\zl)&\phi^{dn}(x,\zl)
\end{pmatrix}\,.\label{eq:1.4c}
\end{align}
\end{subequations}
Then the differential equations $\Psi_x=\bA\,\Psi$ and $\Phi_x=\bA\,\Phi$ (cf. with \eqref{eq:pair}) can be written as
\begin{subequations}\label{eq:1.5}
\begin{align}
\Psi_x&=i\zl(\bm\cdot\bzs)\,\Psi,\label{eq:1.5a}\\
\Phi_x&=i\zl(\bm\cdot\bzs)\,\Phi.\label{eq:1.5b}
\end{align}
\end{subequations}
It is then easily verified that $\Psi(x,\zl)$ and $\Phi(x,\zl)$ belong to the group $SU(2)$. Indeed, any square matrix $U(x)$ solution to the differential system $U_x=W(x)\,U$, where $W(x)$ is skew-Hermitian and has zero trace, has $U^\dagger\,U$ and $\det(U)$ independent of $x\in\R$. Here and thereafter the dagger denotes the complex conjugate transpose. As a result,
\begin{subequations}\label{eq:1.6}
\begin{alignat}{3}
\Psi_{11}(x,\zl)^*&=\Psi_{22}(x,\zl),&\qquad
\Psi_{12}(x,\zl)^*&=-\Psi_{21}(x,\zl),\label{eq:1.6a}\\
\Phi_{11}(x,\zl)^*&=\Phi_{22}(x,\zl),&\qquad
\Phi_{12}(x,\zl)^*&=-\Phi_{21}(x,\zl).\label{eq:1.6b}
\end{alignat}
\end{subequations}
Since the two Jost matrices are both solutions to the same first order linear homogeneous differential system, there exists a so-called {\it transition matrix} $\bT(\zl)$, depending on $\zl$ and belonging to $SU(2)$, such that
\begin{equation}\label{eq:1.7}
\Psi(x,\zl)=\Phi(x,\zl)\,\bT(\zl),\qquad\zl\in\R.
\end{equation}
For $\zl\in\R$, we have
\begin{equation}\label{eq:transmatrix}
\bT(\zl)=\begin{pmatrix}\tau(\zl)&-\varrho(\zl)\\
\varrho(\zl)^*&\tau(\zl)^*\end{pmatrix}\,,
\end{equation}
where $|\tau(\zl)|^2+|\varrho(\zl)|^2=1$. We assume that $\tau(\zl)\neq0$ for each $\zl\in\R$, \textit{i.e.} we assume that no spectral singularities exist.

In order to formulate the Riemann-Hilbert problem we need to establish the analyticity properties as well as the asymptotic behaviour at large $\zl$ for the Jost solutions and for the coefficients $\tau(\zl)$ and $\varrho(\zl)$. To get these results, let us put $\bm^0=\bm-\be_3$. We can convert the differential systems \eqref{eq:1.5} with corresponding asymptotic conditions \eqref{eq:1.4} into the Volterra integral equations
\begin{subequations}\label{eq:2.1}
\begin{align}
&\Psi(x,\zl)=e^{i\zl x\zs_3}
-i\zl\int_x^\infty \mathrm{d}\xi\,e^{-i\zl(\xi-x)\zs_3}\,(\bm^0(\xi)\cdot\bzs)\,
\Psi(\xi,\zl),\label{eq:2.1a}\\
&\Phi(x,\zl)=e^{i\zl x\zs_3}
+i\zl\int_{-\infty}^x \mathrm{d}\xi\,e^{i\zl(x-\xi)\zs_3}\,(\bm^0(\xi)\cdot\bzs)\,
\Phi(\xi,\zl).\label{eq:2.1b}
\end{align}
\end{subequations}
As a result of Gronwall's inequality (see Appendix of \cite{DBVV}) we get for $(x,\zl)\in\R^2$
\begin{subequations}\label{eq:2.2}
\begin{align}
\|\Psi(x,\zl)\|&\le\exp\left(|\zl|\int_x^\infty \mathrm{d}\xi\,\|\bm^0(\xi)\|\right),
\label{eq:2.2a}\\
\|\Phi(x,\zl)\|&\le\exp\left(|\zl|\int_{-\infty}^x \mathrm{d}\xi\,\|\bm^0(\xi)\|\right),
\label{eq:2.2b}
\end{align}
\end{subequations}
where we have to assume that $\bm^0(x)=\bm(x)-\be_3$ has its entries in $L^1(\R)$.

Here and thereafter, let $\C^{+}$ and $\C^{-}$ denote the upper and lower half-planes, respectively, whereas $\overline{\C}^{+}=\C^+\cup\R$ and $\overline{\C}^{-}=\C^-\cup\R$ denote the closure of $\C^{+}$ and $\C^{-}$, respectively.
We can easily prove the following
\begin{proposition}\label{P1} Suppose that $\bm^0(x)=\bm(x)-\be_3$ has its entries in $L^1(\R)$. Then, the so-called Faddeev functions
$e^{-i\zl x}\,\psi(x,\zl)$ and $e^{i\zl x}\,\phi(x,\zl)$
are analytic in $\zl\in\C^+$ and continuous in $\zl\in\overline{\C}^{+}$, while the Faddeev functions
$e^{i\zl x}\,\overline{\psi}(x,\zl)$ and $e^{-i\zl x}\,\overline{\phi}(x,\zl)$
are analytic in $\zl\in\C^-$ and continuous in $\zl\in\overline{\C}^{-}$.
\end{proposition}
\begin{proof}
Writing the Volterra integral equations (\ref{eq:2.1}) for the separable Jost functions and applying Gronwall's inequality we get
\begin{subequations}\label{eq:2.3}
\begin{equation}\label{eq:2.3a}
\left\|e^{-i\zl x}\,\psi(x,\zl)\right\| \leq \exp\left(|\lambda|\,\int_{x}^{\infty} \mathrm{d}\xi\,\left\|\bm(\xi)\cdot\bzs-\sigma_{3}\right\|\right)
\end{equation}
uniformly in $(\zl, x)$ for $\zl\in\overline{\C}^{+}$ and $x\in[x_0,+\infty)$ for all $x_{0}\in\mathbb{R}$, and
\begin{equation}\label{eq:2.3b}
\left\|e^{i\zl x}\,\overline{\psi}(x,\zl)\right\| \leq \exp\left(|\lambda|\,\int_{x}^{\infty} \mathrm{d}\xi\,\left\|\bm(\xi)\cdot\bzs-\sigma_{3}\right\|\right)
\end{equation}
\end{subequations}
uniformly in $(\zl, x)$ for $\zl\in\overline{\C}^{-}$ and $x\in[x_0,+\infty)$ for all $x_{0}\in\mathbb{R}$, thus proving the continuity (in $\zl\in\overline{\C}^{+}$) and analyticity (in $\zl\in\C^{+}$) of $e^{-i\zl x}\,\psi(x,\zl)$, and similarly for $e^{i\zl x}\,\overline{\psi}(x,\zl)$. The proof for the other Faddeev functions is analogous.
\end{proof}
Taking the limit of the columns of (\ref{eq:2.1}) as $x\to-\infty$ we get
\begin{subequations}\label{eq:2.5}
\begin{align}
&\tau(\zl)=1-i\zl\int_{-\infty}^\infty \mathrm{d}\xi\,\Big[m_{0}(\xi)
\,e^{-i\zl\xi}\,\psi^\up(\xi,\zl)+m_{-}(\xi)\,
e^{-i\zl\xi}\,\psi^\dn(\xi,\zl)\Big],\label{eq:2.5a}\\
&\tau(\zl^*)^*=1+i\zl\int_{-\infty}^\infty \mathrm{d}\xi\,\left[m_{0}(\xi)\,
e^{i\zl\xi}\,\overline{\psi}^\dn(\xi,\zl)-m_{+}(\xi)\,
e^{i\zl\xi}\,\overline{\psi}^\up(\xi,\zl)\right].\label{eq:2.5b}
\end{align}
\end{subequations}
Thus $\tau(\zl)$ is continuous in $\zl\in\overline{\C^+}$, is analytic in $\zl\in\C^+$, and satisfies $\tau(0)=1$. In the same way we get
\begin{subequations}\label{eq:2.6}
\begin{align}
&\varrho(\zl)^*=i\zl\int_{-\infty}^\infty \mathrm{d}\xi\,\left[
m_{0}(\xi)\,e^{i\zl\xi}\,\psi^\dn(\xi,\zl)-m_{+}(\xi)\,
e^{i\zl\xi}\,\psi^\up(\xi,\zl)\right],\label{eq:2.6a}\\
&\varrho(\zl)=i\zl\int_{-\infty}^\infty \mathrm{d}\xi\,\left[
m_{0}(\xi)\,e^{-i\zl\xi}\,\overline{\psi}^\up(\xi,\zl)+m_{-}(\xi)\,
e^{-i\zl\xi}\,\overline{\psi}^\dn(\xi,\zl)\right].\label{eq:2.6b}
\end{align}
\end{subequations}
where $\varrho(\zl)$ is continuous for $\zl\in\R$, and $\varrho(\zl)/\zl$ vanishes as $\zl\to\pm\infty$. Thus $\varrho(0)=0$ and $\varrho_\zl(0)$ exists.

From \eqref{eq:2.5} and \eqref{eq:2.6} it is clear that no information is available on their asymptotics as $\zl\to\infty$. In order to get such information let us derive a different set of Volterra integral equations. To do so we need Assumptions \ref{HP1} and \ref{HP2}, namely that $\bm(x)\cdot\bzs$ has an almost everywhere existing derivative $\bm^\prime(x)\cdot\bzs$ with respect to $x$ which has its entries in $L^1(\R)$, and that $m_3(x)>-1$ for all $x\in\R$. Here and thereafter the prime indicates the total derivative with respect to the spatial variable $x$.

Under Assumption \ref{HP1}, we can apply integration by parts to \eqref{eq:2.1a} obtaining
\begin{footnotesize}
\begin{align*}
\Psi(x,\zl)&=e^{i\zl x\zs_3}+\left[e^{-i\zl(\xi-x)\zs_3}\zs_3
(\bm^0(\xi)\cdot\bzs)\Psi(\xi,\zl)\right]_{\xi=x}^\infty\\
&\quad-\int_x^\infty \mathrm{d}\xi\,e^{-i\zl(\xi-x)\zs_3}\zs_3\left[(\bm^\prime(\xi)\cdot\bzs)
\Psi(\xi,\zl)+(\bm^0(\xi)\cdot\bzs)\frac{\partial\Psi}{\partial\xi}(\xi,\zl)
\right]\\ &=e^{i\zl x\zs_3}-\zs_3(\bm^0(x)\cdot\bzs)\Psi(x,\zl)\\
&\quad-\int_x^\infty \mathrm{d}\xi\,e^{-i\zl(\xi-x)\zs_3}\zs_3\Big[(\bm^\prime(\xi)\cdot\bzs)
+i\zl(\bm^0(\xi)\cdot\bzs)(\bm(\xi)\cdot\bzs)\Big]\Psi(\xi,\zl),
\end{align*}
\end{footnotesize}
where we have used \eqref{eq:1.5a}. Observing that
$$
I_2+\zs_3(\bm^0(x)\cdot\bzs)=\zs_3(\bm(x)\cdot\bzs)\in SU(2)\,,
$$
we obtain
\begin{align}
\zs_3(\bm(x)\cdot\bzs)&\Psi(x,\zl)=e^{i\zl x\zs_3}-\int_x^\infty \mathrm{d}\xi\,
e^{-i\zl(\xi-x)\zs_3}\zs_3(\bm^\prime(\xi)\cdot\bzs)\Psi(\xi,\zl)\nonumber\\
&-i\zl\int_x^\infty \mathrm{d}\xi\,e^{-i\zl(\xi-x)\zs_3}\zs_3(\bm^0(\xi)\cdot\bzs)
(\bm(\xi)\cdot\bzs)\Psi(\xi,\zl).\label{eq:2.7}
\end{align}
It is easy to verify that
\begin{equation}\label{eq:2.8}
(\bm^0\cdot\bzs)(\bm\cdot\bzs)=(\bm\cdot\bzs)^2-\zs_3(\bm\cdot\bzs)=I_2
-\zs_3(\bm\cdot\bzs)=-\zs_3(\bm^0\cdot\bzs).
\end{equation}
We employ \eqref{eq:2.8} to write \eqref{eq:2.7} in the equivalent form
\begin{align*}
\zs_3(\bm(x)\cdot\bzs)\Psi(x,\zl)\,=&\,e^{i\zl x\zs_3}-\int_x^\infty \mathrm{d}\xi\,
e^{-i\zl(\xi-x)\zs_3}\zs_3(\bm^\prime(\xi)\cdot\bzs)\Psi(\xi,\zl)\nonumber\\
&+i\zl\int_x^\infty \mathrm{d}\xi\,e^{-i\zl(\xi-x)\zs_3}(\bm^0(\xi)\cdot\bzs)
\Psi(\xi,\zl).
\end{align*}
Taking half the sum of \eqref{eq:2.1a} and the latter equation we get
\begin{equation}\label{eq:2.9}
D(x)\Psi(x,\zl)=e^{i\zl x\zs_3}-\int_x^\infty \mathrm{d}\xi\,
e^{-i\zl(\xi-x)\zs_3}D^\prime(\xi)\Psi(\xi,\zl),
\end{equation}
where we define
\begin{equation}\label{eq:2.10}
D(x)=\tfrac{1}{2}\Big[
I_2+\zs_3(\bm(x)\cdot\bzs)\Big]=\tfrac{1}{2}
\begin{pmatrix}1+m_3(x)&m_-(x)\\-m_+(x)&1+m_3(x)\end{pmatrix},
\end{equation}
which is a matrix of determinant $\tfrac{1}{2}(1+m_3(x))$. Under Assumption \ref{HP2}, the matrix $D(x)$ is invertible and its inverse
$$
D(x)^{-1}=\tfrac{1}{1+m_3}\left(\begin{smallmatrix}1+m_3&-m_-\\m_+&1+m_3 \end{smallmatrix}\right)\,.
$$
has norm $(1+m_3)^{-1/2}$. Thus, $D(x)$ and $D(x)^{-1}$ are bounded in $x\in\R$. Note that $D(x)\to I_2$ as $x\to\pm\infty$. We may therefore apply Gronwall's inequality to \eqref{eq:2.9} and find that
\begin{equation}\label{eq:inequalityPsi}
\|\Psi(x,\zl)\|\le\frac{1}{\sqrt{1+m_3(x)}}\exp\left[\tfrac{1}{2\,\sqrt{1+m_{3}(x)}}
\int_x^\infty \mathrm{d}\xi\,\|(\bm^\prime(\xi)\cdot\bzs)\|\right]\,.
\end{equation}

\begin{remark} In the same way and under Assumptions \ref{HP1} and \ref{HP2}, adapting the procedure presented above to the Jost matrix $\Phi(x,\zl)$, we get
\begin{equation}\label{eq:2.11}
D(x)\Phi(x,\zl)=e^{i\zl x\zs_3}+\int_{-\infty}^x \mathrm{d}\xi\,e^{i\zl(x-\xi)\zs_3}
D^\prime(\xi)\Phi(\xi,\zl).
\end{equation}
We may therefore apply Gronwall's inequality to \eqref{eq:2.11} to obtain
\begin{equation}\label{eq:inequalityPhi}
\|\Phi(x,\zl)\|\le\frac{1}{\sqrt{1+m_3(x)}}\exp\left[\tfrac{1}{2\,\sqrt{1+m_{3}(x)}}
\int_{-\infty}^x \mathrm{d}\xi\,\|(\bm^\prime(\xi)\cdot\bzs)\|\right].
\end{equation}
\end{remark}
Observe that inequalities (\ref{eq:inequalityPsi}) and (\ref{eq:inequalityPhi}) improve inequalities (2.7) as they provide bounds for the Jost functions also in the limit as $\lambda\rightarrow\infty$.

Equations \eqref{eq:2.9} and \eqref{eq:2.11} allow us to prove that the analyticity and the continuity properties of the Jost solutions extend to the closed upper and lower half-planes. In other words, the Jost solutions and the coefficient $\tau(\zl)$ have a finite limit as $\zl\to\infty$ from within the closure of its half-plane of analyticity (whereas, focussing on reflectionless solutions, we will eventually consider $\varrho(\zl)=0$, see Section \ref{sub:reconstruction}). In order to prove these results we need to find a ``suitable'' triangular representation for the Jost solutions. We have the following:
\begin{proposition}\label{P2}
There exists an {\it auxiliary matrix function} $\bK^{up}(x,y)$ such that
\begin{equation}\label{eq:3.10}
\Psi(x,\zl)=\bH^{up}(x)e^{i\zl x\zs_3}+\int_x^\infty \mathrm{d}\xi\,\bK^{up}(x,\xi)
e^{i\zl\xi\zs_3},
\end{equation}
where $\bH^{up}(x)$ is a matrix function satisfying $\bH^{up}(x)=\zs_2\,\bH^{up}(x)^*\,\zs_2$ and
$\bH^{up}(x)\to I_2$ as $x\to+\infty$, and $\int_x^\infty \mathrm{d}\xi\,\|\bK^{up}(x,\xi)\|$
converges uniformly in $x\in\R$.
\end{proposition}
Before giving the proof let us remark that equations \eqref{eq:3.10} play here the role that in \cite{T,ZT} is attributed to equation (13) in \cite{ZT}.
\begin{proof}
First of all, we em\-ploy the sym\-metry re\-la\-tion
$$
\Psi(x,\zl)^*=\zs_2\,\Psi(x,\zl)\,\zs_2
$$
to derive the structure of the auxiliary matrix function
\begin{subequations}\label{eq:3.11}
\begin{equation}\label{eq:3.11a}
\bK^{up}(x,y)=\begin{pmatrix}K^{up}_1(x,y)&-K^{up}_2(x,y)^*\\ K^{up}_2(x,y)
&K^{up}_1(x,y)^*\end{pmatrix},
\end{equation}
where $K^{up}_1(x,y)$ and $K^{up}_2(x,y)$ are scalar functions. Because of the same symmetry, we also have
\begin{equation}\label{eq:3.11b}
\bH^{up}(x)=\begin{pmatrix}H^{up}_1(x)&-H^{up}_2(x)^*\\ H^{up}_2(x)
&H^{up}_1(x)^*\end{pmatrix},
\end{equation}
\end{subequations}
where $H^{up}_1(x)$ and $H^{up}_2(x)$ are scalar functions. Substituting \eqref{eq:3.10} into \eqref{eq:2.9} we get
\begin{align*}
&D(x)\left\{\bH^{up}(x)e^{i\zl x\zs_3}+\int_x^\infty \mathrm{d}\xi\,\bK^{up}(x,\xi)
e^{i\zl\xi\zs_3}\right\}\\
&\quad=e^{i\zl x\zs_3}-\int_x^\infty \mathrm{d}\xi\,
e^{-i\zl(\xi-x)\zs_3}D^\prime(\xi)\bH^{up}(\xi)e^{i\zl\xi\zs_3}\\
&\qquad-\int_x^\infty \mathrm{d}\xi\,e^{-i\zl(\xi-\zeta)\zs_3}D^\prime(\xi)\int_\xi^\infty \mathrm{d}\zeta\,
\bK^{up}(\xi,\zeta)e^{i\zl\zeta\zs_3}.
\end{align*}
Letting
$$
\bH_e(x)=\tfrac{1}{2}\,\Big\{D^\prime(x)\,\bH^{up}(x)+\zs_3\,D^\prime(x)\,\bH^{up}(x)\,\zs_3\Big\}\,,
$$
and splitting the integrand of the last integral term into diagonal and off-diagonal parts, we obtain
\begin{footnotesize}
\begin{align*}
&D(x)\int_x^\infty \mathrm{d}\xi\,\bK^{up}(x,\xi)\,e^{i\zl\xi\zs_3}
=\Big\{I_2-\bH_e(x)-D(x)\bH^{up}(x)\Big\}\,e^{i\zl x\zs_3}\nonumber\\
&-\tfrac{1}{4}\,\int_x^\infty \mathrm{d}\xi\,\left[D^\prime\left(\tfrac{\xi+x}{2}\right)
\bH^{up}\left(\tfrac{\xi+x}{2}\right)-\zs_3 D^\prime\left(\tfrac{\xi+x}{2}\right)
\bH^{up}\left(\tfrac{\xi+x}{2}\right)\zs_3\right]e^{i\zl\xi\zs_3}\nonumber\\
&-\tfrac{1}{2}\, \int_x^\infty \mathrm{d}\zeta\, \int_x^\infty \mathrm{d}\xi\, \Big[D^\prime(\xi)\,
\bK^{up}(\xi,\zeta+\xi-x)+\zs_3\,D^\prime(\xi)\,\bK^{up}(\xi,\zeta+\xi-x)\zs_3\Big]\,e^{i\zl\zeta\zs_3}
\nonumber\\
&-\tfrac{1}{2}\, \int_x^\infty \mathrm{d}\zeta\, \int_x^{\tfrac{\zeta+x}{2}}\mathrm{d}\xi\, \Big[
D^\prime(\xi)\,\bK^{up}(\xi,\zeta+x-\xi)-\zs_3\,D^\prime(\xi)\,\bK^{up}(\xi,\zeta+x-\xi)\zs_3\Big]\,
e^{i\zl\zeta\zs_3}.
\end{align*}
\end{footnotesize}
Choosing $\bH^{up}(x)$ such that the nonintegral terms in the right-hand side cancel each other and stripping off the Fourier transform, we get for $y\ge x$ the integral equation
\begin{footnotesize}
\begin{align}
D(x)\bK^{up}(x,y)
=&-\tfrac{1}{4}\Big[D^\prime\left(\tfrac{x+y}{2}\right)\,\bH^{up}\left(\tfrac{x+y}{2}\right)
-\zs_3\, D^\prime\left(\tfrac{x+y}{2}\right)\,\bH^{up}\left(\tfrac{x+y}{2}\right)\,\zs_3\Big]\nonumber\\
&-\tfrac{1}{2}\, \int_x^\infty \mathrm{d}\xi\, \Big[D^\prime(\xi)\,\bK^{up}(\xi,y+\xi-x)
+\zs_3\,D^\prime(\xi)\,\bK^{up}(\xi,y+\xi-x)\zs_3\Big]\nonumber\\
&-\tfrac{1}{2}\, \int_x^{\tfrac{x+y}{2}}\mathrm{d}\xi\, \left[D^\prime(\xi)\,
\bK^{up}(\xi,y+x-\xi)-\zs_3\,D^\prime(\xi)\,\bK^{up}(\xi,y+x-\xi)\,\zs_3\right].\label{eq:3.14}
\end{align}
\end{footnotesize}
Using Gronwall's inequality it can be proved in a standard way \cite{APT, CORBOOK, DBVV} that \eqref{eq:3.14} has a unique solution $\bK^{up}(x,y)$ satisfying
$$
\int_x^\infty \mathrm{d}\xi\,\|\bK^{up}(x,\xi)\|\le\left(\frac{1}{2\sqrt{1+m_3(x)}}
\int_x^\infty \mathrm{d}\xi\,\|D^\prime(\xi)\bH^{up}(\xi)\|\right)e^{2\int_z^\infty \mathrm{d}\xi\,
\|\bm^\prime(\xi)\|}\,.
$$
\end{proof}
Analogously we have the following
\begin{proposition}\label{P3}
There exists an {\it auxiliary matrix function} $\bK^{dn}(x,y)$ such that
\begin{equation}\label{eq:3.12}
\Phi(x,\zl)=\bH^{dn}(x)e^{i\zl x\zs_3}+\int_{-\infty}^x \mathrm{d}\xi\,
\bK^{dn}(x,\xi)e^{i\zl\xi\zs_3},
\end{equation}
where ${\bH^{dn}}(x)$ is a matrix function satisfying
${\bH^{dn}}(x)=\zs_2\,{\bH^{dn}}(x)^*\,\zs_2$ and ${\bH^{dn}}(x)\to I_2$ as
$x\to-\infty$, and $\int_{-\infty}^x \mathrm{d}\xi\,\|\bK^{dn}(x,\xi)\|$ converges uniformly in
$x\in\R$.
\end{proposition}
We omit the details of the proof because it is analogous to the proof of Proposition \ref{P2}. We only remark that, because of the symmetry relation $\Phi(x,\zl)^*=\zs_2\,\Phi(x,\zl)\,\zs_2$, the auxiliary matrix has the following structure
\begin{subequations}\label{eq:3.13}
\begin{equation}\label{eq:3.13a}
\bK^{dn}(x,y)=\begin{pmatrix}K^{dn}_1(x,y)^*&K^{dn}_2(x,y)\\
-K^{dn}_2(x,y)^*&K^{dn}_1(x,y)\end{pmatrix},
\end{equation}
where $K^{dn}_1(x,y)$ and $K^{dn}_2(x,y)$ are scalar functions. Because of the same symmetry, we also have
\begin{equation}\label{eq:3.13b}
\bH^{dn}(x)=\begin{pmatrix}H^{dn}_1(x)^*&H^{dn}_2(x)\\
-H^{dn}_2(x)^*&H^{dn}_1(x)\end{pmatrix},
\end{equation}
\end{subequations}
where $H^{dn}_1(x)$ and $H^{dn}_2(x)$ are scalar functions.
Finally we can prove the following:
\begin{theorem}\label{theo1}
Suppose that
\begin{itemize}\vspace{-0.3\baselineskip}
\item[1)] $\bm^0(x)=\bm(x)-\be_3$ has its entries in $L^1(\R)$;\vspace{-0.5\baselineskip}
\item[2)] $\bm(x)\cdot\bzs$ has an almost everywhere existing derivative $\bm^\prime(x)\cdot\bzs$ with respect to $x$ which has its entries in $L^1(\R)$ (Assumption \ref{HP1});\vspace{-0.5\baselineskip}
\item[3)] $m_3(x)>-1$ for all $x\in\R$ (Assumption \ref{HP2}).
\end{itemize}\vspace{-0.1\baselineskip}
Then the functions
$e^{-i\zl x}\,\psi(x,\zl)$ and $e^{i\zl x}\,\phi(x,\zl)$,
which are analytic in $\zl\in\C^+$ and continuous in $\zl\in\C^+\cup\R$ (see Proposition \ref{P1}), have a finite limit as $\zl\to\infty$ from within the closure of $\C^+$. Analogously, the functions
$e^{i\zl x}\,\overline{\psi}(x,\zl)$ and $e^{-i\zl x}\,\overline{\phi}(x,\zl)$,
which are analytic in $\zl\in\C^-$ and continuous in $\zl\in\C^-\cup\R$ (see Proposition \ref{P1}), admit a finite limit as $\zl\to\infty$ from within the closure of $\C^-$. Moreover, the coefficient $\tau(\zl)$ has a finite limit when $\zl\to \infty$ from within $\overline{\C}^+$ while $\varrho(\zl)$ may not admit analytical continuation outside the real line and $\varrho(\zl)\to 0$ when $\zl\to\pm\infty$.
\end{theorem}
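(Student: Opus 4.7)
My plan is to feed the triangular representations furnished by Propositions \ref{P2} and \ref{P3} into each of the Faddeev functions, and then invoke Riemann--Lebesgue on the real boundary together with dominated convergence inside the open half-plane to read off the limits as $|\zl|\to\infty$.

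First I would read the entries off of \eqref{eq:3.10}. Multiplying by $e^{-i\zl z\zs_3}$ on the right (or by $e^{i\zl z\zs_3}$, depending on the column) gives, for example,
\begin{equation*}
e^{-i\zl z}\psi^\up(z,\zl)=\bH_{11}(z)+\int_z^\infty d\hz\,\bK_{11}(z,\hz)\,e^{i\zl(\hz-z)},
\end{equation*}
and three analogous expressions for $e^{-i\zl z}\psi^\dn$, $e^{i\zl z}\overline{\psi}^\up$, $e^{i\zl z}\overline{\psi}^\dn$, each of the form ``matrix entry of $\bH(z)$'' plus a Fourier-type integral $\int_z^\infty \bK_{ij}(z,\hz)\,e^{\pm i\zl(\hz-z)}\,d\hz$ with the sign chosen so that the exponential is bounded by $1$ throughout the closed half-plane of analyticity of the relevant Faddeev function. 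Because Proposition \ref{P2} guarantees that $\int_z^\infty\|\bK(z,\hz)\|\,d\hz$ converges uniformly in $z\in\R$, Riemann--Lebesgue on the real axis together with dominated convergence as $\Im\zl\to\infty$ force the integral to vanish as $|\zl|\to\infty$ within the closed half-plane, so the finite limit is precisely the corresponding entry of $\bH(z)$. The four $\phi$-Faddeev functions are handled identically using \eqref{eq:3.12} and Proposition \ref{P3}, with $\bH$ replaced by $\tilde\bH$ and $\bK$ by $\bN$ and the integration range changed to $(-\infty,z]$.

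For $a(\zl)$ I would let $z\to-\infty$ in the representation obtained by substituting \eqref{eq:3.10} and \eqref{eq:3.12} into the transition identity $\Psi=\Phi T$ of \eqref{eq:1.7}; since $\bH,\tilde\bH\to I_2$ at $\pm\infty$, this writes $a(\zl)$ as a $\zl$-independent constant plus integrals of the same Fourier form on a half-line, and the dominated-convergence argument above delivers the finite limit in $\overline{\C^+}$. For $b(\zl)$ the same manipulation produces an expression of the form $\int_{-\infty}^\infty g(\hz)\,e^{-2i\zl\hz}\,d\hz$ with $g\in L^1(\R)$ built from the kernels $\bK$, $\bN$ and the matrices $\bH$, $\tilde\bH$; since $b(\zl)$ needs to be controlled only on the real line, Riemann--Lebesgue yields $b(\zl)\to 0$ as $\zl\to\pm\infty$.

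The main obstacle, in my view, is not the Fourier-analytic step but verifying that the integrands produced by combining the two triangular representations via \eqref{eq:1.7} remain genuinely $L^1$ in the integration variable after the substitution, and ensuring that the decay of $e^{i\zl(\hz-z)}$ is exploited uniformly as $|\zl|\to\infty$ along arbitrary paths in $\overline{\C^+}$ (not only along the real or imaginary axis). The cleanest route is to split any such path into the regime $\Im\zl\ge M$, handled by dominated convergence with the $|\bK(z,\cdot)|$ majorant, and the strip $0\le\Im\zl\le M$, handled by a uniform Riemann--Lebesgue estimate on the $L^1$-family $\{\bK(z,\hz)\,e^{-\Im\zl\,(\hz-z)}\}$. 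Once this uniformity is in hand, the remainder of the argument is a routine interchange of limit and integral.
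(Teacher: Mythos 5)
Your overall route is the one the paper itself signals: the text immediately preceding Proposition \ref{P2} states that the finite limits are obtained from the triangular representations \eqref{eq:3.10} and \eqref{eq:3.12} (the paper then defers the details to \cite{Vargiu, DM5}). Your reading of the entries is correct — e.g. $e^{-i\zl z}\psi^\up(z,\zl)=\bH_{11}(z)+\int_z^\infty d\hz\,\bK_{11}(z,\hz)e^{i\zl(\hz-z)}$ with $\hz-z\ge 0$, so the exponential is bounded by $1$ on $\overline{\C^+}$ — and the splitting of an arbitrary path into $\Im\zl\ge M$ (dominated convergence against the $L^1$ majorant $\|\bK(z,\cdot)\|$) and $0\le\Im\zl\le M$ (uniform Riemann--Lebesgue) is exactly the standard way to make the limit rigorous. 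The key input, uniform convergence of $\int_z^\infty\|\bK(z,\hz)\|\,d\hz$, is supplied by Proposition \ref{P2}, so the eight Faddeev-function limits are fine.

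There is, however, one concrete error in your treatment of $a(\zl)$: you assert that $\bH,\tilde\bH\to I_2$ at \emph{both} ends. Propositions \ref{P2} and \ref{P3} only give $\bH(z)\to I_2$ as $z\to+\infty$ and $\tilde\bH(z)\to I_2$ as $z\to-\infty$. In general $\bH(z)$ does \emph{not} tend to $I_2$ as $z\to-\infty$: since $\bm(z)\cdot\bzs=\bH(z)\zs_3\bH(z)^{-1}$ and $\bm\to\be_3$, the limit of $\bH$ at $-\infty$ is merely a diagonal $SU(2)$ matrix $\mathrm{diag}(e^{i\za/2},e^{-i\za/2})$, which is why the paper records $S(\zl)\to e^{-i\za}I_2$ with a \emph{nontrivial} unimodular constant rather than $a(\infty)=1$. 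Your argument still proves that $a(\zl)$ has a finite limit (which is all the theorem claims), but the constant you would extract is wrong; a cleaner fix that avoids the $z\to-\infty$ limit of $\bH$ altogether is to use $a(\zl)=\phi^{\dn}\psi^{\up}-\phi^{\up}\psi^{\dn}=\bigl[e^{i\zl z}\phi^{\dn}\bigr]\bigl[e^{-i\zl z}\psi^{\up}\bigr]-\bigl[e^{i\zl z}\phi^{\up}\bigr]\bigl[e^{-i\zl z}\psi^{\dn}\bigr]$ (from $T=\Phi^{-1}\Psi$ with $\det\Phi=1$), which exhibits $a$ as a product of Faddeev functions already shown to have finite limits in $\overline{\C^+}$. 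The $b(\zl)\to0$ claim via Riemann--Lebesgue is consistent with Proposition \ref{Preflection}, but you should note that establishing the $L^1$ property of the kernel $g$ is precisely where the uniform-integrability hypotheses of Propositions \ref{P2}--\ref{P3} are consumed, not a routine afterthought.
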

\begin{proof}
We give the proof only for the Faddeev functions $e^{-i\zl x}\psi^\up(x,\zl)$, $e^{-i\zl x}\psi^\dn(x,\zl)$, $e^{i\zl x}\phi^\up(x,\zl)$ and $e^{i\zl x}\phi^\dn(x,\zl)$, because the proof for the other Faddeev functions proceeds in the same way. Under the hypothesis $2)$, the Jost matrices satisfy equations \eqref{eq:2.9} and \eqref{eq:2.11}. Separating the two columns of $\Psi(x,\zl)$ and $\Phi(x,\zl)$ in \eqref{eq:2.9} and \eqref{eq:2.11}, we have
\begin{subequations}\label{eq:2.12}
\begin{align}
\left[D(x)e^{-i\zl x}\psi(x,\zl)\right]^\up&=1-\int_x^\infty \mathrm{d}\xi\,
\left[D^\prime(\xi)e^{-i\zl\xi}\psi(\xi,\zl)\right]^\up,\label{eq:2.12a}\\
\left[D(x)e^{-i\zl x}\psi(x,\zl)\right]^\dn&=-\int_x^\infty \mathrm{d}\xi\,
e^{2i\zl(\xi-x)}
\left[D^\prime(\xi)e^{-i\zl\xi}\psi(\xi,\zl)\right]^\dn,\label{eq:2.12b}
\end{align}
\end{subequations}
as well as
\begin{subequations}\label{eq:2.13}
\begin{align}
\left[D(x)e^{i\zl x}\phi(x,\zl)\right]^\up&=\int_{-\infty}^x \mathrm{d}\xi\,
e^{2i\zl(x-\xi)}
\left[D^\prime(\xi)e^{i\zl\xi}\phi(\xi,\zl)\right]^\up,\label{eq:2.13c}\\
\left[D(x)e^{i\zl x}\phi(x,\zl)\right]^\dn&=1+\int_{-\infty}^x \mathrm{d}\xi\,
\left[D^\prime(\xi)e^{i\zl\xi}\phi(\xi,\zl)\right]^\dn.\label{eq:2.13d}
\end{align}
\end{subequations}
Taking the limit as $x\to-\infty$ and using \eqref{eq:1.7} we get
\begin{subequations}\label{eq:2.14}
\begin{align}
\tau(\zl)&=1-\int_{-\infty}^\infty \mathrm{d}\xi\,
\left[D^\prime(\xi)e^{-i\zl\xi}\psi(\xi,\zl)\right]^\up,\label{eq:2.14a}\\
\varrho(\zl)^*&=-\int_{-\infty}^\infty \mathrm{d}\xi\,e^{2i\zl\xi}
\left[D^\prime(\xi)e^{-i\zl\xi}\psi(\xi,\zl)\right]^\dn,\label{eq:2.14b}
\end{align}
\end{subequations}
where we have used that $D(x)\to I_2$ as $x\to-\infty$. In the same way we derive
\begin{subequations}\label{eq:2.15}
\begin{align}
\varrho(\zl)&=\int_{-\infty}^\infty \mathrm{d}\xi\,e^{-2i\zl\xi}\,
\left[D^\prime(\xi)\,e^{i\zl\xi}\,\phi(\xi,\zl)\right]^\up,\label{eq:2.15c}\\
\tau(\zl)^*&=1+\int_{-\infty}^\infty \mathrm{d}\xi\,
\left[D^\prime(\xi)\,e^{i\zl\xi}\,\phi(\xi,\zl)\right]^\dn.\label{eq:2.15d}
\end{align}
\end{subequations}
From \eqref{eq:1.4a} and \eqref{eq:1.4c}, via \eqref{eq:3.10}, it is immediate to see that
$$
\tau(\zl)\to1-\int_{-\infty}^\infty \mathrm{d}\xi\,\begin{pmatrix}1&0\end{pmatrix}\,
D^\prime(\xi)\,\bH^{up}(\xi)\,\begin{pmatrix}1\\0\end{pmatrix}\,,\,\,\,
\mbox{as}\,\zl\to\infty\ \text{from}\ \text{within}\ \overline{\C^+}\,,
$$
and $\varrho(\zl)\to0$ as $\zl\to\pm\infty$. Note that the limit of $\tau(\zl)$ as $|\zl|\to\infty$ is a complex number of modulus 1.
\end{proof}

\subsection{Scattering data}\label{sub:b}
In this subsection, for the sake of completeness, we introduce the scattering matrix and the scattering coefficients.

From now on, we assume that the coefficient $\tau(\zl)$ introduced in the preceding section is such that $\tau(\zl)\neq0$ for all $\zl\in\R$, \textit{i.e.} there are no spectral singularities. We can write the identity \eqref{eq:1.7} as the following Riemann-Hilbert problems:
\begin{subequations}\label{eq:4.1}
\begin{align}
\begin{pmatrix}\overline{\phi}(x,\zl)&\overline{\psi}(x,\zl)\end{pmatrix}
&=\begin{pmatrix}\psi(x,\zl)&\phi(x,\zl)\end{pmatrix}
\begin{pmatrix}\tfrac{1}{\tau(\zl)}&-\tfrac{\varrho(\zl)}{\tau(\zl)}\\&\\
-\tfrac{\varrho(\zl)^*}{\tau(\zl)}&\tfrac{1}{\tau(\zl)}\end{pmatrix},\label{4.1a}\\
&\nonumber\\
\begin{pmatrix}\psi(x,\zl)&\phi(x,\zl)\end{pmatrix}
&=\begin{pmatrix}\overline{\phi}(x,\zl)&\overline{\psi}(x,\zl)\end{pmatrix}
\begin{pmatrix}\tfrac{1}{\tau(\zl)^*}&\tfrac{\varrho(\zl)}{\tau(\zl)^*}\\&\\
\tfrac{\varrho(\zl)^*}{\tau(\zl)^*}&\tfrac{1}{\tau(\zl)^*}\end{pmatrix}.\label{4.1b}
\end{align}
\end{subequations}
Putting $\bF_-(x,\zl)=\begin{pmatrix}\overline{\phi}(x,\zl) &\overline{\psi}(x,\zl)\end{pmatrix}$ and $\bF_+(x,\zl)=\begin{pmatrix} \psi(x,\zl)&\phi(x,\zl)\end{pmatrix}$, we obtain the {\it Riemann-Hilbert} problem
\begin{equation}\label{eq:4.2}
\bF_-(x,\zl)=\bF_+(x,\zl)\,\zs_3\,S(\zl)\,\zs_3\,,
\end{equation}
where the {\it scattering matrix} $S(\zl)$ is
$$
S(\zl)=\begin{pmatrix}T(\zl)&R(\zl)\\L(\zl)&T(\zl)\end{pmatrix}\,.
$$
In other words,
\begin{equation}\label{eq:4.3}
T(\zl)=\frac{1}{\tau(\zl)}\,,\qquad R(\zl)=\frac{\varrho(\zl)}{\tau(\zl)}\,,\qquad
L(\zl)=\frac{\varrho(\zl)^*}{\tau(\zl)}\,.
\end{equation}
We call $T$, $R$, and $L$ the transmission coefficient, the reflection coefficient from the right, and the reflection coefficient from the left, respectively. Equations \eqref{eq:4.1} then imply that
$$
S(\zl)^\dagger=\zs_3\,S(\zl)^{-1}\,\zs_3\,,\qquad\zl\in\R\,.
$$
Thus $S(\zl)$ is $\zs_3$-unitary and (recalling that $|\tau|^{2}+|\varrho|^{2}=1$) has determinant $\tau(\zl)^*/\tau(\zl)$. Also, $S(\zl)\to e^{-i\za}\,I_2$ as $\zl\to\pm\infty$ for a suitable complex number $e^{-i\za}$ of modulus $1$. We easily derive the Fourier representations
\begin{subequations}\label{eq:4.4}
\begin{align}
\bF_+(x,\zl)e^{-i\zl x\zs_3}&=\begin{pmatrix}H^{up}_1(x)&H^{dn}_{2}(x)\\H^{up}_2(x)&H^{dn}_{1}(x)\end{pmatrix}\nonumber\\
&\quad+\int_0^\infty \mathrm{d}\xi\,e^{i\zl\,\xi}\, \begin{pmatrix}K^{up}_1(x,x+\xi)&K^{dn}_2(x,x-\xi)\\K^{up}_2(x,x+\xi)&K^{dn}_1(x,x-\xi)\end{pmatrix}\,,\label{eq:4.4a}\\
&\nonumber\\
\bF_-(x,\zl)e^{-i\zl x\zs_3}&=\begin{pmatrix}H^{dn}_{1}(x)^*&-H^{up}_2(x)^*\\-H^{dn}_{2}(x)^*&H^{up}_1(x)^*\end{pmatrix}\nonumber\\
&\quad+\int_0^\infty \mathrm{d}\xi\,e^{-i\zl\,\xi}\, \begin{pmatrix}K^{dn}_1(x,x-\xi)^*&-K^{up}_2(x,x+\xi)^*\\-K^{dn}_2(x,x-\xi)^*&K^{up}_1(x,x+\xi)^*\end{pmatrix}\,,\label{eq:4.4b}
\end{align}
\end{subequations}
where
$$\int_0^\infty \mathrm{d}\xi\Big[|K^{up}_1(x,x+\xi)|+|K^{up}_2(x,x+\xi)|+|K^{dn}_1(x,x-\xi)|+|K^{dn}_2(x,x-\xi)|
\Big]$$
converges uniformly in $x\in\R$.
\bigskip

The scattering data associated with the first of equation \eqref{eq:pair} are:
\begin{itemize}\vspace{-0.3\baselineskip}
\item[1.] one of the reflection coefficients;\vspace{-0.5\baselineskip}
\item[2.] the poles of the transmission coefficient $T(\zl)$ (or of $T(\zl^*)^*$); we call such poles {\it the discrete eigenvalues} in the upper half-plane $\C^+$ (or in the lower half-plane $\C^-$) and denote them by $ia_j$ (or by $-ia_j^*$) for $j=1,\ldots,n$, with $\mathrm{Re}(a_{j})>0$;\vspace{-0.5\baselineskip}
\item[3.] a set of constants $c_j$ ($\overline{c}_j$) for $j=1,\ldots,n$ associated to the discrete eigenvalues $ia_j$ ($-ia_j^*$) $j=1,\ldots, n$ in the upper half-plane (lower half-plane); these constants are called the {\it norming constants}.\vspace{-0.1\baselineskip}
\end{itemize}
It is well-known that if there are no spectral singularities, then the number of discrete eigenvalues is finite \cite{FT}. It is crucial to observe that, in general, the poles of the transmission coefficient $T(\zl)$ are not necessarily simple and may have multiplicity larger than one. However, for the sake of simplicity, unless explicitly indicated differently, here and thereafter in Section \ref{sec:2} we assume that each pole of the transmission coefficient has multiplicity equal to one, as this is not restrictive when proving the symmetry of the norming constants (see Proposition \ref{P4}). The same relations can be established when the multiplicity is greater than one by following the procedure illustrated in \cite{D}. The way to construct the norming constants is standard (see \cite{AblSeg, CdG, FT}).

Let us assume that there are finitely many poles $ia_1,\ldots,ia_n$ of the transmission coefficient $T(\zl)$ in the upper half-plane $\C^+$, all of which are assumed to be simple. Following \cite{AblSeg, CdG, FT}, we let $\theta_j$ stand for the residue of
$T(\zl)$ at $\zl=ia_j$, \textit{i.e.}
\begin{align}\label{eq:res}
\theta_j=\underset{\zl=ia_j}{\mathrm{Res}}\left(T(\zl)\right)&=\lim_{\zl\to ia_j}\,(\zl-ia_j)\,T(\zl)\nonumber\\
&=\lim_{\zl\to ia_j}\,
\frac{\zl-ia_j}{\tau(\zl)-\tau(ia_j)}=\left(\left.\frac{\mathrm{d}\tau}{\mathrm{d}\zl}\right|_{\zl=ia_j}\right)^{-1}\,.
\end{align}
We then introduce the {\it norming constants} $c_j$ such that
\begin{subequations}\label{eq:4.5}
\begin{equation}\label{eq:4.5a}
\theta_j\,\phi(x,ia_j)=i\,c_j\,\psi(x,ia_j),\qquad j=1,2,\ldots,n.
\end{equation}
By the same token, $T(\zl^*)^*$ has the simple poles $-ia_1^*,\ldots,-ia_n^*$ in $\C^-$, all of them simple. The corresponding norming constants
$\overline{c}_j$ are defined by
\begin{equation}\label{eq:4.5b}
\theta_j^*\,\overline{\phi}(x,-ia_j^*)=-i\,\overline{c}_j\,\overline{\psi}(x,-ia_j^*),
\qquad j=1,2,\ldots,n.
\end{equation}
\end{subequations}
The next proposition shows how the norming constants introduced in the upper half-plane are related to those defined in the lower half-plane.
\begin{proposition}\label{P4}
The norming constants satisfy the following relations:
$$
\overline{c}_j=-(c_j)^*\,.
$$
\end{proposition}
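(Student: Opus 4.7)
The plan is to derive $\overline{N}_s=-(N_s)^*$ from the involution symmetry $\zl\mapsto\zl^*$ of the Jost matrices, which is the off-real-line extension of the $SU(2)$ relations \eqref{eq:1.6}.

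The first step is to establish
\begin{equation*}
\Psi(z,\zl)=\zs_2\,\Psi(z,\zl^*)^*\,\zs_2,\qquad\Phi(z,\zl)=\zs_2\,\Phi(z,\zl^*)^*\,\zs_2,
\end{equation*}
valid wherever the relevant Jost matrix is defined. The algebraic input is the Pauli identity $\zs_2\zs_j\zs_2=-\zs_j^*$ for $j=1,2,3$, which gives $\zs_2(\bm\cdot\bzs)^*\zs_2=-(\bm\cdot\bzs)$. Setting $\tilde{\Psi}(z,\zl):=\zs_2\Psi(z,\zl^*)^*\zs_2$, a short calculation on \eqref{eq:1.5a} shows that $\tilde{\Psi}_z=i\zl(\bm\cdot\bzs)\tilde{\Psi}$, and since $\zs_2 e^{-i\zl z\zs_3}\zs_2=e^{i\zl z\zs_3}$ one has $\tilde{\Psi}(z,\zl)\sim e^{i\zl z\zs_3}$ as $z\to+\infty$. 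Uniqueness for the Volterra equation \eqref{eq:2.1a} then forces $\tilde{\Psi}\equiv\Psi$; the same argument applied to \eqref{eq:2.1b} handles $\Phi$. Reading the matrix identities column by column, and using the identity $(c_1\ c_2)\zs_2=(ic_2\ -ic_1)$, one obtains the scalar symmetries
\begin{equation*}
\psi(z,\zl)=i\zs_2\,\overline{\psi}(z,\zl^*)^*,\qquad\phi(z,\zl)=-i\zs_2\,\overline{\phi}(z,\zl^*)^*.
\end{equation*}

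Next, I would take the complex conjugate of the defining identity \eqref{eq:4.5a}, obtaining $\tau_s^*\,\phi(z,ia_s)^*=-iN_s^*\,\psi(z,ia_s)^*$. Evaluating the two column symmetries at $\zl=ia_s$, so that $\zl^*=-ia_s^*$, and then conjugating them, one must carry the factor $\zs_2^*=-\zs_2$ through; this produces $\phi(z,ia_s)^*=-i\zs_2\,\overline{\phi}(z,-ia_s^*)$ and $\psi(z,ia_s)^*=i\zs_2\,\overline{\psi}(z,-ia_s^*)$. Substituting into the conjugated identity and cancelling the common left factor $\zs_2$ yields
\begin{equation*}
\tau_s^*\,\overline{\phi}(z,-ia_s^*)=iN_s^*\,\overline{\psi}(z,-ia_s^*),
\end{equation*}
which, compared directly with the definition \eqref{eq:4.5b}, gives $-i\overline{N}_s=iN_s^*$, hence $\overline{N}_s=-(N_s)^*$.

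The main obstacle is the bookkeeping of signs: because $\zs_2^*=-\zs_2$, sign flips appear at every stage when the involution is pushed through both $\zl\mapsto\zl^*$ and complex conjugation, and it is easy to lose one. A subsidiary point worth checking is that $\tau_s^*$ really is the residue of $T(\zl^*)^*$ at $\zl=-ia_s^*$, which follows by Schwarz reflection applied to $a(\zl)$ together with the $SU(2)$ structure of the transition matrix on $\R$; once this is in place, the argument above is a short computation.
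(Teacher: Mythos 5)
Your argument is correct, and every sign checks out: $\zs_2(\bm\cdot\bzs)^*\zs_2=-(\bm\cdot\bzs)$ does give $\tilde{\Psi}_z=i\zl(\bm\cdot\bzs)\tilde{\Psi}$, the column identities $\psi(z,\zl)=i\zs_2\overline{\psi}(z,\zl^*)^*$ and $\phi(z,\zl)=-i\zs_2\overline{\phi}(z,\zl^*)^*$ follow from $(c_1\ c_2)\zs_2=(ic_2\ -ic_1)$, and conjugating \eqref{eq:4.5a} and comparing with \eqref{eq:4.5b} yields $-i\overline{N}_s=iN_s^*$. The route, however, is not the one the paper takes. The paper substitutes the triangular (Fourier) representations \eqref{eq:3.10} and \eqref{eq:3.12} into \eqref{eq:4.5a} and \eqref{eq:4.5b}, conjugates the first resulting identity, premultiplies by $\left(\begin{smallmatrix}0&1\\-1&0\end{smallmatrix}\right)=i\zs_2$, and recognizes the second identity by appealing to the built-in symmetries $\bH=\zs_2\bH^*\zs_2$ and \eqref{eq:3.13} of the kernels $\bK$ and $\bN$; so the same $\zs_2$-involution is at work, but it is imported from Propositions \ref{P2} and \ref{P3} rather than proved. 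Your version establishes the involution $\Psi(z,\zl)=\zs_2\Psi(z,\zl^*)^*\zs_2$ directly from the differential system and Volterra uniqueness, which makes the proof self-contained (it needs only Proposition \ref{P1}, not the triangular representations, whose proofs the paper defers to the references); the price is that you must be slightly careful about where the matrix identity actually holds: the full Jost matrices are defined only for $\zl\in\R$, so the Volterra-uniqueness step gives the identity on the real line, and the columnwise relations are then extended to $\zl=ia_s\in\C^+$ by uniqueness of analytic continuation of each Faddeev function into its own half-plane. That is a one-line remark worth adding, not a gap. Your closing observation that $\tau_s^*$ is the residue of $T(\zl^*)^*$ at $-ia_s^*$ is likewise consistent with how the paper sets up \eqref{eq:4.5b}.
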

\begin{proof}
By applying the triangular representations to \eqref{eq:4.5a} and \eqref{eq:4.5b} we get the pair of equalities
\begin{small}
\begin{align*}
&\theta_j\left\{\begin{pmatrix}H^{dn}_{2}(x)\\ H^{dn}_{1}(x)\end{pmatrix}
+\int_0^\infty \mathrm{d}\xi\,e^{-a_j\xi}\,
\begin{pmatrix}K^{dn}_2(x,x-\xi)\\K^{dn}_1(x,x-\xi)\end{pmatrix}\right\}\\
&\hspace{2.5cm}=\,ic_j\left\{\begin{pmatrix}H^{up}_1(x)\\H^{up}_2(x)\end{pmatrix}
+\int_0^\infty \mathrm{d}\xi\,
\begin{pmatrix}K^{up}_1(x,x+\xi)\\K^{up}_2(x,x+\xi)\end{pmatrix}\right\}\,,\\
&\\
&\theta_j^*\left\{\begin{pmatrix}H^{dn}_{1}(x)^*\\-H^{dn}_{2}(x)^*\end{pmatrix}
+\int_0^\infty \mathrm{d}\xi\,e^{-a_j^*\xi}\,
\begin{pmatrix}K^{dn}_1(x,x-\xi)^*\\-K^{dn}_2(x,x-\xi)^*\end{pmatrix}\right\}\\
&\hspace{2.5cm}=\,-i\overline{c}_j\left\{\begin{pmatrix}
-H^{up}_2(x)^*\\H^{up}_1(x)^*\end{pmatrix}
+\int_0^\infty \mathrm{d}\xi\,e^{-a_j^*\xi}\,
\begin{pmatrix}-K^{up}_2(x,x+\xi)^*\\ K^{up}_1(x,x+\xi)^*\end{pmatrix}\right\}\,.
\end{align*}
\end{small}
Taking the complex conjugate of the first equation and premultiplying the result by $\left(\begin{smallmatrix}0&1\\-1&0\end{smallmatrix}\right)$, we obtain the second equation, provided $\overline{c}_j=-(c_j)^*$.
\end{proof}

\subsection{Marchenko equations}\label{sub:c}
In this subsection we formulate the Marchenko integral equations and establish the connection between the solutions of these equations and the solution of the initial value problem \eqref{eq:initial}. We refer the reader to the Appendix \ref{sec:A} for the details on the derivation of \eqref{eq:4.11}.

In order to derive the Marchenko equations we need to the following
\begin{proposition}\label{Preflection}
Suppose that Assumptions \ref{HP1} and \ref{HP2} hold, and suppose that there are no spectral singularities. Then there exist functions $\hat{R}$ and $\hat{L}$ in $L^1(\R)$ such that
\begin{equation}\label{eq:reflections}
R(\zl)=\int_{-\infty}^{\infty}\mathrm{d}\xi\,e^{-i\zl\,\xi}\,\hat{R}(\xi)\,,\quad
L(\zl)=\int_{-\infty}^{\infty}\mathrm{d}\xi\,e^{i\zl\,\xi}\,\hat{L}(\xi)\,.
\end{equation}
\end{proposition}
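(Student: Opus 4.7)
The plan is to exhibit both $R(\zl)=b(\zl)/a(\zl)$ and $L(\zl)=b(\zl)^*/a(\zl)$ as products within the Wiener algebra $\C + \CF(L^1(\R))$ of functions that are constants plus the Fourier transform of an $L^1$ function, and then to observe that the constant parts vanish thanks to Theorem~\ref{theo1}.

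I would first handle $a(\zl)$. Using the $z$-independent Wronskian identity $a(\zl) = \psi^\up(z,\zl)\phi^\dn(z,\zl) - \psi^\dn(z,\zl)\phi^\up(z,\zl)$ together with the Fourier representations \eqref{eq:4.4a} — in which the phase factors $e^{i\zl z}$ from $\psi$ and $e^{-i\zl z}$ from $\phi$ cancel — and then letting $z\to+\infty$ so that $\bH(z)\to I_2$ and the kernels $\bK(z,\cdot)$ vanish in $L^1$, I would display $a(\zl) = a_\infty + \int_0^\infty e^{i\zl s}\alpha(s)\,ds$ with $\alpha\in L^1(\R_+)$. The constant $a_\infty=\lim_{|\zl|\to\infty}a(\zl)$ exists by Theorem~\ref{theo1} and is nonzero, since $S(\zl)\to e^{-i\za}I_2$ at infinity for some $e^{-i\za}$ of modulus one. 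Together with the no-spectral-singularities assumption $a(\zl)\neq 0$ for $\zl\in\R$, Wiener's theorem for the algebra $\C+\CF(L^1(\R))$ then yields $1/a(\zl) = 1/a_\infty + \int_\R e^{-i\zl w}\eta(w)\,dw$ with $\eta\in L^1(\R)$.

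Next I would produce a Fourier representation of $b(\zl)$ itself. Starting from the integral formula \eqref{eq:2.6b}, the idea is to substitute the Fourier representations \eqref{eq:4.4b} for $e^{i\zl\hz}\overline{\psi}^\up(\hz,\zl)$ and $e^{i\zl\hz}\overline{\psi}^\dn(\hz,\zl)$, interchange the orders of integration, and perform the change of variables $w=2\hz+s$ so that the resulting kernel depends on a single Fourier variable $w\in\R$. The prefactor $i\zl$ is then absorbed by one integration by parts in $\hz$, and hypothesis (a) — which guarantees that $\bm'\cdot\bzs$ has entries in $L^1$ — is precisely what is needed to ensure that the resulting density lies in $L^1(\R)$. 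In this way $b(\zl)=\int_\R e^{-i\zl w}\beta(w)\,dw$ with $\beta\in L^1(\R)$, and by complex conjugation on $\R$, $b(\zl)^*=\int_\R e^{i\zl w}\beta(w)^*\,dw$.

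Finally, the reflection coefficients are the products $R=b\cdot(1/a)$ and $L=b^*\cdot(1/a)$ within the Wiener algebra; since $b(\zl),b(\zl)^*\to 0$ while $1/a(\zl)\to 1/a_\infty$ at infinity, the products have vanishing constant parts, giving exactly the claimed representations. The main obstacle is the Fourier representation of $b(\zl)$: the Volterra formula \eqref{eq:2.6b} carries a prefactor $i\zl$ that prevents a naive reading as a Fourier integral, and removing it cleanly requires the integration by parts above, which is exactly where the regularity hypothesis (a) on the derivative of $\bm\cdot\bzs$ is indispensable.
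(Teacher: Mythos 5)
The paper itself does not contain a proof of Proposition~\ref{Preflection} (it defers to the references), so there is no in-text argument to compare against; judged on its own terms, your overall architecture is the standard and correct one: exhibit $a(\zl)$ and $b(\zl)$ in the Wiener algebra $\C+\CF(L^1(\R))$, invert $a$ by Wiener's theorem using $a_\infty\neq 0$ and the absence of spectral singularities, and multiply, noting that the constant part of the product vanishes because $b(\zl)\to 0$. The treatment of $a(\zl)$ via the Wronskian and the representations \eqref{eq:4.4a} is sound.

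The gap sits exactly where you locate the main obstacle: the assertion that the prefactor $i\zl$ in \eqref{eq:2.6b} ``is absorbed by one integration by parts in $\hz$.'' Writing $i\zl e^{-2i\zl\hz}=-\tfrac{1}{2}\partial_{\hz}e^{-2i\zl\hz}$ and integrating by parts moves the derivative onto $\bigl[(\bm\cdot\bzs-\zs_3)\,e^{i\zl\hz}\overline{\psi}(\hz,\zl)\bigr]^{\up}$; but $\partial_{\hz}\overline{\psi}=i\zl(\bm\cdot\bzs)\overline{\psi}$, so differentiating the Jost function \emph{reintroduces} a factor $i\zl$: one picks up the term $i\zl\bigl[(\bm\cdot\bzs-\zs_3)(I_2+\bm\cdot\bzs)e^{i\zl\hz}\overline{\psi}\bigr]^{\up}$, and hypothesis (a) alone does nothing to control it. The step is rescued only by the algebraic identity $(\bm\cdot\bzs-\zs_3)(I_2+\bm\cdot\bzs)=(I_2-\zs_3)(I_2+\bm\cdot\bzs)$ (a consequence of $(\bm\cdot\bzs)^2=I_2$), whose first row vanishes identically, so that the offending term is zero; without verifying this cancellation the integration by parts fails to remove $\zl$. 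A cleaner repair, using machinery the paper builds for precisely this purpose, is to start from the $D$-modified Volterra equation \eqref{eq:2.9}: letting $z\to-\infty$ there gives
\begin{equation*}
T(\zl)=I_2-\int_{-\infty}^{\infty}d\hz\,e^{-i\zl\hz\zs_3}D^{\prime}(\hz)\Psi(\hz,\zl),
\end{equation*}
whose $(1,2)$ entry expresses $b(\zl)$ with no $\zl$ prefactor at all; inserting the triangular representation \eqref{eq:3.10} and using $D^{\prime}\in L^1(\R)$ (which is hypothesis (a)) then yields $b\in\CF(L^1(\R))$ directly by Fubini and the change of variables you describe.
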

\begin{proof}
Let $\mathfrak{B} =\{F(\zl)=c+\int_{-\infty}^{\infty}\mathrm{d}\xi\,e^{i\zl\xi}\,f(\xi)\,:c\in\C,\,\,f\in L^1(\R)\}$ denote the complex Banach algebra with norm $\|F\|=|c|+{\|f\|}_{1}\,$. By the Gelfand theory (see Chapter $11$ in \cite{RudinG}), the invertible elements of $\mathfrak{B}$ are those $F(\zl)$ for which $c\neq 0$ and $c+\int_{-\infty}^{\infty}\mathrm{d}\xi\,e^{i\zl\xi}\,f(\xi)\neq 0$ for all $\zl\in\R$. In fact, in this case
$$F^{-1}(\zl)=\dfrac{1}{c}+\int_{-\infty}^{\infty}\mathrm{d}\xi\,e^{i\zl\xi}\,g(\xi)$$
for a suitable $g\in L^1(\R)$. Using \eqref{eq:2.15} and \eqref{eq:3.12} we have
\begin{footnotesize}
\begin{align*}
\varrho(\zl)&=\int_{-\infty}^\infty \mathrm{d}\xi\,e^{-2i\zl\xi}
\left[D^\prime(\xi)\,\left\{\bH^{dn}(\xi)e^{-i\zl \xi \sigma_3}+\int_{-\infty}^{\xi}\mathrm{d}\zeta\bK^{dn}(\xi,\zeta)e^{-i\zl \zeta\sigma_3}\right\}\begin{pmatrix}1\\0\end{pmatrix}\right]^{up}\,,\\
\tau(\zl)&=1+\int_{-\infty}^\infty \mathrm{d}\xi\,
\left[D^\prime(\xi)\,\left\{\bH^{dn}(\xi)e^{-i\zl \xi \sigma_3}+\int_{-\infty}^{\xi}\mathrm{d}\zeta\bK^{dn}(\xi,\zeta)e^{-i\zl \zeta\sigma_3}\right\}\begin{pmatrix}1\\0\end{pmatrix}\right]^{dn}\,,
\end{align*}
\end{footnotesize}
\noindent where $\int_{-\infty}^{x}\mathrm{d}\xi \|\bK^{dn}(x,\xi)\|$ converges uniformly in $x\in\R$, $\bH^{dn}(x)\in SU(2)$, and the entries of $D^\prime(x)$ belong to $L^1(\R)$. Hence, $\varrho(\zl)$ and $\tau(\zl)$ are in $\mathfrak{B}$. Since there are no spectral singularities, the transition coefficient $\tau(\zl)$ is an invertible element of $\mathfrak{B}$. The algebra properties of $\mathfrak{B}$ then imply that $R(\zl)=\tfrac{\varrho(\zl)}{\tau(\zl)}$ belongs to $\mathfrak{B}$. Analogously, we can prove that $\varrho(\zl)^*$ is in $\mathfrak{B}$ and then also $L(\zl)=\tfrac{\varrho(\zl)^*}{\tau(\zl)}$ belongs to $\mathfrak{B}$.
\end{proof}

We have the following
\begin{theorem}\label{theo:Marchenko}
The auxiliary function $\bK^{up}(x,y)$ which appears in \eqref{eq:3.10} has to satisfy the following integral Marchenko equations.
\begin{equation}\label{eq:4.10}
\bK^{up}(x,y)+\bH^{up}(x)\,\bO(x+y)+\int_x^\infty \mathrm{d}\xi\,\bK^{up}(x,\xi)\,\bO(\xi+y)=0_{2\times2},
\end{equation}
where
\begin{align}\label{eq:kernel}
\bO(x)=\begin{pmatrix}0&\Omega(x)\\-\Omega(x)^*&0\end{pmatrix},
\quad\text{with}\quad
\Omega(x)=\hat{R}(x)+\sum_{j=1}^{n}\,c_j\,e^{-a_jx}\,,
\end{align}
and $\hat{R}(x)$ is the Fourier transform of the reflection coefficient (see \eqref{eq:reflections}).
\end{theorem}
We give the proof in Appendix \ref{sec:A}.
The analogous Marchenko equations satisfied by the auxiliary function $\bK^{dn}(x,y)$, which appear in \eqref{eq:3.12}, are given in Appendix \ref{sec:A} (see equation \eqref{eq:4.10dn}). The generalization of formula \eqref{eq:kernel} to the case of poles with multiplicity larger than one is given in Section \ref{sec:3} (see formula \eqref{eq:kernel1}).

Recall that $\bH^{up}(x) \in SU(2)$. By setting
\begin{equation}\label{eq:L}
\bK^{up}(x,y)=\bH^{up}(x)\,\bL(x,y)\,,
\end{equation}
we can convert \eqref{eq:4.10} into the (``usual'') Marchenko integral equation:
\begin{equation}\label{eq:4.11}
\bL(x,y)+\bO(x+y)+\int_x^\infty \mathrm{d}\xi\,\bL(x,\xi)\,\bO(\xi+y)=0_{2\times2}.
\end{equation}
By following the same proof as in the focusing AKNS case \cite{D, CORBOOK}, we find that equation \eqref{eq:4.11} is uniquely solvable on the space $L^1(x,+\infty)^{2\times2}$. Before proceeding further, we observe here that, by setting $\bK^{dn}(x,y)=\bH^{dn}(x)\,\overline{\bL}(x,y)$ (see \eqref{eq:Ldn}), an analogous Marchenko integral equation can be obtained for $\overline{\bL}(x,y)$ (see \eqref{eq:4.11dn}), as illustrated at the end of Appendix \ref{sec:A}.

For later convenience let us introduce the following notations
\begin{equation}\label{eq:notazione}
\tilde{\bK}(x)=\int_x^\infty \mathrm{d}\xi\,\bK^{up}(x,\xi)\,,\quad \tilde{\bL}(x)=\int_x^\infty \mathrm{d}\xi\,\bL(x,\xi)\,,
\end{equation}
where $\bK^{up}(x,y)$ and $\bL(x,y)$ satisfy the Marchenko integral equations \eqref{eq:4.10} and \eqref{eq:4.11}, respectively. Using \eqref{eq:2.1a} and the asymptotic relation \eqref{eq:1.4a}, we get from the triangular representation \eqref{eq:3.10}
\begin{subequations}\label{eq:3.17}
\begin{equation}\label{eq:3.17a}
I_2=\Psi(x,0)=\bH^{up}(x)+\tilde{\bK}(x)=\bH^{up}(x)\,\left[I_2+\tilde{\bL}(x)\right]\,,
\end{equation}
where
\begin{equation}\label{eq:3.17b}
\tilde{\bL}(x)={\bH^{up}(x)}^{-1}\,\,\tilde{\bK}(x)\,.
\end{equation}
\end{subequations}
Moreover, using (\ref{eq:3.11}) and (\ref{eq:3.17}), we observe that
\begin{equation}\label{eq:Linverse}
\left[I_2+\tilde{\bL}(x)\right]^{-1}=\left[I_2+\tilde{\bL}(x)\right]^{\dagger}=\left[I_2+{\tilde{\bL}^{\dagger}(x)}\right]\,,
\end{equation}
thus the structure of $\tilde{\bL}$ is
\begin{equation}\label{eq:Lstructure}
\tilde{\bL}(x)=\begin{pmatrix}{{\tilde{L}}_1(x)}&{{-\tilde{L}_2}(x)}^*\\{{\tilde{L}}_2(x)}&{{\tilde{L}_1}(x)}^*\end{pmatrix}\,.
\end{equation}

The relation between the Marchenko integral equation and the solution of equation \eqref{eq:HF1} is immediately clarified by the following
\begin{proposition}\label{P5}
The solutions of the initial value problem \eqref{eq:initial} are expressed in terms of the solutions of the Marchenko equations as:
\begin{equation}\label{eq:3.19}
\bm(x)\cdot\bzs =\bH^{up}(x)\,\,\zs_3\,\,{\bH^{up}(x)}^{-1}
=\left[I_2+{\tilde{\bL}(x)}^{\dagger}\right]\, \zs_3\, \left[I_2+\tilde{\bL}(x)\right]\,.
\end{equation}
\end{proposition}
\begin{proof}
Suppose first that in addition to the Assumptions \ref{HP1} and \ref{HP2}, we also have that $\bm^{\prime\prime}(x)$ exists almost everywhere and has its entries in $L^1(\R)$. Then, it is easy to verify that
\begin{equation}\label{eq:3.18}
\int_x^\infty \mathrm{d}\xi\left\{\left\|{\frac{\partial\bK}{\partial x}}^{up}(x,\xi)\right\|
+\left\|{\frac{\partial\bK}{\partial\xi}}^{up}(x,\xi)\right\|\right\}<+\infty,
\end{equation}
uniformly in $x\ge x_0$ for all $x_0\in\R$. Applying integration by parts to \eqref{eq:3.10} to remove factors $\zl$ in front of Fourier integral terms and using \eqref{eq:1.5a}, we get
\begin{align*}
0_{2\times2}&=\frac{\partial\Psi}{\partial x}(x,\zl)-i\zl(\bm(x)\cdot\bzs)
\Psi(x,\zl)\nonumber\\
&=i\zl\Big\{\bH^{up}(x)\zs_3-(\bm(x)\cdot\bzs)\bH^{up}(x)\Big\}\,e^{i\zl x\zs_3}
\nonumber\\
&\quad+\left\{{\frac{\mathrm{d}\bH}{\mathrm{d}x}}^{up}(x)-\bK^{up}(x,x)+(\bm(x)\cdot\bzs)\bK^{up}(x,x)\zs_3\right\}
\,e^{i\zl x\zs_3}\nonumber\\
&\quad+\int_x^\infty \mathrm{d}\xi\left\{{\frac{\partial\bK}{\partial x}}^{up}(x,\xi)
+(\bm(x)\cdot\bzs){\frac{\partial\bK}{\partial\xi}}^{up}(x,\xi)\zs_3\right\}
\,e^{i\zl x\zs_3}\,,
\end{align*}
where the integral vanishes as $\zl\to\pm\infty$. Dividing the above expression by $i\zl e^{i\zl x\sigma_3}$, taking the limit as $\zl\to\pm\infty$, and using \eqref{eq:Linverse}, we arrive at equation \eqref{eq:3.19}. In order to extend this proposition to $\bm(x)$ satisfying solely Assumptions \ref{HP1} and \ref{HP2}, it suffices to construct a sequence $\left\{\bm_n(x)\right\}$, whose elements $\bm_n(x)$ satisfy the additional assumption, namely $\bm_n^{\prime\prime}(x)$ exists almost everywhere and has its entries in $L^1(\R)$ for all $n$. For all $\bm_n(x)$, we also define the corresponding matrix function $\bH_n^{up}(x)$ by means of the same construction as in Proposition \ref{P2}. By requiring that $\|\bm(x)-\bm_n(x)\|_1+\|\bm^{\prime}(x)-\bm^{\prime}_n(x)\|_1$ vanishes as $n$ approaches $\infty$, we obtain that $\bm_n(x)$ and $\bH_n^{up}(x)$ converge pointwise to $\bm(x)$ and $\bH^{up}(x)$, respectively. This concludes the proof.
\end{proof}
We conclude this section by observing that the equation analogous to \eqref{eq:3.19} (see equation \eqref{eq:3.19dn}), relating the solution of the initial value problem \eqref{eq:HF1} to the solution of the Marchenko integral equation for $\overline{\bL}(x,y)$ (see \eqref{eq:4.11dn}), is given in Appendix \ref{sec:A}.

\subsection{Time evolution of the scattering data}\label{sub:d}
In this subsection we derive the time evolution of the scattering data. In doing so we correct typos resulting in sign errors in the time factors $e^{\pm 4 i\lambda^2 t}$ in \cite{T, ZT}. We shall arrive at the same time evolution as for the NLS equation.

Recall the Lax pair $(\bA, \bB)$ is given by \eqref{eq:pair}. Suppose that $V(x,t;\zl)$ is a nonsingular $2\times2$ matrix function satisfying
$$
V_x=\bA\,V,\qquad V_t=\bB\,V\,,
$$
where $V$ need not be one of the Jost matrices. Then there exist two invertible matrices $U_\Psi$ and $U_\Phi$, depending on $(t,\zl)$ but not on $x$, such that $\Psi=V\,U_\Psi^{-1}$ and $\Phi=V\,U_\Phi^{-1}$. Then
\begin{align*}
\Psi_t=V_t\,U_\Psi^{-1}-V\,U_\Psi^{-1}\,{[U_\Psi]}_t\,U_\Psi^{-1}&=\bB\,V\,U_\Psi^{-1}
-V\,U_\Psi^{-1}\,{[U_\Psi]}_t\,U_\Psi^{-1}\nonumber\\
&=\bB\,\Psi-\Psi\,{[U_\Psi]}_t\,U_\Psi^{-1},
\end{align*}
implying
\begin{subequations}\label{eq:6.1}
\begin{equation}\label{eq:6.1a}
{[U_\Psi]}_t\,U_\Psi^{-1}=\Psi^{-1}\,\bB\,\Psi-\Psi^{-1}\,\Psi_t.
\end{equation}
Analogously, for the other Jost matrix $\Phi(x,\zl)$ we get
\begin{equation}\label{eq:6.1b}
{[U_\Phi]}_t\,U_\Phi^{-1}=\Phi^{-1}\,\bB\,\Phi-\Phi^{-1}\,\Phi_t.
\end{equation}
\end{subequations}
Here the left-hand side does not depend on $x$, whereas the right-hand side only seemingly depends on $x$. We may therefore allow $x$ to tend to $+\infty$ without losing the validity of \eqref{eq:6.1a}, as well as to $-\infty$ without losing the validity of \eqref{eq:6.1b}. Since $\bB\simeq-2i\,\zl^2\,\zs_3$ and
$\Psi\simeq e^{i\zl x\zs_3}$ as $x\to+\infty$, from (\ref{eq:6.1a}) we obtain
\begin{subequations}\label{eq:6.2}
\begin{equation}\label{eq:6.2a}
{[U_\Psi]}_t\,U_\Psi^{-1}=-2i\,\zl^2\,\zs_3.
\end{equation}
Similarly, for the other Jost matrix $\Phi(x,\zl)$ we get
\begin{equation}\label{eq:6.2b}
{[U_\Phi]}_t\,U_\Phi^{-1}=-2i\,\zl^2\,\zs_3.
\end{equation}
\end{subequations}
From \eqref{eq:1.7}, for the transmission coefficient we get
\begin{align*}
\bT_t&={\left(\Phi^{-1}\,\Psi\right)}_t=\Phi^{-1}\,\Psi_t-\Phi^{-1}\,\Phi_t\,\Phi^{-1}\,\Psi\nonumber\\
&=\Phi^{-1}\,\Big(\bB\,\Psi-\Psi{[U_\Psi]}_t\,U_\Psi^{-1}\Big)-\Phi^{-1}\,\Big(\bB\,\Phi-\Phi\,
{[U_\Phi]}_t\,U_\Phi^{-1}\Big)\,\Phi^{-1}\,\Psi\nonumber\\
&=\Phi^{-1}\,\bB\,\Psi-\bT\,{[U_\Psi]}_t\,U_\Psi^{-1}-\Phi^{-1}\,\bB\,\Psi
+{[U_\Phi]}_t\,U_\Phi^{-1}\,\bT\nonumber\\
&=2i\,\zl^2\,\big(\bT\zs_3-\zs_3\,\bT\big),
\end{align*}
so that
\begin{equation}\label{eq:6.3a}
\bT(\zl,t)=e^{-2i\,\zl^2\,t\zs_3}\,\bT(\zl,0)\,e^{2i\,\zl^2\,t\zs_3}\,.
\end{equation}
Consequently, $\tau(\zl)$ and $T(\zl)$ do not depend on $t$, whereas
\begin{equation}\label{eq:6.3b}
R(\zl,t)=e^{-4i\zl^2t}\,R(\zl,0),\qquad L(\zl,t)=e^{4i\zl^2t}\,L(\zl,0)\,.
\end{equation}
Differentiating \eqref{eq:4.5a} with respect to $t$ we obtain
$$
\theta_j\,\phi_t(x,ia_j)=ic_j\,\psi_t(x,ia_j)+i{[c_j]}_t\,\psi(x,ia_j)\,.
$$
Using \eqref{eq:6.1} and \eqref{eq:6.2}, we get
\begin{align*}
\theta_j\,\Big\{\bB(ia_j)\,\phi(x,ia_j)-2ia_j^2\,\phi(x,ia_j)\Big\}
=&i\,c_j\,\Big\{\bB(ia_j)\,\psi(x,ia_j)+2ia_j^2\,\psi(x,ia_j)\Big\}\\
&+i{[c_j]}_t\,\psi(x,ia_j)\,.
\end{align*}
Using \eqref{eq:4.5a} again we obtain
$$
{[c_j]}_t=-4\,i\,a_j^2\,c_j\,.
$$
Remembering that $\overline{c}_j=-c_j^*$ (see Proposition \ref{P4}), finally we obtain the time evolution of the norming constants
\begin{equation}\label{eq:6.3c}
c_j(t)=e^{-4ia_j^2t}\,c_j(0)\,,\qquad\overline{c}_j(t)=e^{4i{a_j^*}^2t}\,\overline{c}_{j}(0).
\end{equation}

\subsection{Inverse scattering transform}\label{sub:e}
Having presented the {\it direct scattering problem} (consisting in the construction of the scattering data when $\bm(x, 0)$ is known), the {\it inverse scattering problem} (amounting to the construction of $\bm(x)$ when the scattering data are given) and the {\it time evolution of the scattering data} associated to the first of equation \eqref{eq:pair}, we can discuss how the IST allows us to obtain the solution to the initial value problem for \eqref{eq:HF1}.

Using the initial condition $\bm(x,0)$ as a potential in the system \eqref{eq:initial}, we develop the direct scattering theory as shown above and build the scattering data. Successively, let the initial scattering data evolve in time in agreement with equation \eqref{eq:6.3a}-\eqref{eq:6.3c}. The solution of the Heisenberg equation is then obtained by solving the Marchenko equation \eqref{eq:4.11} where the kernel $\bO(x)$ is replaced by $\bO(x;t)$ (\textit{i.e.} taking into account \eqref{eq:6.3a}, \eqref{eq:6.3b}, and \eqref{eq:6.3c}), and then using relation \eqref{eq:3.19}.

\subsection{Gauge transformation.}\label{sub:f}
In \cite{ZT} the authors proved the existence of a gauge transformation which allows one to pass from the solutions of the Heisenberg ferromagnet equation to those of the NLS equations. Here we show that this transformation is determined by the matrix $\bH^{up}(x)$ (${\bH^{dn}}(x)$) introduced in the triangular representation \eqref{eq:3.10} (\eqref{eq:3.12}). For the sake of brevity and simplicity, in this subsection we omit the time dependence.

It is well known \cite{ZS, APT, CORBOOK} that the triangular representation for the Jost solutions of the Zakharov-Shabat system, namely the scattering problem associated to the NLS equation, takes the form
\begin{subequations}\label{eq:ZS}
\begin{align}
\Psi_{ZS}(x,\zl)=e^{i\zl x\zs_3}+\int_x^\infty \mathrm{d}\xi\,\bL_{ZS}^{up}(x,\xi)\,e^{i\zl \xi\zs_3}\,,\label{eq:ZSa}\\
\Phi_{ZS}(x,\zl)=e^{i\zl x\zs_3}+\int_{-\infty}^x \mathrm{d}\xi\,\bL_{ZS}^{dn}(x,\xi)\,e^{i\zl \xi\zs_3}\,.\label{eq:ZSb}
\end{align}
\end{subequations}
for certain kernels $\bL_{ZS}^{up}$, $\bL_{ZS}^{dn}$. Comparing \eqref{eq:ZSa} to \eqref{eq:3.10}, and \eqref{eq:ZSb} to \eqref{eq:3.12}, and using \eqref{eq:notazione} and \eqref{eq:3.17}, as well as \eqref{eq:3.13} and \eqref{eq:L}, we get the following relation
\begin{subequations}\label{eq:connection}
\begin{equation}\label{eq:connectiona}
\Psi_{ZS}(x,\zl)={\bH^{up}(x)}^{-1}\,\Psi(x,\zl)\,, \quad \Phi_{ZS}(x,\zl)={\bH^{dn}(x)}^{-1}\,\Phi(x,\zl)\,,
\end{equation}
where $\Psi_{ZS}(x,\zl)$, $\Phi_{ZS}(x,\zl)$ are the Jost matrices of the Zakharov-Sahabat system, $\Psi(x,\zl)$, $\Phi(x,\zl)$ are the Jost matrices of the scattering problem \eqref{eq:pair} associated to the Heisenberg ferromagnet equation \eqref{eq:HF1}, and
\begin{footnotesize}
\begin{align}
\bH^{up}(x)&=\left[I_2+\int_x^\infty \mathrm{d}\xi\,\bL_{ZS}^{up}(x,\xi)\right]^{-1}\,,\,\label{eq:connectionb}\\
\bH^{dn}(x)&=\left[I_2+\int_{-\infty}^{x} \mathrm{d}\xi\,\bL_{ZS}^{dn}(x,\xi)\right]^{-1}\,.\label{eq:connectionc}
\end{align}
\end{footnotesize}
\end{subequations}

We have the following:
\begin{theorem}\label{eq:GaugeZS}
The solutions of the initial value problem \eqref{eq:initial} are expressed in terms of the Jost solutions of the Zakharov-Shabat system as:
\begin{equation}\label{eq:GaugeH}
\bm(x)\cdot\bzs=\Psi_{ZS}^{-1}(x,0)\,\sigma_3\,\Psi_{ZS}(x,0)=\Phi_{ZS}^{-1}(x,0)\,\sigma_3\,\Phi_{ZS}(x,0)\,.
\end{equation}
\end{theorem}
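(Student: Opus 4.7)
The plan is to identify the matrices $\bH(z)$ and $\tilde\bH(z)$ from Propositions \ref{P2} and \ref{P3} with the values at $\zl=0$ of the Zakharov--Shabat (ZS) Jost matrices corresponding to the same scattering data, and then to invoke Proposition \ref{P5} (and its $\Phi$-side analogue) to conclude.

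First I would recall from \eqref{eq:3.17} that $\bH(z)^{-1}=I_2+\tilde\bL(z)$, with $\tilde\bL(z)=\int_z^\infty dy\,\bL(z,y)$, where $\bL(z,y)$ is the unique $L^1(z,+\infty)^{2\times 2}$ solution of the Marchenko equation \eqref{eq:4.11}. On the ZS side, the standard focusing AKNS theory (see \cite{D,CORBOOK}) provides a triangular representation
$$\Psi_{ZS}(z,\zl)=e^{i\zl z\zs_3}+\int_z^\infty d\hat z\,\bM(z,\hat z)\,e^{i\zl\hat z\zs_3},$$
in which the kernel $\bM(z,y)$ satisfies \emph{exactly} the same Marchenko equation \eqref{eq:4.11} with the kernel $\bO$ of \eqref{eq:kernel}, because the HF and ZS scattering data are identical (same reflection coefficient, discrete eigenvalues, norming constants, and NLS-type time evolution \eqref{eq:6.3a}--\eqref{eq:6.3c}). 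By uniqueness of the Marchenko solution, $\bM(z,y)=\bL(z,y)$. Evaluating the triangular representation at $\zl=0$ then yields
$$\Psi_{ZS}(z,0)=I_2+\tilde\bL(z)=\bH(z)^{-1},$$
and Proposition \ref{P5} gives
$$\Psi_{ZS}^{-1}(z,0)\,\zs_3\,\Psi_{ZS}(z,0)=\bH(z)\,\zs_3\,\bH(z)^{-1}=\bm(z)\cdot\bzs,$$
which is the first equality in \eqref{eq:GaugeH}.

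For the second equality I would repeat the argument on the left half-line: starting from \eqref{eq:3.12}, the Marchenko equation \eqref{eq:4.11bis} for $\bP(z,y)=\tilde\bH(z)^{-1}\bN(z,y)$, and the analogue of \eqref{eq:3.17} giving $\tilde\bH(z)^{-1}=I_2+\int_{-\infty}^z dw\,\bP(z,w)$, one identifies $\bP$ with the kernel of the ZS triangular representation for $\Phi_{ZS}$ (whose Marchenko kernel coincides with $\tilde\bO$ of \eqref{eq:kernelb} by the same scattering-data argument). Evaluation at $\zl=0$ gives $\Phi_{ZS}(z,0)=\tilde\bH(z)^{-1}$, and the $\Phi$-side analogue of Proposition \ref{P5}, namely $\bm(z)\cdot\bzs=\tilde\bH(z)\,\zs_3\,\tilde\bH(z)^{-1}$, produces the claim.

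The main technical obstacle will be carefully verifying that the HF and ZS problems share the \emph{same} Marchenko kernel: one must match reflection coefficients, discrete eigenvalues, norming constants, and their time evolutions --- precisely the content of Subsection \ref{sub:d} combined with the standard NLS scattering theory. Once this identification is granted, everything else reduces to the uniqueness of the Marchenko solution (already recorded below \eqref{eq:4.11}) and to the elementary evaluation at $\zl=0$ of the triangular representation. A secondary point to be checked is the $\Phi$-side analogue of Proposition \ref{P5}, which follows by repeating the derivation of \eqref{eq:3.19} using the Volterra equation \eqref{eq:2.1b} in place of \eqref{eq:2.1a}.
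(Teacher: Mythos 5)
Your argument is correct and follows essentially the route the paper intends: although the paper defers the proof to \cite{Vargiu, DM5}, the remarks immediately preceding the theorem (recalling $\bH(z)=[I_2+\tilde{\bL}(z)]^{-1}$ and its $\tilde{\bH}$ analogue for ZS Jost matrices ``corresponding to the same scattering data'') set up exactly your identification of $\Psi_{ZS}(z,0)$ with $\bH(z)^{-1}$ via uniqueness of the Marchenko solution, followed by an application of Proposition \ref{P5}. Note only that the matching of the HF and ZS scattering data, which you flag as the main technical obstacle, is already built into the paper's hypotheses by construction, so your proof reduces to the evaluation at $\zl=0$ as you describe.
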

\begin{proof}
We give the proof only for the first of \eqref{eq:GaugeH} because the proof of the second one is very similar. By using \eqref{eq:connection} and \eqref{eq:3.19}, we immediately get:
\begin{align*}
\Psi_{ZS}^{-1}(x,0)\,\sigma_3\,\Psi_{ZS}(x,0)&=\left[{\bH^{up}(x)}^{-1}\,\Psi(x,0)\right]^{-1}\,\sigma_3\,\left[{\bH^{up}(x)}^{-1}\,\Psi(x,0)\right]\\
&=\Psi(x,0)^{-1}\bH^{up}(x)\,\sigma_3\,{\bH^{up}(x)}^{-1}\,\Psi(x,0)\\
&=\Psi(x,0)^{-1}\,\bm(x)\cdot\bzs\,\Psi(x,0)=\bm(x)\cdot\bzs\,,
\end{align*}
because $\Psi(x,0)=I_2$.
\end{proof}

Finally, we observe that, in principle, equation (\ref{eq:GaugeH}) can be used to generate solutions of (\ref{eq:HF1}). However, for the reasons discussed in the Introduction, we prefer to follow the approach based on the matrix triplet which allows to get an explicit and general multi-soliton solution formula.

\section{Matrix triplet method}\label{sec:3}
In this section we construct an explicit soliton solution formula for equation \eqref{eq:HF1}. To this aim, we apply the \emph{matrix triplet} technique, successfully used in \cite{DM0, DM1, DM2, DM3, DM4}. 
Furthermore, we use this method to get explicit expressions for the Jost solutions in the reflectionless case when the corresponding scattering data are specified.

\subsection{Explicit soliton solutions for equation \eqref{eq:HF1}\label{sub:solitonsolutions}}
We want to restrict ourselves to the case $R(\zl)=0$. In this case the expression for $\bO(x;0)$ is given by \eqref{eq:kernel} putting in it $\hat{R}=0$. In particular, we can treat the situation where the discrete eigenvalues are not necessarily simple \cite{D} by generalizing formula \eqref{eq:kernel} as follows
\begin{equation}\label{eq:kernel1}
\Omega(x;t)=\sum_{j=1}^n\sum_{k=0}^{n_j-1}c_{jk}(t)\frac{x^k}{k!}\,e^{-a_jx}\,.
\end{equation}
In \eqref{eq:kernel1}, $n$ is the number of discrete eigenvalues $\{ia_j\}_{j=1}^{n}$, namely the poles of the transmission coefficient $T(\zl)$ in $\C^+$ (thus, satisfying $\mathrm{Re}(a_{j})>0$); the quantities $a_{j}$ are obtained by multiplying the discrete eigenvalues by $-i$; $n_j$ is the algebraic multiplicity of $ia_j$; and $\left\{c_{jk}(t)\right\}_{k=0}^{n_{j}-1}$, for all $j=1,2,...,n$, are the (time-dependent) norming constants corresponding to $ia_{j}$, evolving in time according to \eqref{eq:6.3c}.

To recover the solution of \eqref{eq:initial} we follow the three steps indicated below.
\begin{itemize}
\item[a.] Suppose that the scattering data, namely the discrete eigenvalues and the corresponding norming constants,
    $$
    \{i a_j\}_{j=1}^{n}\,\quad\mbox{ and }\quad
    \left\{\{c_{jk}(t)\}_{k=0}^{n_j-1}\right\}_{j=1}^{n},
    $$
    are given. Then, we construct $\bO(x)$ as in \eqref{eq:kernel} and we let it evolve in time using \eqref{eq:kernel1}:
    \begin{equation}\label{eq:kernel2}
    \bO(x;t)=\begin{pmatrix}0&\Omega(x;t)\\
    -\Omega(x;t)^*&0\end{pmatrix}\,.
    \end{equation}
\item[b.] We solve the Marchenko integral equation \eqref{eq:4.11}:
    \begin{equation*}
    \bL(x,y;t)+\bO(x+y;t)+\int_x^\infty \mathrm{d}\xi\,\bL(x,\xi;t)\,\bO(\xi+y;t)=0_{2\times2}\,.
    \end{equation*}
    where $\xi>x$ and the kernel $\bO(x,y)$ is given in \eqref{eq:kernel2}.
\item[c.] We construct the potential $\bm(x;t)$ by using formula\eqref{eq:3.19}:
    \begin{equation}\label{eq:solvingformula}
    \bm(x;t)\cdot\bzs=
    \left[I_2+{\tilde{\bL}^{\dagger}(x;t)}\right]\,\zs_3\,\left[I_2+\tilde{\bL}(x;t)\right],
    \end{equation}
    where $\tilde{\bL}(x)=\int_x^\infty \mathrm{d}\xi\,\bL(x,\xi)$.
\end{itemize}
\noindent Let us follow the above procedure (an analogous procedure can be developed with the kernel $\overline{\bO}$, as per in \eqref{eq:4.10dnb}, and solving the Marchenko equation \eqref{eq:4.11dn} for $\overline{\bL}$). We start by disregarding the time dependence (\textit{e.g.} we construct $\bO(x)$ assuming no dependence on the time). We will subsequently show how to take the time dependence into account.

It is well known \cite{Dym, CORBOOK} that it is possible to factorize a matrix function which is in the form \eqref{eq:kernel2} with \eqref{eq:kernel1} by using a suitable triplet of matrices. More precisely, let $\bar{n}=\sum_{j=1}^{n}n_j$, and suppose $(\CA,\CB,\CC)$ is a matrix triplet such that all the eigenvalues of the $2\bar{n}\times2\bar{n}$ matrix $\CA$ have positive real parts, $\CB$ is $2\bar{n}\times2$, and $\CC$ is $2\times2\bar{n}$. We then set
\begin{subequations}\label{eq:5.1}
\begin{equation}\label{eq:5.1a}
\bO(x)=\begin{pmatrix}0&\Omega(x)\\-\Omega(x)^*&0\end{pmatrix}\deff\CC\,e^{-x\CA}\,\CB\,.
\end{equation}
Alternatively, equation \eqref{eq:5.1a} can be written by setting
\begin{equation}\label{eq:5.1b}
\Omega(x)=\sum_{j=1}^n\sum_{k=0}^{n_j-1}c_{jk}\frac{x^k}{k!}\,e^{-a_jx}=C\,e^{-xA}\,B\,,
\end{equation}
with
\begin{equation}\label{eq:5.4a}
\CA=\begin{pmatrix}A&0_{\bar{n}\times\bar{n}}\\0_{\bar{n}\times\bar{n}}&A^\dagger\end{pmatrix},
\quad\CB=\begin{pmatrix}0_{\bar{n}\times1}&B\\-C^\dagger&0_{\bar{n}\times1}\end{pmatrix},
\quad\CC=\begin{pmatrix}C&0_{1\times\bar{n}}\\0_{1\times\bar{n}}&B^\dagger\end{pmatrix}.
\end{equation}
\end{subequations}
Here $A$ is an $\bar{n}\times\bar{n}$ matrix whose $n$ eigenvalues $\left\{a_{j}\right\}_{j=1}^{n}$ are obtained from the poles $\left\{i a_{j}\right\}_{j=1}^{n}$ of the transmission coefficient $T(\zl)$ (namely the discrete eigenvalues) by multiplication by a factor $-i$ (we shall see a proof of this fact in Subsection ${3.2}$); $B$ is a $\bar{n}\times1$ matrix; and $C$ is a $1\times\bar{n}$ matrix. Furthermore, we assume that the triplet $(A,B,C)$ is a \textit{minimal} triplet in the sense that the matrix order of $A$ is minimal among all triplets representing the same Marchenko kernel by means of \eqref{eq:5.1} \cite{Dym, CORBOOK}. As the discrete eigenvalues $\left\{i a_{j}\right\}_{j=1}^{n}$ belong to the upper half-plane $\mathbb{C}^{+}$, we have $\mathrm{Re}(a_{j})>0$ for all $j$, namely all the eigenvalues of the matrix $A$ have positive real parts: this fact is necessary in order to assure the convergence of the integrals in \eqref{eq:5.4g}. Moreover, we recall that the minimality of the triplet $(A,B,C)$ entails that the geometric multiplicity of the eigenvalues of $A$ be one \cite{DM0}.

In particular, it is worth observing here that it is not restrictive (in fact, it is the typical choice) to set the triplet $(A, B, C)$ as follows \cite{CORBOOK}:
\begin{subequations}\label{eq:ABC}
\begin{equation}\label{eq:ABCa}
A_{\bar{n}\times\bar{n}}=\begin{pmatrix}
A_1&0&\cdots&0\\
0& A_2&\cdots&0\\
\vdots&\vdots &\ddots&\vdots\\
0&0&\cdots&A_n
\end{pmatrix}\,,\,\,
B_{\bar{n}\times1}=\begin{pmatrix} B_1\\ B_2\\ \vdots\\ B_n\end{pmatrix}\,,\,\,
C_{1\times\bar{n}}=\begin{pmatrix} C_1&C_2& \cdots & C_n\end{pmatrix}\,,
\end{equation}
where $A$ is in Jordan canonical form, with $A_j$ being the Jordan block of dimension $n_j\times n_j$ corresponding to the discrete eigenvalue $ia_j$,
\begin{equation}\label{eq:ABCb}
A_j = \left\{\begin{array}{cl}
a_j&\mbox{ if } n_{j}=1\\
&\\
\begin{psmallmatrix}a_{j}&1&0&0\\0&\cdot&\cdot&0\\0&0&\cdot&1\\0&0&0&a_j\end{psmallmatrix}
&\mbox{ if } n_{j}>1\,;\\
\end{array}\right.
\end{equation}
$B_j$ is a column vector of dimension $n_j$, typically chosen to be a vector of ones; and $C_j$ is a row vector of dimension $n_j$, typically chosen to be the vector of the norming constants corresponding to the discrete eigenvalue $ia_j$,
\begin{equation}\label{eq:ABCc}
C_{j} = \begin{pmatrix}c_{j,0}&c_{j,1}&\cdots&c_{j,n_j-1}\end{pmatrix}\,,
\end{equation}
\end{subequations}
so that the elements of $C$ are chosen to be the $\bar{n}$ norming constants $\left\{\{c_{jk}\}_{k=0}^{n_j-1}\right\}_{j=1}^{n}$. Note that, due to the minimality, if the triplet $(A, B, C)$ is set as in \eqref{eq:ABC}, then $A$ features no repeated blocks on the main diagonal.

Howbeit, in the present Section \ref{sec:3} we make no specific choice for $(A,B,C)$, and we assume the matrix triplet to be as generic as possible, save for the conditions dictated by the minimality and the positiveness of the real part of the eigenvalues. Special choices, corresponding to different classes of soliton solutions, will be discussed in Section \ref{sec:4}.

In view of the following, we also introduce the matrix $\CP$, which is the unique solution of the Sylvester equation
\begin{subequations}\label{eq:5.4}
\begin{equation}\label{eq:5.4b}
\CA\,\CP+\CP\,\CA=\CB\,\CC\,,
\end{equation}
\noindent namely
\begin{equation}\label{eq:5.4c}
\CP=\int_0^\infty \mathrm{d}\xi\,e^{-\xi\CA}\,\CB\,\CC\,e^{-\xi\CA}.
\end{equation}
\noindent Note that it is also possible to write $\CP$ as
\begin{equation}\label{eq:5.4d}
\CP=\begin{pmatrix}0_{\bar{n}\times\bar{n}}&N\\-Q&0_{\bar{n}\times\bar{n}}\end{pmatrix},
\end{equation}
where $N$ and $Q$ solve the Lyapunov matrix equations
\begin{align}
A^{\dagger}\,Q+Q\,A&=C^{\dagger}\,C\,,\label{eq:5.4e}\\
A\,N+N\,A^{\dagger}&=B\,B^{\dagger}\,,\label{eq:5.4f}
\end{align}
that is
\begin{equation}\label{eq:5.4g}
N=\int_0^\infty \mathrm{d}\xi\,e^{-\xi\,A}\,B\,B^\dagger\,e^{-\xi\,A^\dagger},\qquad
Q=\int_0^\infty \mathrm{d}\xi\,e^{-\xi\,A^\dagger}\,C^\dagger\,C\,e^{-\xi\,A}.
\end{equation}
\noindent By the minimality of the triplet $(A,B,C)$ \cite[Sec.4.1]{CORBOOK}, we see that $N$ and $Q$ are positive Hermitian matrices. Thus $\CP$ is invertible and
\begin{equation}\label{eq:5.4h}
\CP^{-1}=\begin{pmatrix}0_{\bar{n}\times\bar{n}}&-Q^{-1}\\N^{-1}&0_{\bar{n}\times\bar{n}}\end{pmatrix}.
\end{equation}
\end{subequations}

Now we are ready to express the solution $\bL(x,y)$ of the Marchenko integral equation \eqref{eq:4.11} in terms of the triplet $(\mathcal{A},\mathcal{B},\mathcal{C})$ and of the matrix $\mathcal{P}$. Indeed, by substituting the expression of the kernel \eqref{eq:5.1} into \eqref{eq:4.11}, we arrive at the following Marchenko equation
\begin{equation}\label{eq:separable}
\bL(x,y)+\CC\,e^{-(x+y)\,\CA}\,\CB+\int_x^\infty \mathrm{d}\xi\,\bL(x,\xi)\,\CC\,e^{-(\xi+y)\,\CA}\,\CB
=0_{2\times2}.
\end{equation}
Equation \eqref{eq:separable} can be solved explicitly via separation of variables. In fact, looking for a solution in the form
$$\bL(x,y)=-\bF(x)e^{-y\CA}\CB,$$
after some straightforward calculations, we find
\begin{equation}\label{eq:solutionMarch}
\bL(x,y)=-\CC e^{-x\CA}[I_{2\bar{n}}+e^{-x\CA}\CP e^{-x\CA}]^{-1}e^{-y\CA}\CB,
\end{equation}
\noindent provided the inverse matrix exists for all $x\in\R$.

Finally in order to reconstruct the solution of \eqref{eq:initial} we have to integrate \eqref{eq:solutionMarch} with respect to $y$, obtaining the explicit formula
\begin{equation}\label{eq:final}
\tilde{\bL}(x)=-\CC e^{-x\CA}[I_{2\bar{n}}+e^{-x\CA}\CP e^{-x\CA}]^{-1}e^{-x\CA}\,\CA^{-1}\,\CB.
\end{equation}
The right-hand side of \eqref{eq:solvingformula} is now explicit and we can use such formula to recover the components $m_j(x)$, $j=1,2,3$, of the vector $\bm(x)$.

Let us now introduce the dependence on the time $t$. In order to recover it, we have to take into account the time evolution of the scattering data expressed by \eqref{eq:6.3a}-\eqref{eq:6.3c}. Then the (reflectionless) Marchenko kernels become:
\begin{subequations}\label{eq:6.4}
\begin{alignat}{2}
\Omega(x;t)&=\sum_{j=1}^n\sum_{k=0}^{n_j-1}c_{jk}(t)\frac{x^k}{k!}\,e^{-a_jx}& &=C\,e^{-4itA^2}\,e^{-xA}\,B\,,\\
\Omega(x;t)^*&=\sum_{j=1}^n\sum_{k=0}^{n_j-1}c_{jk}^*(t)\frac{x^k}{k!}\,e^{-a_j^*x}& &=B^\dagger\,e^{-xA^\dagger}\,e^{4it{A^\dagger}^2t}\,C^\dagger\,.
\end{alignat}
\end{subequations}
In other words, we may replace the matrix triplet $(A,B,C)$ for the triplet $(A,B,Ce^{-4itA^2})$ in a such way that \eqref{eq:6.3a}, \eqref{eq:6.3b} and \eqref{eq:6.3c} are satisfied ($A$ contains the discrete eigenvalues which are time independent and $C$ the norming constants). Consequently, the explicit right-hand side of \eqref{eq:solvingformula} can be written as follows:
\begin{subequations}\label{eq:final1}
\begin{align}\label{eq:final1a}
\tilde{\bL}(x;t)&=-\CC(t)\,e^{-x\CA}\,\left[I_{2\bar{n}}+e^{-x\CA}\CP(t) e^{-x\CA}\right]^{-1}\,e^{-x\CA}\,\CA^{-1}\,\CB(t)=\nonumber\\
&=-\CC(t)\,\left[e^{2\,x\CA}+\CP(t)\right]^{-1}\,\CA^{-1}\,\CB(t)\,,
\end{align}
where
\begin{equation}\label{eq:final1b}
\CB(t)=\begin{pmatrix}0_{\bar{n}\times1}&B\\-\left(Ce^{-4itA^2}\right)^\dagger&0_{\bar{n}\times1}\end{pmatrix},
\quad\CC(t)=\begin{pmatrix}Ce^{-4itA^2}&0_{1\times\bar{n}}\\0_{1\times\bar{n}}&B^\dagger\end{pmatrix}\,,
\end{equation}
and
\begin{equation}\label{eq:final1c}
\CP(t)=\begin{pmatrix}0_{\bar{n}\times\bar{n}}&N\\-Q(t)&0_{\bar{n}\times\bar{n}}\end{pmatrix}\,,
\end{equation}
with
\begin{equation}\label{eq:final1d}
Q(t)=\int_0^\infty dx\,e^{-xA^\dagger}\left(Ce^{-4itA^2}\right)^\dagger Ce^{-4itA^2}e^{-xA},
\end{equation}
satisfying
\begin{equation}\label{eq:final1e}
A^{\dagger}Q(t)+Q(t)A=(Ce^{-4itA^2})^{\dagger}(Ce^{-4itA^2}).
\end{equation}
\noindent It is worth observing that, if $Q$ is the solution of the Lyapunov matrix equation \eqref{eq:5.4e}, then
\begin{equation}\label{eq:final1f}
Q(t)=(e^{-4itA^2})^{\dagger}\,Q\,e^{-4itA^2}\,.
\end{equation}
\end{subequations}

Finally, after some algebra (see Appendix \ref{sec:B}), using \eqref{eq:solvingformula} with \eqref{eq:final1} and \eqref{eq:Lstructure}, we have an explicit expression for $\bm(x;t)$ in terms of the elements of the matrix $\tilde{\bL}(x,t)$:
\begin{subequations}\label{eq:final2}
\begin{align}
m_1(x;t)&=-2\,\mathrm{Re}\Big((1+\tilde{L}_1(x;t))\,\tilde{L}_2(x;t)\Big)\,,\label{eq:final2a}\\
m_2(x;t)&=-2\,\mathrm{Im}\Big((1+\tilde{L}_1(x;t))\,\tilde{L}_2(x;t)\Big)\,,\label{eq:final2b}\\
m_3(x;t)&=2\,\left|1+\tilde{L}_1(x;t)\right|^2-1\,.\label{eq:final2c}
\end{align}
\end{subequations}
\noindent From the conservation of the norm of the magnetization $\|\bm(x;t)\|=1$, we immediately get $|1+\tilde{L}_{1}|^{2}+|\tilde{L}_{2}|^{2}=1$. In fact, being a point on $\mathbb{S}^{2}$, the whole solution $\bm(x,t)$ should and can be described by solely two quantities, \textit{i.e.}, the norm of $(1+\tilde{L}_{1})$ and the argument of $(1+\tilde{L}_{1})\,\tilde{L}_{2}$.

Further and more explicit expressions for the magnetization vector are provided in Appendix \ref{sec:B}.

\subsection{Reconstruction of the Jost solutions \label{sub:reconstruction}}
By using the same notations introduced in the subsection above, let us compute the Jost solution $\Psi(x,\zl)$ by substituting the solution of the Marchenko equation \eqref{eq:4.10} (see formula \eqref{eq:solutionMarch}) into
\eqref{eq:3.10}. We get
\begin{footnotesize}
\begin{align*}
\Psi(x,\zl)e^{-i\zl  x\zs_3}&=\bH^{up}(x)-\bH^{up}(x)\,\CC\,e^{-x\CA}\,
\left[I_{2\bar{n}}+e^{-x\CA}\,\CP\,e^{-x\CA}\right]^{-1}\,
\int_x^\infty \mathrm{d}\xi\,e^{-\xi\CA}\CB e^{i\zl(\xi-x)\zs_3}\\
&=\bH^{up}(x)-\bH^{up}(x)\,\CC\,\left[\CP+e^{2x\CA}\right]^{-1}
\int_0^\infty \mathrm{d}\xi\,e^{-\xi\CA}\CB e^{i\zl \xi\zs_3}\\
&=\bH^{up}(x)-\bH^{up}(x)\,\CC\,\left[\CP+e^{2x\CA}\right]^{-1}\,\CD(\zl),
\end{align*}
\end{footnotesize}
where
\begin{equation*}
-\CA\,\CD(\zl)+i\zl\CD(\zl)\,\zs_3=-\CB
\end{equation*}
has a unique solution (note that, if $\Lambda(\CA)$ is the spectrum of $\CA$, then $\Lambda(\CA)\cap\{i\zl,-i\zl\}=\emptyset$, for $\zl\in\R$). Observing that (see \eqref{eq:5.4})
\begin{align*}\label{eq:ABsigma3a}
\Sigma_{3}\,\CB+\CB\,\sigma_{3}=0_{2\bar{n}\times2}\,,\quad
\Sigma_{3}\,\CA-\CA\,\Sigma_{3}=0_{2\bar{n}\times2\bar{n}}\,,
\end{align*}
where
\begin{equation*}\label{eq:ABsigma3b}
\Sigma_{3}=\sigma_{3}\otimes I_{\bar{n}}=\begin{pmatrix}I_{\bar{n}}&0_{\bar{n}\times\bar{n}}\\0_{\bar{n}\times\bar{n}}&-I_{\bar{n}}\end{pmatrix}\,,
\end{equation*}
we have
\begin{align*}
\CD(\zl)&=\int_0^\infty \mathrm{d}\xi\,e^{-\xi\CA}\,\CB\,e^{i\zl \xi\zs_3}
=\int_0^\infty \mathrm{d}\xi\,e^{-\xi\CA}e^{-i\zl \xi\zS_3}\,\CB
=\int_0^\infty \mathrm{d}\xi\,e^{-\xi\CA-i\zl \xi\zS_3}\,\CB\\
&=\left(i\zl\zS_3+\CA\right)^{-1}\,\CB=-i\,\left(\zl I_{2\bar{n}}-i\zS_3\CA\right)^{-1}\,\zS_3\,\CB
=i\left(\zl I_{2\bar{n}}-i\zS_3\,\CA\right)^{-1}\,\CB\,\zs_3.
\end{align*}
Consequently,
\begin{align*}
\Psi(x,\zl)e^{-i\zl x\zs_3}&=\bH^{up}(x)\left[I_{2}-i\,\CC\left[\CP+e^{2x\CA}\right]^{-1}
\left(\zl I_{2\bar{n}}-i\zS_3\,\CA\right)^{-1}\CB\zs_3\right]\,.
\end{align*}
Taking the limit as $x\to-\infty$, we obtain
\begin{footnotesize}
\begin{equation}\label{eq:limtaurho}
\begin{pmatrix}\tau(\zl)&-\ds\lim_{x\to-\infty}\,e^{2i\zl x}\varrho(\zl)\\
\ds\lim_{x\to-\infty}\,e^{-2i\zl x}\varrho(\zl)^*&\tau(\zl)^*\end{pmatrix}
=I_2-i\,\CC\CP^{-1}\left(\zl I_{2\bar{n}}-i\zS_3\,\CA\right)^{-1}\CB\,\zs_3\,,
\end{equation}
\end{footnotesize}
provided $\CP$ is invertible. Equation \eqref{eq:limtaurho} with \eqref{eq:5.4} imply that $\varrho(\zl)=0$ (reflectionless case). Moreover, from \eqref{eq:limtaurho}, after some algebraic manipulation, we also find the following:
\begin{subequations}\label{eq:5.3}
\begin{align}
\tau(\zl)&=1-iCQ^{-1}(\zl I_{\bar{n}}+iA^\dagger)^{-1}C^\dagger,\label{eq:5.3a}\\
\tau(\zl)^*&=1+iB^\dagger N^{-1}(\zl I_{\bar{n}}-iA)^{-1}B.\label{eq:5.3b}
\end{align}
Taking complex conjugate transposes we get the alternative expressions
\begin{align}
\tau(\zl)&=1-iB^\dagger(\zl I_{\bar{n}}+iA^\dagger)^{-1}N^{-1}B,\label{eq:5.3c}\\
\tau(\zl)^*&=1+iC(\zl I_{\bar{n}}-iA)^{-1}Q^{-1}C^\dagger.\label{eq:5.3d}
\end{align}
\end{subequations}
Thus $\tau(\infty)=1$. Formulae \eqref{eq:5.3} can be further simplified by means of the {\it matrix determinant lemma} (obtainable from the Sherman-Morrison-Woodbury formula, see Sec. 2.1.3 of \cite{GvL1983}), which states that, for a generic invertible ($\bar{n}\times\bar{n}$) square matrix $X$ and for generic ($\bar{n}\times1$) column vectors $U$, $V$, one has
\begin{equation}\label{eq:detlemma}
\det{\left(X+U\,V^{\dagger}\right)} = \left(1+V^{\dagger}\,X^{-1}\,U\right)\,\det{\left(X\right)}\,.
\end{equation}
Applying \eqref{eq:detlemma}, via \eqref{eq:5.4e} and \eqref{eq:5.4f}, to any of \eqref{eq:5.3}, we get
\begin{equation}\label{eq:taulambda}
\tau(\zl)=\frac{\det(\zl I_{\bar{n}}-iA)}{\det(\zl I_{\bar{n}}+iA^\dagger)}\,.
\end{equation}
This also proves that the discrete eigenvalues $\left\{ia_{j}\right\}_{j=1}^{n}$ in $\C^+$ are exactly the eigenvalues of $iA$. Finally, it is worth remarking that, in contrast to the triangular representations in \cite{T} and \cite{ZT} (see formulae (13) and (17) in \cite{ZT}), definitions \eqref{eq:3.10} and \eqref{eq:3.12} allowed a direct and straightforward computation of the asymptotic behaviour for large $\lambda$ of the Jost solutions, as well as an explicit expression for the coefficient $\tau(\zl)$.

\section{Classes of soliton solutions}\label{sec:4}
In the present section we discuss classes of soliton solutions of (\ref{eq:HF1}), as resulting from the explicit formula (\ref{eq:final2}) with \eqref{eq:final1}. Moreover, we provide several numerical examples, obtained by computing (on MATLAB R2017a) the terms $\tilde{L}_{1}$ and $\tilde{L}_{2}$ in \eqref{eq:final2} by means of formulae \eqref{eq:Lexplicit1a} and \eqref{eq:Lexplicit1d} when $x$ is large and negative, and formulae \eqref{eq:Lexplicit1b} and \eqref{eq:Lexplicit1e} when $x$ is large and positive.

An immediate classification of the soliton solutions of (\ref{eq:HF1}) can be obtained by considering the algebraic multiplicity of the eigenvalues of the matrix $A$ in the matrix triplet $(A,B,C)$ in \eqref{eq:5.1b}. Propagating and stationary soliton solutions (the so-called \textit{magnetic-droplet solitons}, see \cite{KosevichIvanovKovalev1990}) are associated to algebraically simple eigenvalues of $A$. Multiple-pole (or, more simply, \textit{multipole}) soliton solutions are instead associated to eigenvalues of $A$ having algebraic multiplicity larger than one (\textit{i.e.}, \textit{degenerate} eigenvalues). In the following, we choose $A$ to be in Jordan canonical form as in \eqref{eq:ABC}: single eigenvalues on the main diagonal are associated to individual (stationary or propagating) solitons, whereas Jordan blocks of algebraic multiplicity $n_{j}>1$ are associated to multipole solutions. No blocks are repeated, as the geometric multiplicity of each eigenvalue is one due to the minimality of the triplet \cite{DM0, CORBOOK}. As shown in \ref{sec:4a}, solitons propagate with a constant velocity that is directly proportional to the imaginary part of the associated eigenvalue of $A$, so that stationary solitons are associated to real eigenvalues. On the contrary, as illustrated in Section \ref{sec:4c}, multipole solitons propagate with a velocity that changes logarithmically in time \cite{Olmedilla1986, Schiebold2014}. Finally, as shown in Section \ref{sec:4b}, oscillating (breather-like) solutions can be considered as a subset of the multi-soliton solutions, and can be constructed by putting two (or more) stationary or propagating solitons very near to each other.

\subsection{One-soliton solution}\label{sec:4a}
In the context of the matrix triplet method, the one-soliton solution (already found in \cite{NakSas1974,Tjon,Lak,T}) corresponds to the choice $n=1$, $n_{1}=1$ in \eqref{eq:5.1} -- so that $\bar{n}=1$ -- and it can be obtained as follows. If we set the matrix triplet $(A,B,C)$ as
$$
A=\left(a\right)\,,\quad B=\left(1\right)\,, \mbox{ and } C=\left(c\right)\,,
$$
we get
$$
N = \left(\frac{1}{2\,\mathrm{Re}(a)}\right)\,,\quad
Q = \left(\frac{|c|^{2}}{2\,\mathrm{Re}(a)}\right)\,,\quad
Q(t) = \left(\frac{|c|^{2}\,e^{-4\,i\,\mathrm{Re}(a^{2})\,t}}{2\,\mathrm{Re}(a)}\right)\,,
$$
and from \eqref{eq:Lstructure} and \eqref{eq:Lexplicit1} we have
\begin{align*}
\tilde{L}_{1} &= -\frac{2\,|c|^{2}\,\mathrm{Re}(a)}{a^*\,\Big(|c|^{2}+4\,\mathrm{Re}(a)^{2}\,e^{4\,\mathrm{Re}(a)\,(x-4\,\mathrm{Im}(a)\,t)}\Big)}\,\,\,,\\
\tilde{L}_{2} &= \frac{2\,c^*\,\mathrm{Re}(a)\,e^{2\,a\,(x+2\,i\,a\,t)}}{a^*\,\Big(|c|^{2}+4\,\mathrm{Re}(a)^{2}\,e^{4\,\mathrm{Re}(a)\,(x-4\,\mathrm{Im}(a)\,t)}\Big)}\,\,\,.
\end{align*}
We observe that solely the modulus of $c$ appears in $\tilde{L}_{1}$. Then we set, without any loss of generality,
$$
a = p+i\,q\,,\quad p>0\,,
\quad\mbox{ and }\quad
|c| = 2\,p\,e^{2\,p\,x_{0}}\,
$$
for some $x_{0}\in\R$. From \eqref{eq:final2}, after some simple algebra, we obtain the celebrated one-soliton solution, see \cite{NakSas1974,Tjon,Lak,T}:
\begin{subequations}\label{eq:onesoliton}
\begin{align}
m_{+}(x,t)&=2\,
\frac{(p+i\,q)-(p-i\,q)\,e^{4\,p\,(x-4\,q\,t-x_{0})}}{(p-i\,q)^{2}\,\left[1+e^{4\,p\,(x-4\,q\,t-x_{0})}\right]^{2}}\,
e^{2\,(p+i\,q)\,(x-4\,q\,t)-2\,p\,x_0-i\,\mathrm{arg}(c)}\,e^{4\,i\,(p^{2}+q^{2})\,t}\,,\label{eq:onesolitona}\\
m_{3}(x,t) &= 1-\frac{2\,p^{2}\,\mathrm{sech}^{2}\Big(2\,p\,(x-4\,q\,t-x_{0})\Big)}{p^{2}+q^{2}}\,,\label{eq:onesolitonb}
\end{align}
\end{subequations}
where $m_{+}(x,t) = m_{1}(x,t)+i\,m_{2}(x,t)$. This solution describes a localized, coherent configuration featuring a locally inverted magnetization whose point of minimum on $\be_{3}$ travels at the constant speed
\begin{subequations}\label{eq:onesoliton-parameters}
\begin{equation}\label{eq:onesoliton-speed}
v = 4\,\mathrm{Im}(a) = 4\,q\,.
\end{equation}
The minimum of $m_3$ is $\frac{q^{2}-p^{2}}{q^{2}+p^{2}}$ and is attained at $x=4\,q\,t+x_0$. Furthermore, we observe that, in the right-hand side of (\ref{eq:onesolitona}), the exponent of the last exponential term is a phase factor depending only on the time $t$. Consequently, the magnetic configuration described by \eqref{eq:onesoliton} features also a precession of constant frequency
\begin{equation}\label{eq:onesoliton-frequency}
\omega = 4\,|a|^{2} = 4\,(p^{2}+q^{2})\,
\end{equation}
on the $(\be_1,\be_2)$ plane. Indeed, the constant speed $v$ and the precession frequency $\omega$, depending only on the real and imaginary parts of the eigenvalue $a$, characterize entirely the solution from the physical point of view. Moreover, by inverting \eqref{eq:onesoliton-speed} and \eqref{eq:onesoliton-frequency},
\begin{equation}\label{eq:onesoliton-pq}
p = \frac{1}{2}\,\sqrt{\omega-\frac{v^{2}}{4}}\,,\quad
q = \frac{v}{4}\,,
\end{equation}
we immediately obtain the condition for localization (see \cite{KosevichIvanovKovalev1990}),
\begin{equation}\label{eq:onesoliton-localization}
\omega\geq0\,,\quad |v|\leq2\,\sqrt{\omega}\,.
\end{equation}
\end{subequations}
It is worth observing that $m_{3}\neq-1$ for all propagating ($q\neq0$) soliton solutions. Stationary ($q=0$) soliton solutions, for which $m_{3}$ attains $-1$ at the minimum, can be safely constructed by considering the limit of a sequence of propagating soliton solutions associated to a sequence of eigenvalues of the form $a_{j} = p+i\,q_{j}$, where $q_j\neq0$ for all $j$, and $q_{j}\to0$ as $j\to\infty$. Therefore, a real eigenvalue ($q=0$) corresponds to a stationary ($v=0$) soliton.

On the other hand, the norming constant $c$ can be used to give the initial ($t=0$) position $x_{0}$ of the minimum of $m_{3}$ and the initial phase $\varphi_{0}$ on the $(\be_1,\be_2)$ plane (namely, the direction to which the magnetization points on the $(\be_1,\be_2)$ plane at $t=0$). It is not difficult to see that, if one sets, without any loss of generality,
\begin{equation}\label{eq:normingc}
c\equiv c(p,q,x_{0},\varphi_{0})=\left\{
\begin{array}{ll}
2\,i\,p\,\mathrm{sign}(q)\,\left(\frac{p+i\,q}{p-i\,q}\right)\,e^{2\,(p+i\,q)\,x_{0}-i\,\varphi_{0}}&\mbox{ if } q\neq0\\
&\\
2\,p\,e^{2\,p\,x_{0}-i\,\varphi_{0}}&\mbox{ if } q=0\,,
\end{array}
\right.
\end{equation}
then the one soliton solution \eqref{eq:onesoliton} can be elegantly written in the following neat form
\begin{subequations}\label{eq:onesoliton-omegav}
\begin{align}
\left(\begin{array}{c}m_1(x,t)\\m_2(x,t)\end{array}\right) &=
\frac{1-m_3(x,t)}{p}\,
\left(\begin{array}{cc}\cos\beta(x,t)&-\sin\beta(x,t)\\ \sin\beta(x,t)&\cos\beta(x,t)\end{array}\right)\,
\left(\begin{array}{c}
q\,\cosh\kappa(x,t)\\
p\,\sinh\kappa(x,t)
\end{array}\right)\,,\label{eq:onesoliton-omegav-m12}\\
m_3(x,t) & = 1-\frac{2\,p^{2}}{p^{2}+q^{2}}\,\mathrm{sech}^{2}\kappa(x,t)\,,\label{eq:onesoliton-omegav-m3}
\end{align}
where $p$ and $q$ are given in \eqref{eq:onesoliton-pq} and
\begin{align}\label{eq:onesoliton-omegav-angle}
\kappa(x,t)&=2\,p\,(x-v\,t-x_{0})=\sqrt{\omega-\frac{v^{2}}{4}\,\,}\,\,(x-v\,t-x_{0})\,\,,\\
\beta(x,t)&=\omega\,t+\frac{v}{2}\,(x-v\,t-x_{0})+\varphi_{0}\,.
\end{align}
\end{subequations}
In Figure \ref{fig:soliton-propagating} we illustrate a propagating, one-soliton solution for the choice
$$
v=1\,,\,\omega=2\,,\,x_0=-4\,,\,\varphi_0=0\,,
\,\mbox{ entailing }\,
p = \tfrac{\sqrt{7}}{4}\,,\, q = \tfrac{1}{4}\,,\, c=\tfrac{-7+i\,3\,\sqrt{7}}{8}\,e^{-2\,(i+\sqrt{7})}\,.
$$
\begin{center}
\begin{figure}[!ht]
\hspace{-0.0cm}\subfigure[{$m_{1}(x,t)$}\label{fig:soliton-propagating-M1}]{\includegraphics[scale=1]{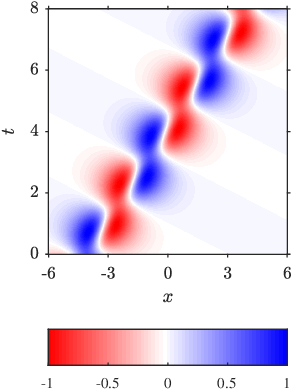}}
\hspace{+0.2cm}\subfigure[{$m_{2}(x,t)$}\label{fig:soliton-propagating-M2}]{\includegraphics[scale=1]{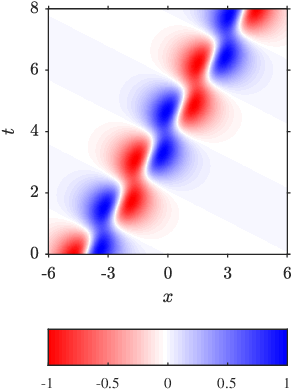}}
\hspace{+0.2cm}\subfigure[{$m_{3}(x,t)$}\label{fig:soliton-propagating-M3}]{\includegraphics[scale=1]{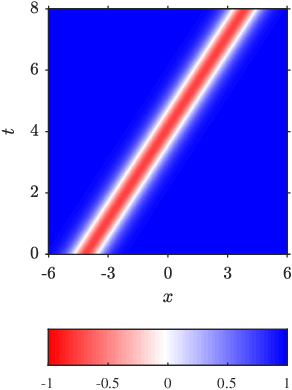}}
\caption{Propagating, one-soliton solution.\label{fig:soliton-propagating}}
\end{figure}
\end{center}

\subsection{Multi-soliton and breather-like solutions}\label{sec:4b}
By combining two or more one-soliton solutions, namely choosing $n>1$, and $n_{j}=1$ for all $j$, $\bar{n}=n$ in \eqref{eq:5.1}, one can easily construct multi-soliton solutions. In this respect, we point out that formulae \eqref{eq:final2} as well as \eqref{eq:final3} are particularly amenable to computer algebra, and allow to obtain explicit expressions, if not for the whole solution in terms of the three components of the magnetization, at least for many specific features of the magnetization dynamics, as shown in the present Section \ref{sec:4}.

For instance, if we choose $n=2$, $n_{1,2}=1$, $\bar{n}=2$, then from \eqref{eq:final3} -- and using the same notation as in this latter formula -- one gets a simple expression for the third component of the magnetization for the two-soliton solution (involving only determinants and traces of $(2\times2)$-matrices):
\begin{equation*}
m_3(x,t)=2\,{\left|\eta(x,t)\right|}^{2}-1\,,\,\,\mbox{ with }\,\,
\eta(x,t)=\tfrac{1-\mathrm{tr}\left(\tilde{N}\,A^{\dagger}\,\tilde{Q}\,A^{-1}\right)+\alpha\,\det\left(\tilde{N}\,\tilde{Q}\right)}{1+\mathrm{tr}\left(\tilde{N}\,\tilde{Q}\right)+\det\left(\tilde{N}\,\tilde{Q}\right)}\,,
\,\alpha=\frac{\det{A^{\dagger}}}{\det{A}}\,.
\end{equation*}
Observe that the same formula holds exactly even without the restriction $n_{j}=1$ for $j=1,2$, namely also in the case of a single two-pole soliton (see Section \ref{sec:4c}) with $n=1$, $n_{1}=2$, $\bar{n}=2$.

In Figure \ref{fig:solitons-02-opposite-direction} we show the head-on collision between two propagating solitons, obtained via \eqref{eq:onesoliton-pq} and \eqref{eq:normingc} with
\begin{align*}
v^{(1)} &= 1\,,&\omega^{(1)} &= 2\,,& x_{0}^{(1)} &= -5\,,& \varphi_{0}^{(1)} = 0\,,&\\
v^{(2)} &= -1\,,&\omega^{(2)} &= 2\,,& x_{0}^{(2)} &= 5\,,& \varphi_{0}^{(2)} = \tfrac{\pi}{2}\,,&
\end{align*}
where the superscript index in brackets identifies the soliton (note that, here and thereafter, the entries of the matrix $B$ are chosen to be ones). Observe the center-of-mass and phase shift due to the nonlinear interaction, see \cite{T, Fogedby1980b}. This phenomenon is particularly evident in
Figure \ref{fig:solitons-03}, where we show the interaction of three propagating solitons,
obtained via \eqref{eq:onesoliton-pq} and \eqref{eq:normingc} with
\begin{align*}
v^{(1)} &= 1.7\,,& \omega^{(1)} &= 5\,,& x_{0}^{(1)} &= -7\,,& \varphi_{0}^{(1)} &= 0\,,&\\
v^{(2)} &= -0.25\,,& \omega^{(2)} &= 4\,,& x_{0}^{(2)} &= 0.25\,,& \varphi_{0}^{(2)} &= 0\,,&\\
v^{(3)} &= -1.8\,,& \omega^{(3)} &= 5.5\,,& x_{0}^{(3)} &= 10\,,& \varphi_{0}^{(3)} &= 0\,.&
\end{align*}
Figures \ref{fig:solitons-02-opposite-direction} and \ref{fig:solitons-03} well illustrate how, for a multi-soliton solution (featuring two or three solitons), the parameters $x_{0}$ and $\varphi_{0}$ in formula \eqref{eq:normingc} provide only an approximation of the initial position and phase, respectively, of each one of the solitons, becoming correct only as the distance that separates the solitons approaches infinity.
\begin{center}
\begin{figure}[!ht]
\hspace{-0.0cm}\subfigure[{$m_{1}(x,t)$}\label{fig:solitons-02-opposite-direction-M1}]{\includegraphics[scale=1]{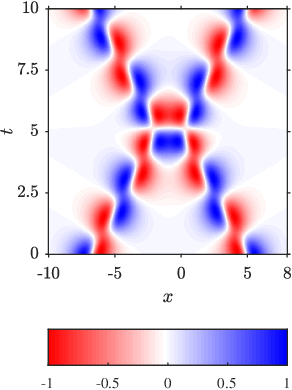}}
\hspace{+0.2cm}\subfigure[{$m_{2}(x,t)$}\label{fig:solitons-02-opposite-direction-M2}]{\includegraphics[scale=1]{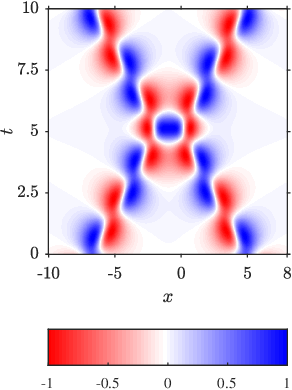}}
\hspace{+0.2cm}\subfigure[{$m_{3}(x,t)$}\label{fig:solitons-02-opposite-direction-M3}]{\includegraphics[scale=1]{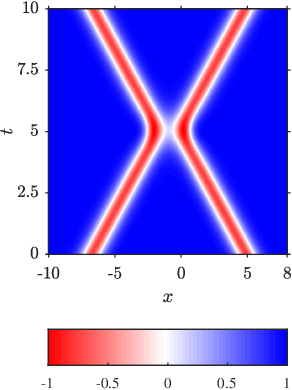}}
\caption{Head-on collision between two solitons propagating in opposite directions.\label{fig:solitons-02-opposite-direction}}
\end{figure}
\end{center}
\begin{center}
\begin{figure}[!ht]
\hspace{-0.0cm}\subfigure[{$m_{1}(x,t)$}\label{fig:solitons-03-M1}]{\includegraphics[scale=1]{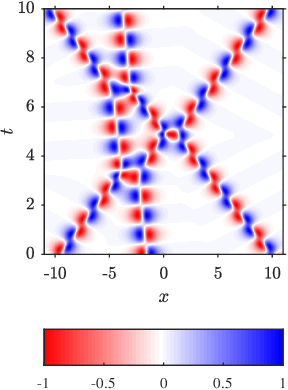}}
\hspace{+0.2cm}\subfigure[{$m_{2}(x,t)$}\label{fig:solitons-03-M2}]{\includegraphics[scale=1]{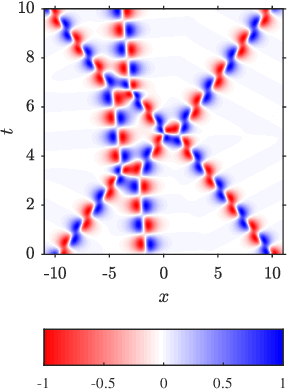}}
\hspace{+0.2cm}\subfigure[{$m_{3}(x,t)$}\label{fig:solitons-03-M3}]{\includegraphics[scale=1]{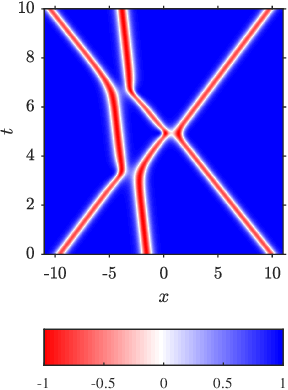}}
\caption{Interaction of three propagating soliton with different velocities. The solitons emerged unchanged from the interaction.\label{fig:solitons-03}}
\end{figure}
\end{center}
Breather-like soliton solutions have been also reported in the literature (\textit{e.g.}, see \cite{MukVyaPan2016}), in terms of the observation of a periodic magnetization energy density, whereas here we exhibit the corresponding explicit behaviour of the magnetization dynamics in $x$ and $t$. Breather-like soliton solutions can be constructed out of two-soliton solutions, by creating two stationary, or two same-speed, propagating solitons close to each other (for comparison, observe that such a construction would be less immediate if one uses the solution formula in \cite{BianGuoLing2014}).

As breather-like solitons physically correspond to single states produced by two tangled stationary or propagating solitons (for this reason sometimes called ``tangled states'' or ``bound states''), one has to play also with the norming constants, namely with the components of the matrix $C$. It is worth observing that Corollary 2 in \cite{ZT} says that the norming constants for (\ref{eq:HF1}) can be given in terms of the norming constants featured by the Inverse Scattering Transform scheme for the focussing nonlinear Schr\"{o}dinger equation, all multiplied by the same complex factor, which is the nonlinear Schr\"{o}dinger transmission coefficient computed at the nonlinear Schr\"{o}dinger spectral parameter equal to zero; however, a rescaling of the norming constants is a non-trivial transformation of them, for the rescaling constants appear nonlinearly in the solution formula: as a result, ``tangled states'' (such as breather-like solitons) can be destroyed or created, and it is not forcedly the case that, via the gauge equivalence, a reflectionless breather-like solution of the focussing nonlinear Schr\"{o}dinger equation always corresponds to reflectionless breather-like solution of the Heisenberg ferromagnet equation (if it corresponds to a breather at all).

In the case of two stationary solitons ($v^{(1)}=v^{(2)}=0$), namely, in the case of two real eigenvalues $a^{1} = p_{1}$ and $a_{2} = p_{2}$, $p_1\neq p_2$, it is possible to show that, if the norming constants are chosen as follows
\begin{equation}\label{eq:normingc-sym}
C = \left(c_1,\,c_{2}\right) = 2\,\sqrt{\frac{(p_1+p_2)^2+(q_1-q_2)^2}{(p_1-p_2)^2+(q_1-q_2)^2}}\,\left(p_1,\,p_{2}\right)
\end{equation}
with $q_1=0$ and $q_2=0$, then a single, symmetrical, breather-like soliton solution is created, with $m_{3}$ characterized by two identical, localized minima oscillating in time around the origin with period
\begin{equation}\label{eq:breatherperiod}
\nu = \frac{2\,\pi}{4\,(p_1+p_2)\,(p_1-p_2)}\,.
\end{equation}
Figure \ref{fig:breather-stationary} shows an example of such a breather-like soliton, obtained via \eqref{eq:onesoliton-pq} and \eqref{eq:normingc-sym} with
\begin{equation*}
v^{(1)} = 0\,,\quad \omega^{(1)} = 0.8\,,
\quad\mbox{ and }\quad
v^{(2)} = 0\,,\quad \omega^{(2)} = 0.4\,,
\end{equation*}
thus entailing an oscillation in time with period $\nu \simeq 15.71$.

Propagating, breather-like solitons can be constructed in the same way as above, but assigning the same non-zero imaginary part to both the eigenvalues. For instance, Figure \ref{fig:breather-propagating} shows a propagating, breather-like soliton, moving with velocity $v=0.15$, obtained via \eqref{eq:onesoliton-pq} and \eqref{eq:normingc-sym} with
\begin{equation*}
v^{(1)} = 0.15\,,\quad \omega^{(1)} = 0.8\,,
\quad\mbox{ and }\quad
v^{(2)} = 0.15\,,\quad \omega^{(2)} = 0.4\,.
\end{equation*}
\begin{center}
\begin{figure}[!ht]
\hspace{-0.0cm}\subfigure[{$m_{1}(x,t)$}\label{fig:breather-stationary-M1}]{\includegraphics[scale=1]{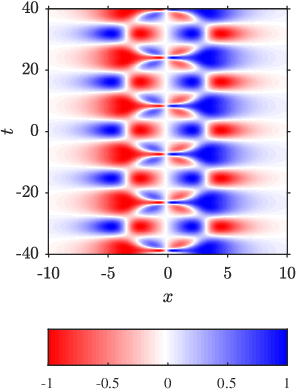}}
\hspace{+0.2cm}\subfigure[{$m_{2}(x,t)$}\label{fig:breather-stationary-M2}]{\includegraphics[scale=1]{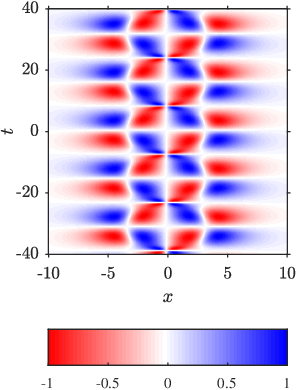}}
\hspace{+0.2cm}\subfigure[{$m_{3}(x,t)$}\label{fig:breather-stationary-M3}]{\includegraphics[scale=1]{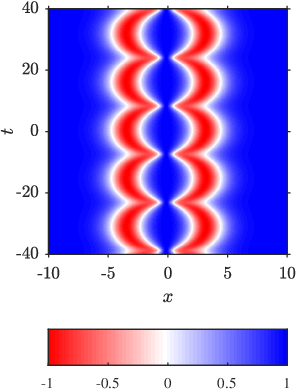}}
\caption{Stationary, breather-like soliton.\label{fig:breather-stationary}}
\end{figure}
\end{center}
\begin{center}
\begin{figure}[!ht]
\hspace{-0.0cm}\subfigure[{$m_{1}(x,t)$}\label{fig:breather-propagating-M1}]{\includegraphics[scale=1]{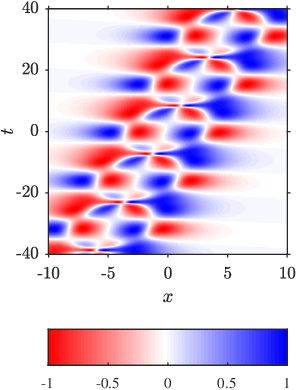}}
\hspace{+0.2cm}\subfigure[{$m_{2}(x,t)$}\label{fig:breather-propagating-M2}]{\includegraphics[scale=1]{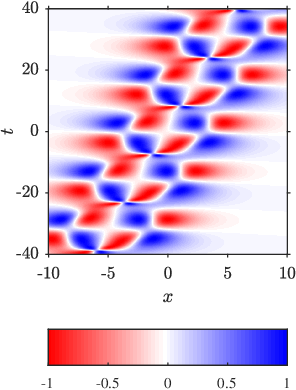}}
\hspace{+0.2cm}\subfigure[{$m_{3}(x,t)$}\label{fig:breather-propagating-M3}]{\includegraphics[scale=1]{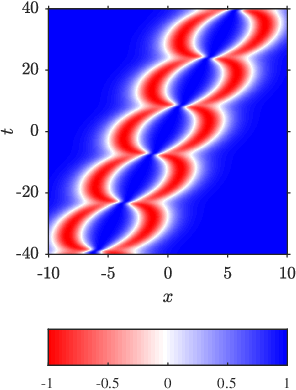}}
\caption{Propagating, breather-like soliton.\label{fig:breather-propagating}}
\end{figure}
\end{center}
More generally, as said above, propagating or stationary, breather-like solitons can be had by creating two stationary, or two same-speed, propagating solitons close to each other. Indeed, the two solitons, associated to two different eigenvalues in the matrix $A$, and tangled to form a single, breather-like soliton, always maintain their ``individualities'', and, for instance, can be untangled or transformed into a different breather-like soliton by means of the interaction with another propagating soliton with different speed (see for instance Figure \ref{fig:interaction}): in other words, a single, breather-like soliton should always be regarded as a stable, periodic tangle of two interacting, but individual entities. As an example of this, in Figure \ref{fig:stationary-to-breather} a transition from two separated stationary solitons to a single, breather-like soliton is illustrated (for the sake of brevity, only the third component of the magnetization is given). Figure \ref{fig:stationary-to-breather-01} shows two stationary solitons, located at $x\simeq-7$ and $x\simeq7$, respectively, obtained via \eqref{eq:onesoliton-pq} and \eqref{eq:normingc} with
\begin{align*}
v^{(1)} &= 0\,,& \omega^{(1)} &= 0.8\,,& x_{0}^{(1)} &= -3\,,& \varphi_{0}^{(1)} &= 0\,,&\\
v^{(2)} &= 0\,,& \omega^{(2)} &= 0.4\,,& x_{0}^{(2)} &= 7\,,& \varphi_{0}^{(2)} &= 0\,.&
\end{align*}
Subsequently, only the values of $x_{0}^{(1)}$ and $x_{0}^{(2)}$ are modified, in order to reduce the (average) distance between the two solitons. In Figure \ref{fig:stationary-to-breather-02}, obtained by changing the above values of $x_{0}^{(1)}$ and $x_{0}^{(2)}$ into $x_{0}^{(1)}=0.9$ and $x_{0}^{(2)}=3.72$, the two solitons oscillate around $x\simeq-3.35$ and $x\simeq3.35$. In Figure \ref{fig:stationary-to-breather-03}, obtained by changing the above values of $x_{0}^{(1)}$ and $x_{0}^{(2)}$ into $x_{0}^{(1)}=1.6$ and $x_{0}^{(2)}=3$, the two solitons coalesced into a single, breather-like structure, formed by two entities oscillating around $x\simeq-2.75$ and $x\simeq2.75$.
\begin{center}
\begin{figure}[!ht]
\hspace{-0.0cm}\subfigure[{$x_{0}^{(1)}=-3$, $x_{0}^{(2)}=7$}\label{fig:stationary-to-breather-01}]{\includegraphics[scale=1]{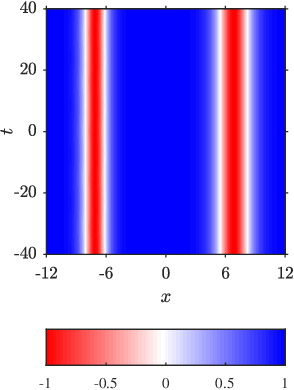}}
\hspace{+0.2cm}\subfigure[{$x_{0}^{(1)}=0.9$, $x_{0}^{(2)}=3.72$}\label{fig:stationary-to-breather-02}]{\includegraphics[scale=1]{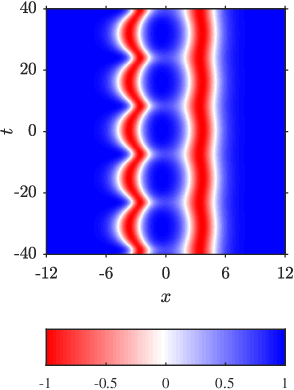}}
\hspace{+0.2cm}\subfigure[{$x_{0}^{(1)}=1.6$, $x_{0}^{(2)}=3$}\label{fig:stationary-to-breather-03}]{\includegraphics[scale=1]{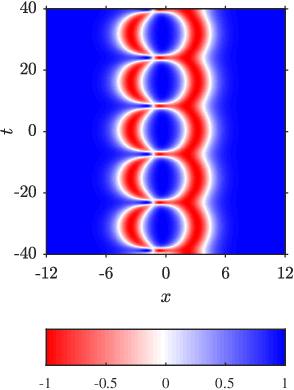}}
\caption{Transition from two stationary solitons to a pair of solitons forming a single stationary breather-like soliton (only $m_{3}(x,t)$ is shown).\label{fig:stationary-to-breather}}
\end{figure}
\end{center}
\subsection{Multipole solutions}\label{sec:4c}
If $n_{j}>1$ for some $j$, then $A$ features a Jordan block of order $n_{j}$, and one has multipole soliton solutions.
An accurate analysis of the properties of the multipole soliton solutions, including their asymptotic behaviour, can be done in analogy to the study of the multipole solutions of the nonlinear Schr\"{o}dinger equation \cite{Olmedilla1986, GagnonStievenart1994, Schiebold2014} and of the Hirota equation \cite{DM4}, and is postponed to future investigation. It is worth reminding here that the first complete and rigorous asymptotic analysis of the multiple-pole solutions of the nonlinear Schr\"{o}dinger equation carried out in \cite{Schiebold2014}, about 30 years after their first systematic study in \cite{Olmedilla1986}, was made possible precisely due to the discovery of an explicit solution formula for the reflectionless case, which is the one given by the application of the matrix triplet technique (see \cite{DM0}).

If $n=1$, $n_{1}=2$, $\bar{n}=2$ in \eqref{eq:5.1}, then we have a single two-pole soliton solution. In this case, it is possible to show that, if the associated eigenvalue of $A$ is real ($a=p$), so that $A=\left(\begin{smallmatrix}p&1\\0&p\end{smallmatrix}\right)$, if $B$ is chosen as a vector of ones, and if the norming constants in $C$ are chosen as follows
\begin{equation}\label{eq:normingc-two-pole}
C = \left(c_1,\,c_{2}\right) =\left(4\,p^{2},\,4\,p\,[1+p\,(2\,x_0-1)]\right)\,e^{2\,p\,x_0-i\,\varphi_0}\,,
\end{equation}
then a single, symmetrical, two-pole soliton solution is created, with $m_{3}$ characterized by two minima, constituting two separated branches, that -- in analogy with the multipole solutions of the nonlinear Schr\"{o}dinger equation \cite{Schiebold2014} -- are expected to propagate in space at a velocity that varies logarithmically in time: the two minima are infinitely-apart from each other at $t=-\infty$; for $t<0$, they approach each other as $t$ increases; they interact once in $x_{0}$ at $t=0$; then they separate logarithmically from each other; and finally they are again infinitely-apart from each other at $t=\infty$. The phase on the $(\be_1,\be_2)$ plane at $t=0$ is $\varphi_{0}$. Figure \ref{fig:multipole-02} shows an example of such a solution, obtained via \eqref{eq:normingc-two-pole} with $p=\sqrt{2}$, $x_{0}=0$, and $\varphi_{0}=0$.
\begin{center}
\begin{figure}[!ht]
\hspace{-0.0cm}\subfigure[{$m_{1}(x,t)$}\label{fig:multipole-02-M1}]{\includegraphics[scale=1]{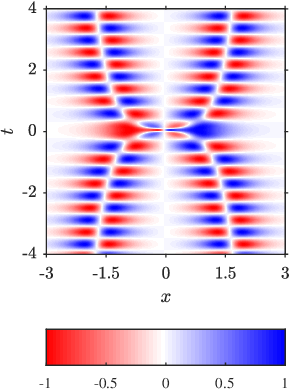}}
\hspace{+0.2cm}\subfigure[{$m_{2}(x,t)$}\label{fig:multipole-02-M2}]{\includegraphics[scale=1]{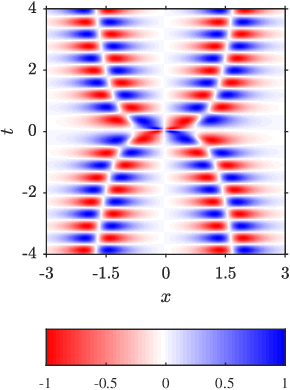}}
\hspace{+0.2cm}\subfigure[{$m_{3}(x,t)$}\label{fig:multipole-02-M3}]{\includegraphics[scale=1]{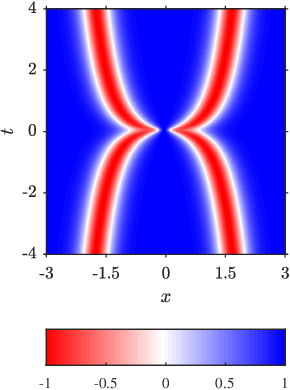}}
\caption{A multipole soliton solution with algebraic multiplicity $n_{1}=2$.\label{fig:multipole-02}}
\end{figure}
\end{center}
Then the same technique can be generalized to any value of the algebraic multiplicity $n_{j}$. For instance, if $n=1$, $n_{1}=3$, $\bar{n}=3$ in \eqref{eq:5.1}, then we have a single three-pole soliton solution. Analogously to the previous case, it is possible to show that, if the associated eigenvalue of $A$ is real ($a=p$), so that $A=\left(\begin{smallmatrix}p&1&0\\0&p&1\\0&0&p\end{smallmatrix}\right)$, if $B$ is chosen as a vector of ones, and if the norming constants in $C$ are chosen as follows
\begin{equation}\label{eq:normingc-three-pole}
C^{T} = \left(\begin{array}{c}c_1\\c_{2}\\c_{3}\end{array}\right) =\left(\begin{array}{c}8\,p^{3}\\4\,p^{2}\,[3+p\,(4\,x_0-2)]\\8\,p^{2}\,x_0\,(x_0-1)+6\,p\,(2\,x_0-1)+3\end{array}\right)\,e^{2\,p\,x_0-i\,\varphi_0}\,,
\end{equation}
then a single, symmetrical, three-pole soliton solution is created, with $m_{3}$ characterized by three minima, constituting three separated branches, propagating in space at a velocity that varies logarithmically in time, and interacting in $x=x_0$ at $t=0$. Figure \ref{fig:multipole-03} shows an example of such a solution, obtained via \eqref{eq:normingc-three-pole} with $p=1$, $x_{0}=0$, and $\varphi_{0}=0$.
\begin{center}
\begin{figure}[!ht]
\hspace{-0.0cm}\subfigure[{$m_{1}(x,t)$}\label{fig:multipole-03-M1}]{\includegraphics[scale=1]{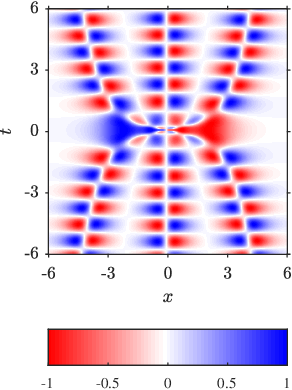}}
\hspace{+0.2cm}\subfigure[{$m_{2}(x,t)$}\label{fig:multipole-03-M2}]{\includegraphics[scale=1]{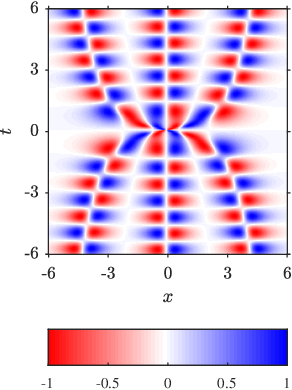}}
\hspace{+0.2cm}\subfigure[{$m_{3}(x,t)$}\label{fig:multipole-03-M3}]{\includegraphics[scale=1]{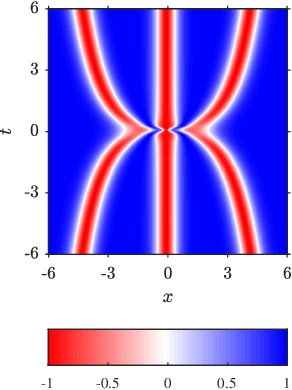}}
\caption{A multipole soliton solution with algebraic multiplicity $n_{1}=3$.\label{fig:multipole-03}}
\end{figure}
\end{center}
A symmetrical four-pole soliton solution, with the four branches coalescing at $x=0$ for $t=0$, can be generated by choosing $n=1$, $n_{1}=4$, $\bar{n}=4$ in \eqref{eq:5.1}, $A$ a Jordan block of order 4, $B$ a vector of ones, and the norming constants in $C$ as follows
\begin{equation}\label{eq:normingc-four-pole}
C^{T} = \left(\begin{array}{c}c_1\\c_{2}\\c_{3}\\c_{4}\end{array}\right) =\left(\begin{array}{c}
16\,p^4\\
16\,p^3\,(2-p)\\
8\,p^2\,(3-4\,p)\\
p\,(8-24\,p)
\end{array}\right)\,,
\end{equation}
with $p=\tfrac{\sqrt{3}}{2}$.

Multipole soliton solutions can be made to ``propagate'' by adding a non-zero imaginary part to the eigenvalue $a$, so that the centre-of-mass, that is the geometric centre of the location of the minima of $m_3$, travels at a constant speed (given by four times the value of the imaginary part). An identical phenomenon is observed for the nonlinear Schr\"{o}dinger equation, see \cite{Schiebold2014}.

An example of interaction between a single soliton, a breather-like soliton and a two-pole soliton is shown in Figure \ref{fig:interaction}, corresponding to the following choice of the matrix triplet $(A,B,C)$:
\begin{footnotesize}
\begin{align*}
A &= \left(\begin{array}{ccccc}
p_1+i\,q_1&0&0&0&0\\
0&p_2+i\,q_2&0&0&0\\
0&0&p_3+i\,q_3&0&0\\
0&0&0&p_4+i\,q_4&1\\
0&0&0&0&p_4+i\,q_4\\
\end{array}\right)
\,,\quad
B = \left(\begin{array}{c}1\\1\\1\\1\\1\end{array}\right)\,,\\
C^{T} &= \left(\begin{array}{c}
2\,\sqrt{\frac{(p_1+p_2)^2+(q_1-q_2)^2}{(p_1-p_2)^2+(q_1-q_2)^2}}\,p_1\\
2\,\sqrt{\frac{(p_1+p_2)^2+(q_1-q_2)^2}{(p_1-p_2)^2+(q_1-q_2)^2}}\,p_2\\
2\,i\,p_3\,\left(\frac{p_3+i\,q_3}{p_3-i\,q_3}\right)\,e^{2\,(p_3+i\,q_3)\,x_{0}^{(3)}-i\,\varphi_{0}^{(3)}}\\
4\,p_{4}^{2}\,e^{2\,p_{4}\,x_0^{(4)}-i\,\varphi_0^{(4)}}\\
4\,p_{4}\,\left[1+p_{4}\,(2\,x_0^{(4)}-1)\right]\,e^{2\,p_{4}\,x_0^{(4)}-i\,\varphi_0^{(4)}}
\end{array}\right)\,,
\end{align*}
\end{footnotesize}
where $p_j$ and $q_{j}$, for j=1,2,3,4, are obtained via \eqref{eq:onesoliton-parameters} with
\begin{align*}
v^{(1)} &= 0\,,& \omega^{(1)} &= 3.6\,,& x_{0}^{(1)} &= 0\,,& \varphi_{0}^{(1)} &= 0\,,&\\
v^{(2)} &= 0\,,& \omega^{(2)} &= 1\,,& x_{0}^{(2)} &= 0\,,& \varphi_{0}^{(2)} &= 0\,,&\\
v^{(3)} &= 1.75\,,& \omega^{(3)} &= 3\,,& x_{0}^{(3)} &= -4\,,& \varphi_{0}^{(3)} &= \tfrac{3\,\pi}{4}\,,&\\
v^{(4)} &= 0\,,& \omega^{(4)} &= 2.9\,,& x_{0}^{(4)} &= 6\,,& \varphi_{0}^{(4)} &= 0\,.&
\end{align*}
Note that, with the above choice of values for the parameters, we have $q_{1}=q_{2}=q_{4}=0$, and $C_{1}=2\,\tfrac{p_1+p_2}{p_1-p_2}\,p_1$, $C_{2}=2\,\tfrac{p_1+p_2}{p_1-p_2}\,p_2$. In Figure \ref{fig:interaction}, observe that, along with a spatial and phase shift, the stationary breather-like soliton (associated to the parameters labelled 1 and 2 in the above table) experiences a change in its structure because of the interaction with the propagating soliton (associated to the parameters labelled 3 in the above table), whereas the two-pole soliton (associated to the parameters labelled 4 in the above table) appears to be only shifted in space and phase after the same interaction (one branch at a time, in chronological order of interaction).
\begin{center}
\begin{figure}[!ht]
\hspace{-0.0cm}\subfigure[{$m_{1}(x,t)$}\label{fig:interaction-M1}]{\includegraphics[scale=1]{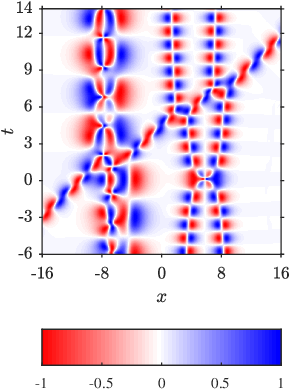}}
\hspace{+0.2cm}\subfigure[{$m_{2}(x,t)$}\label{fig:interaction-M2}]{\includegraphics[scale=1]{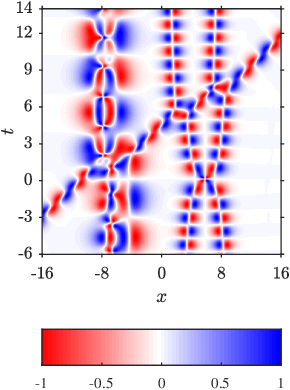}}
\hspace{+0.2cm}\subfigure[{$m_{3}(x,t)$}\label{fig:interaction-M3}]{\includegraphics[scale=1]{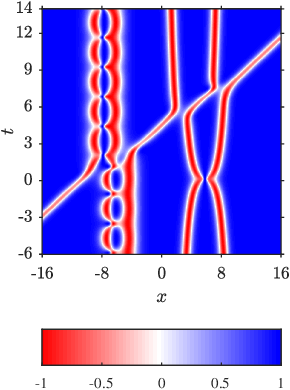}}
\caption{Interaction of a propagating soliton with a breather-like soliton and a multipole soliton with algebraic multiplicity 2.\label{fig:interaction}}
\end{figure}
\end{center}
Aside, we report the remarkable fact that, when $x_{0}=0$ and the norming constants are given by \eqref{eq:normingc-two-pole} or \eqref{eq:normingc-three-pole}, or when the norming constants are as in \eqref{eq:normingc-four-pole}, then the two matrices $N$ and $Q$ commute.

\section{Conclusions and outlooks}\label{sec:5}
In this paper we have rigorously developed a novel approach for the IST for the classical, continuous Heisenberg ferromagnetic chain equation \eqref{eq:HF1}, making a consistent advancement on the theory developed in \cite{T, ZT, BianGuoLing2014}. Our working hypotheses (Assumptions \ref{HP1} and \ref{HP2}) are less restrictive than those assumed in \cite{T, ZT}. Hypotheses analogous to ours are employed also in \cite{BianGuoLing2014}, where the inverse scattering theory for \eqref{eq:HF1} is developed exploiting the gauge equivalence to the nonlinear Schr\"{o}dinger equation established in \cite{ZT} and by solving the corresponding Riemann-Hilbert problem. On the contrary, we have used directly the Marchenko equations to reconstruct the potential function. In our treatment, we have proved the analyticity properties of the eigenfunctions and the scattering data. In order to derive these results, we have used a new triangular representation of the Jost solutions (see equations \eqref{eq:final1} and \eqref{eq:limtaurho}), which eases the study of their asymptotic behaviour as well as of the scattering data in the far range of the spectral parameter $\zl$. In doing so, we fixed some existing issues in the literature. Moreover, by using the matrix triplet method, we have found a new, explicit multi-soliton solution formula for equation \eqref{eq:HF1}, which includes and allows the immediate classification of all the soliton solutions already known \cite{NakSas1974, LakRuiTho1976, Tjon, Wang2005, SalHas2009, BianGuoLing2014, ChenWan2014}, and, in particular, general, explicit expressions for the breather-like and multipole solutions (see Section \ref{sec:4}). This formula is particularly amenable of computer algebra and allows to easily read relevant physical information for the phenomena modeled by the corresponding solutions.

The work presented in this paper is part of a larger research programme recently launched and aimed at understanding integrable, continuous, $(1+1)$-dimensional models of ferromagnetism at the nanometer length scale, and especially at finding in closed-form and classifying all the non-topological, propagating, localized solutions of the one-dimensional, continuous Landau-Lifshitz equation with any kind of magnetic anisotropy (the continuous Heisenberg ferromagnet equation corresponds to the magnetically-isotropic Landau-Lifshitz equation). In particular, the results presented in this paper pave the way to a similar, immediate application of the IST to the Landau-Lifshitz equation with uni-axial (easy-axis and easy-plane) anisotropy (see \cite{BK, BKK, Mik, Zong}). Indeed, by adopting there the same triangular representation of the Jost solutions as considered here for the Heisenberg ferromagnet equation, we expect to overcome many difficulties currently featured by the IST machinery for the uni-axial Landau-Lifshitz equation, thereby establishing the asymptotic behavior of the scattering data for large values of the spectral parameter $\zl$, as well as providing explicit expressions for new classes of soliton solutions.

Finally, bearing in mind the connection illustrated in \cite{IaDuBoMoChHoAk2014} between low-dimensional magnetic-droplet solitons and current experiments on the magnetization dynamics in ferromagnetic nanowires, the results of the present paper are noteworthy in view of the potential applications in nanomagnetism and spintronics.

\section*{Acknowledgments}
This research has been partially supported by the University of Cagliari and the Regione Autonoma Sardegna in the framework of the ``Visiting Professor Call 2015'', which made possible the stay of MS at the Department of Mathematics and Computer Science of the University of Cagliari during the Spring Semester 2015, and by GNFM-INdAM (Gruppo Nazionale per la Fisica Matematica, National Group for Mathematical Physics -- Istituto Nazionale di Alta Matematica, National Institute of Advanced Mathematics). MS wishes to express his gratitude for the hospitality of the Department of Mathematics and Computer Science of the University of Cagliari, where part of this article has been written. MS wishes also to thank Northumbria University for allowing him to undertake the said visiting programme abroad.

FV wishes to express her gratitude for the hospitality of the Department of Mathematics and Information Sciences of Northumbria University during her visit in February 2016.

\appendix
\section{Proof of the Marchenko equations}\label{sec:A} 
In this appendix we derive the Marchenko integral equations assuming that all the discrete eigenvalues are simple (the general case when the discrete eigenvalues are not simple can be treated as in \cite{D}).
Recalling (\ref{eq:3.11b}) with (\ref{eq:3.13b}),
$$
\bH^{up}(x)=\begin{pmatrix}H^{up}_1(x)&-H^{up}_2(x)^*\\H^{up}_2(x)&H^{up}_1(x)^*\end{pmatrix},\qquad
{\bH^{dn}}(x)=\begin{pmatrix}H^{dn}_{1}(x)^*&H^{dn}_{2}(x)\\
-H^{dn}_{2}(x)^*&H^{dn}_{1}(x)\end{pmatrix}\,,
$$
and setting
$$
T(\zl)=T_0(\zl)+\sum_j\,\frac{\theta_j}{\zl-ia_j}
$$
for some function $T_0(\zl)$ that is continuous in $\zl\in\overline{\C^+}$, is analytic in $\zl\in\C^+$, and tends to $e^{-i\za}$ as $\zl\to\infty$ from within $\overline{\C^+}$, we obtain from \eqref{eq:4.2} with \eqref{eq:4.4}
\begin{footnotesize}
\begin{align*}
&{\begin{pmatrix}-H^{up}_2(x)^*\\H^{up}_1(x)^*\end{pmatrix}}+{\int_0^\infty \mathrm{d}\xi\,
e^{-i\zl \xi}\begin{pmatrix}-K^{up}_2(x,x+\xi)^*\\ K^{up}_1(x,x+\xi)^*\end{pmatrix}}
={e^{-i\za}\begin{pmatrix}H^{dn}_{2}(x)\\ H^{dn}_{1}(x)\end{pmatrix}}
\nonumber\\ &+{T_0(\zl)e^{i\zl x}\phi(x,\zl)-e^{-i\za}\begin{pmatrix}
H^{dn}_{2}(x)\\ H^{dn}_{1}(x)\end{pmatrix}}
+{\sum_j\,\theta_j\frac{e^{i\zl x}\phi(x,\zl)
-e^{-a_jx}\phi(x,ia_j)}{\zl-ia_j}}\nonumber\\
&+i\sum_j\,\frac{c_{j}\,e^{-a_jx}}{\zl-ia_j}\,\Bigg\{\begin{pmatrix}H^{up}_1(x)\\H^{up}_2(x)
\end{pmatrix}+\int_0^\infty \mathrm{d}\xi\,e^{-a_j\xi}\begin{pmatrix}K^{up}_1(x,x+\xi)\\\
K^{up}_2(x,x+\xi)\end{pmatrix}\Bigg\}\nonumber\\
&-\left(\int_{-\infty}^\infty \mathrm{d}\zeta\,e^{-i\zl \zeta}\hat{R}(\zeta+2x)\right)\Bigg\{
\begin{pmatrix} H^{up}_1(x)\\ H^{up}_2(x)\end{pmatrix}+\int_0^\infty \mathrm{d}\xi\,
e^{i\zl \xi}\begin{pmatrix}K^{up}_1(x,x+\xi)\\ K^{up}_2(x,x+\xi)\end{pmatrix}\Bigg\},
\end{align*}
\end{footnotesize}
where $\hat{R}$ is defined as in Proposition \ref{Preflection}. Considering the limits as $\zl\to\pm\infty$, we obtain
\begin{equation}\label{eq:4.7}
\bH^{up}(x)={\bH^{dn}}(x)\,e^{i\za\zs_3},
\end{equation}
where $\tau(\zl)\to e^{i\za}$ as $\zl\to\pm\infty$. Using the identity
$$
-\frac{i}{\zl-ia_j}={\int_0^\infty \mathrm{d}\xi\,e^{-i\zl \xi}e^{-a_j\xi}}\,,
$$
and focussing in the above expression on the terms of the form $\int_0^\infty \mathrm{d}\xi\,e^{-i\zl \xi}[\ldots]$, stripping off the Fourier transform, we get
\begin{footnotesize}
\begin{align*}
\begin{pmatrix}-K^{up}_2(x,x+\xi)^*\\ K^{up}_1(x,x+\xi)^*\end{pmatrix}&=-\sum_j\,
c_j\,e^{-a_j(\xi+2x)}\,\begin{pmatrix}H^{up}_1(x)\\H^{up}_2(x)\end{pmatrix}\nonumber\\
&\quad-\int_0^\infty \mathrm{d}\zeta\,c_j\,e^{-a_j(\xi+\zeta+2x)}\,\begin{pmatrix}K^{up}_1(x,x+\zeta)\\
K^{up}_2(x,x+\zeta)\end{pmatrix}\nonumber\\
&\quad-\hat{R}(\xi+2x)\begin{pmatrix}H^{up}_1(x)\\ H^{up}_2(x)\end{pmatrix}
-\int_0^\infty \mathrm{d}\zeta\,\hat{R}(\zeta+\xi+2x)\begin{pmatrix}K^{up}_1(x,x+\zeta)\\K^{up}_2(x,x+\zeta)
\end{pmatrix}.
\end{align*}
\end{footnotesize}

Analogously, we have
\begin{footnotesize}
\begin{align*}
&{\begin{pmatrix}H^{up}_1(x)\\H^{up}_2(x)\end{pmatrix}}+{\int_0^\infty \mathrm{d}\xi\,
e^{i\zl \xi}\begin{pmatrix}K^{up}_1(x,x+\xi)\\K^{up}_2(x,x+\xi)\end{pmatrix}}={e^{i\za}
\begin{pmatrix}H^{dn}_{1}(x)^*\\-H^{dn}_{2}(x)^*\end{pmatrix}}\nonumber\\
&+{T_0(\zl^*)^*e^{-i\zl x}\overline{\phi}(x,\zl)
-e^{i\za}\begin{pmatrix}H^{dn}_{1}(x)^*\\-H^{dn}_{2}(x)^*\end{pmatrix}}
+{\sum_j\,\theta_j^*\frac{e^{-i\zl x}\overline{\phi}(x,\zl)-e^{-a_j^*x}
\overline{\phi}(x,-ia_j^*)}{\zl+ia_j^*}}\nonumber\\
&-i\sum_j\,\frac{\overline{c}_j\,e^{-2a_j^*x}}{\zl+ia_j^*}\,\Bigg\{\begin{pmatrix}
-H^{up}_2(x)^*\\ H^{up}_1(x)^*\end{pmatrix}+\int_0^\infty \mathrm{d}\xi\,\sum_j\,e^{-a_j^*\xi}
\begin{pmatrix}-K^{up}_2(x,x+\xi)^*\\K^{up}_1(x,x+\xi)^*\end{pmatrix}\Bigg\}\nonumber\\
&+\left(\int_{-\infty}^\infty \mathrm{d}\zeta\,e^{i\zl \zeta}\hat{R}(\zeta+2x)^*\right)\Bigg\{
\begin{pmatrix}-H^{up}_2(x)^*\\H^{up}_1(x)^*\end{pmatrix}+\int_0^\infty \mathrm{d}\xi\,e^{-i\zl \xi}
\begin{pmatrix}-K^{up}_2(x,x+\xi)^*\\K^{up}_1(x,x+\xi)^*\end{pmatrix}\Bigg\}.
\end{align*}
\end{footnotesize}
Using the identity
$$
\frac{i}{\zl+ia_j^*}={\int_0^\infty \mathrm{d}\xi\,e^{i\zl \xi}e^{-a_j^*\xi}}\,,
$$
and focussing in the above expression on the terms of the form $\int_0^\infty \mathrm{d}\xi\,e^{i\zl \xi}[\ldots]$, stripping off the Fourier transform, we get
\begin{footnotesize}
\begin{align*}
\begin{pmatrix}K^{up}_1(x,x+\xi)\\K^{up}_2(x,x+\xi)\end{pmatrix}
&=-\sum_j\,\overline{c}_j\,e^{-a_j^*(\xi+2x)}\,\begin{pmatrix}-H^{up}_2(x)^*\\H^{up}_1(x)^*
\end{pmatrix}\nonumber\\
&\quad-\int_0^\infty \mathrm{d}\zeta\,\sum_j\,\overline{c}_j\,e^{-a_j^*(\xi+\zeta+2x)}\begin{pmatrix}
-K^{up}_2(x,x+\zeta)^*\\ K^{up}_1(x,x+\zeta)^*\end{pmatrix}\nonumber\\
&\quad+\hat{R}(\xi+2x)^*
\begin{pmatrix}-H^{up}_2(x)^*\\H^{up}_1(x)^*\end{pmatrix}+\int_0^\infty \mathrm{d}\zeta\,
\hat{R}(\zeta+\xi+2x)^*\begin{pmatrix}-K^{up}_2(x,x+\zeta)^*\\K^{up}_1(x,x+\zeta)^*\end{pmatrix}.
\end{align*}
\end{footnotesize}

Setting $y=x+\xi\ge x$, we obtain the system of coupled Marchenko integral equations
\eqref{eq:4.10}:
\begin{equation*}
\bK^{up}(x,y)+\bH^{up}(x)\,\bO(x+y)+\int_x^\infty \mathrm{d}\xi\,\bK^{up}(x,\xi)\bO(\xi+y)=0_{2\times2}\,\,,
\end{equation*}
where
$\bO(x)=\begin{pmatrix}0&\Omega(x)\\-\Omega(x)^*&0\end{pmatrix}$.

Analogously, for $\bK^{dn}$ and $\bH^{dn}$, one can prove that
\begin{subequations}\label{eq:4.10dn}
\begin{equation}\label{eq:4.10dna}
\bK^{dn}(x,y)+\bH^{dn}(x)\,\overline{\bO}(x+y)+\int_{-\infty}^{x} \mathrm{d}\xi\,\bK^{dn}(x,\xi)\overline{\bO}(\xi+y)=0_{2\times2},
\end{equation}
with
\begin{equation}\label{eq:4.10dnb}
\overline{\bO}(x)=\begin{pmatrix}0&\overline{\Omega}(x)\\-\overline{\Omega}(x)^*&0\end{pmatrix}
\quad\mbox{ and }\quad
\overline{\Omega}(x)=\hat{L}(x)+\sum_{j=1}^{n}\,\overline{c}_j\,e^{-a_j^*x}\,,
\end{equation}
\end{subequations}
where $\hat{L}$ is defined as in Proposition \ref{Preflection}.


We conclude this appendix by observing that an alternative expression for the solution of \eqref{eq:HF1} can be obtained as follows.
Setting
\begin{equation}\label{eq:Ldn}
\bK^{dn}(x,y)=\bH^{dn}(x)\,\overline{\bL}(x,y)\,,
\end{equation}
we arrive at the alternative Marchenko integral equation for $\overline{\bL}(x,y)$,
\begin{equation}\label{eq:4.11dn}
\overline{\bL}(x,y)+\overline{\bO}(x+y)+\int_{-\infty}^x \mathrm{d}\xi\,\overline{\bL}(x,\xi)\,\overline{\bO}(\xi+y)=0_{2\times2},
\end{equation}
which is the analogue for $\overline{\bL}(x,y)$ of equation \eqref{eq:4.11} for $\bL(x,y)$.

Let $\widetilde{\overline{\bK}}(x)$ and $\widetilde{\overline{\bL}}(x)$ be defined as follows (see equation \eqref{eq:notazione}):
\begin{equation}\label{eq:notazionedn}
\widetilde{\overline{\bK}}(x)=\int_{-\infty}^{x} \mathrm{d}\xi\,\bK^{dn}(x,\xi)\,,\quad
\widetilde{\overline{\bL}}(x)=\int_{-\infty}^{x} \mathrm{d}\xi\,\overline{\bL}(x,\xi)\,.
\end{equation}
Then, from \eqref{eq:2.1b} and \eqref{eq:3.12} we easily get the analogue for $\bH^{dn}(x)$ and $\widetilde{\overline{\bL}}(x)$ of relation \eqref{eq:3.17} for $\bH^{up}(x)$ and $\tilde{\bL}(x)$,
\begin{subequations}\label{eq:3.17dn}
\begin{equation}\label{eq:3.17dna}
I_2=\Phi(x,0)=\bH^{dn}(x)+\widetilde{\overline{\bK}}(x)=\bH^{dn}(x)\,\left[I_2+\widetilde{\overline{\bL}}(x)\right]\,,
\end{equation}
where
\begin{equation}\label{eq:3.17dnb}
\widetilde{\overline{\bL}}(x)={\bH^{dn}(x)}^{-1}\,\,\widetilde{\overline{\bK}}(x)\,.
\end{equation}
\end{subequations}
Equation \eqref{eq:3.17dn} allows us to establish the analogue for $\widetilde{\overline{\bL}}(x)$ of equation \eqref{eq:3.19} for $\tilde{\bL}(x)$, namely to write the solution of the initial value problem \eqref{eq:HF1} in the alternative form
\begin{equation}\label{eq:3.19dn}
\bm(x)\cdot\bzs =\bH^{dn}(x)\,\,\zs_3\,\,{\bH^{dn}(x)}^{-1}
=\left[I_2+{\widetilde{\overline{\bL}}(x)}^{\dagger}\right]\, \zs_3\, \bigg[I_2+\widetilde{\overline{\bL}}(x)\bigg]\,.
\end{equation}

\section{Explicit expressions for $\bm(x,t)$}\label{sec:B}
In the present appendix, we provide explicit expressions for $\bm(x,t)$ in terms of a matrix triplet in forms that are amenable to computer algebra and effective for actual numerical evaluation. In view of the following, we recall that, for inverting a generic invertible block matrix $U$ in the form
\begin{equation*}
U=\begin{pmatrix}U_{11}&U_{12} \\ U_{21}&U_{22} \end{pmatrix}\,,
\end{equation*}
whose diagonal blocks are both invertible, the rule is (see \emph{e.g.} \cite{GvL1983})
\begin{footnotesize}
\begin{equation*}
U^{-1}=\begin{pmatrix}U_{11}&U_{12} \\ U_{21}&U_{22} \end{pmatrix}^{-1}=
\begin{pmatrix}(U_{11}-U_{12}U^{-1}_{22}U_{21})^{-1}&-U_{11}^{-1}U_{12}(U_{22}-U_{21}U_{11}^{-1}U_{12})^{-1}\\ -U_{22}^{-1}U_{21}(U_{11}-U_{12}U_{22}^{-1}U_{21})^{-1}&(U_{22}-U_{21}U^{-1}_{11}U_{12})^{-1} \end{pmatrix}\,.
\end{equation*}
\end{footnotesize}
Applying this relation to the inverse matrix in the right-hand side of \eqref{eq:final1},
\begin{equation*}
\left[e^{2x\CA}+\CP(t)\right]^{-1}=\begin{pmatrix}e^{2xA}&N\\-Q(t)&e^{2xA^{\dagger}}\end{pmatrix}^{-1},
\end{equation*}
and recalling the structure of $\tilde{\bL}$ in \eqref{eq:Lstructure},
\begin{equation*}
\tilde{\bL}(x)
=\begin{pmatrix}{\tilde{L}_{11}(x;t)}&{\tilde{L}_{12}(x;t)}\\{\tilde{L}_{21}(x;t)}&{\tilde{L}_{22}(x;t)}\end{pmatrix}
=\begin{pmatrix}{{\tilde{L}}_1(x;t)}&-{{\tilde{L}_2}^*(x;t)}\\{{\tilde{L}}_2(x;t)}&{{\tilde{L}_1}^*(x;t)}\end{pmatrix}\,,
\end{equation*}
we obtain an explicit expression for the elements of $\tilde{\bL}(x;t)$ in terms of $(A,B,C)$ and $N$, and $Q(t)$:
\begin{small}
\begin{subequations}\label{eq:Lexplicit}
\begin{align}
\tilde{L}_{11}(x;t)&=\tilde{L}_1(x;t)=-C(t)\,e^{-2xA}\,N\left[e^{2xA^\dagger}+Q(t)\,e^{-2xA}\,N\right]^{-1}\,(A^{\dagger})^{-1}\,C(t)^{\dagger}\,,\\
\tilde{L}_{12}(x;t)&=-\tilde{L}^*_2(x;t)=-C(t)\,\left[e^{2xA}+N\,e^{-2xA^{\dagger}}\,Q(t)\right]^{-1}\,A^{-1}\,B\,,\\
\tilde{L}_{21}(x;t)&=\tilde{L}_2(x;t)=B^{\dagger}\,\left[e^{2xA^{\dagger}}+Q(t)\,e^{-2xA}\,N\right]^{-1}\,(A^{\dagger})^{-1}\,C(t)^{\dagger}\,,\\
\tilde{L}_{22}(x;t)&=\tilde{L}^*_1(x;t)=-B^{\dagger}\,e^{-2xA^{\dagger}}\,Q(t)\,\left[e^{2xA}+N\,e^{-2xA^{\dagger}}\,Q(t)\right]^{-1}\,A^{-1}\,B\,.
\end{align}
\end{subequations}
\end{small}
where $C(t)=C\,e^{-4itA^2}$ and $Q(t)=(e^{-4itA^2})^{\dagger}\,Q\,e^{-4itA^2}$ as in \eqref{eq:final1f}. Using the determinant lemma \eqref{eq:detlemma} and exploiting the fact that $N$ and $Q(t)$ satisfy the Lyapunov equations \eqref{eq:5.4e} and \eqref{eq:5.4f}, one can directly verify that indeed $\tilde{L}^*_{11}(x;t)=\tilde{L}_{22}(x;t)$ and $\tilde{L}^*_{12}(x;t)=-\tilde{L}_{21}(x;t)$.

\smallskip
Finally, from \eqref{eq:Lstructure} and \eqref{eq:solvingformula}, we have
\begin{footnotesize}
\begin{equation*}
\begin{pmatrix} m_3(x;t)&m_{-}(x;t) \\ m_{+}(x;t)&-m_3(x;t) \end{pmatrix}=\begin{pmatrix} 2\,\left|1+\tilde{L}_1(x;t)\right|^2-1&-2\,\left(1+{\tilde{L}_1}^*(x;t)\right)\,{\tilde{L}_2}^*(x;t) \\ -2\,\left(1+\tilde{L}_1(x;t)\right)\,\tilde{L}_2(x;t)&-2\,\left|1+\tilde{L}_1(x;t)\right|^2+1\end{pmatrix}\,,
\end{equation*}
\end{footnotesize}
with $m_{+}(x;t) = m_1(x;t)+im_2(x;t) = m_{-}(x;t)^{*}$, entailing (\ref{eq:final2}).

\smallskip
The expression for the elements of $\tilde{\bL}(x;t)$ can be further simplified. Using the determinant lemma \eqref{eq:detlemma} on \eqref{eq:Lexplicit}, we have
\begin{small}
\begin{subequations}\label{eq:Lexplicit1}
\begin{align}
\tilde{L}_1(x;t)
&=\frac{\det\left(e^{2xA^{\dagger}}N^{-1}e^{2xA}-(A^{\dagger})^{-1}Q(t)A\right)}{\det\left(e^{2xA^{\dagger}}N^{-1}\,e^{2xA}+Q(t)\right)}-1=\label{eq:Lexplicit1a}\\
&=\frac{\det\left(N^{-1}-(A^{\dagger})^{-1}e^{-2xA^{\dagger}}Q(t)e^{-2xA}A\right)}{\det\left(N^{-1}+e^{-2xA^{\dagger}}Q(t)e^{-2xA}\right)}-1\,,\label{eq:Lexplicit1b}\\
\tilde{L}_2(x;t)
&=\frac{\det\left(e^{2xA^{\dagger}}+Q(t)e^{-2xA}N+(A^{\dagger})^{-1}C(t)^{\dagger}B^{\dagger}\right)}{\det\left(e^{2xA^{\dagger}}+Q(t)e^{-2xA}N\right)}-1=\label{eq:Lexplicit1c}\\
&=\frac{\det\left(e^{2xA^{\dagger}}N^{-1}e^{2xA}+Q(t)+(A^{\dagger})^{-1}C(t)^{\dagger}B^{\dagger}N^{-1}e^{2xA}\right)}{\det\left(e^{2xA^{\dagger}}N^{-1}e^{2xA}+Q(t)\right)}-1=\label{eq:Lexplicit1d}\\
&=\frac{\det\left(N^{-1}+e^{-2xA^{\dagger}}Q(t)e^{-2xA}+e^{-2xA^{\dagger}}(A^{\dagger})^{-1}C(t)^{\dagger}B^{\dagger}N^{-1}\right)}{\det\left(N^{-1}+e^{-2xA^{\dagger}}Q(t)e^{-2xA}\right)}-1\,.\label{eq:Lexplicit1e}
\end{align}
\end{subequations}
\end{small}
Expressions \eqref{eq:Lexplicit1a} and \eqref{eq:Lexplicit1d} are convenient for computing numerically the solution for $x$ large and negative. Similarly, expressions \eqref{eq:Lexplicit1b} and \eqref{eq:Lexplicit1e} are convenient for computing numerically the solution for $x$ large and positive. Furthermore, we have the following compact alternative expressions
\begin{small}
\begin{align*}
\tilde{L}_1(x;t)&=\frac{\det\left(I_{\bar{n}}-(A^{\dagger})^{-1}\tilde{Q}A\tilde{N}\right)}{\det\left(I_{\bar{n}}+\tilde{Q}\tilde{N}\right)}-1=\frac{\det\left(I_{\bar{n}}-\tilde{N}(A^{\dagger})^{-1}\tilde{Q}A\right)}{\det\left(I_{\bar{n}}+\tilde{N}\tilde{Q}\right)}-1\,,\\
{\tilde{L}_1}^*(x;t)&=\frac{\det\left(I_{\bar{n}}-A^{\dagger}\tilde{Q}A^{-1}\tilde{N}\right)}{\det\left(I_{\bar{n}}+\tilde{Q}\tilde{N}\right)}-1=\frac{\det\left(I_{\bar{n}}-(\tilde{N}A^{\dagger})\tilde{Q}A^{-1}\right)}{\det\left(I_{\bar{n}}+\tilde{N}\tilde{Q}\right)}-1\,,\\
\tilde{L}_2(x;t)&=\frac{\det\left(I_{\bar{n}}+\tilde{Q}\tilde{N}+(A^{\dagger})^{-1}\tilde{C}^{\dagger}\tilde{B}^{\dagger}\right)}{\det\left(I_{\bar{n}}+\tilde{Q}\tilde{N}\right)}-1\,,\\
{\tilde{L}_2}^*(x;t)&=\frac{\det\left(I_{\bar{n}}+\tilde{N}\tilde{Q}+\tilde{B}\tilde{C}A^{-1}\right)}{\det\left(I_{\bar{n}}+\tilde{N}\tilde{Q}\right)}-1\,,
\end{align*}
\end{small}
where
\begin{equation*}
\tilde{B}=e^{-xA}\,B\,,\qquad
\tilde{C}\equiv \tilde{C}(x;t)=C(t)\,e^{-xA}=Ce^{-4itA^2}\,e^{-xA}\,,
\end{equation*}
and $\tilde{N}\equiv \tilde{N}(x)$ and $\tilde{Q}\equiv \tilde{Q}(x;t)$ are the following self-adjoint matrix functions
\begin{align*}
\tilde{N} \equiv \tilde{N}(x)=\tilde{N}^{\dagger}=e^{-xA}Ne^{-xA^{\dagger}}\,,\quad
\tilde{Q} \equiv \tilde{Q}(x;t)=\tilde{Q}^{\dagger}=e^{-xA^{\dagger}}Q(t)e^{-xA}
\end{align*}
satisfying the Lyapunov equations
\begin{align*}
A^{\dagger}\tilde{Q}+\tilde{Q}A =\tilde{C}^{\dagger}\tilde{C}\,,\quad
A\tilde{N}+\tilde{N}A^{\dagger} =\tilde{B}\tilde{B}^{\dagger}.
\end{align*}
Incidentally, we observe that the following non-trivial identity can be proved (after some efforts):
\begin{equation*}
\det\left(I_{\bar{n}}+\tilde{N}\tilde{Q}+\tilde{B}\tilde{C}A^{-1}\right)=\det\left(I_{\bar{n}}+\tilde{N}\tilde{Q}+A^{-1}\tilde{B}\tilde{C}\right).
\end{equation*}
Clearly the same identity is also true if we replace $\tilde{B},\tilde{C},\tilde{N}$ and $\tilde{Q}$ with the corresponding non-tilde matrices $B,C,N$ and $Q$.

A completely explicit expression for $\bm(x,t)$ can be given in terms of the matrix triplet $(A,\tilde{B},\tilde{C})$ and $\tilde{N}$ and $\tilde{Q}$:
\begin{small}
\begin{subequations}\label{eq:final3}
\begin{align}
m_{+}(x;t)&=-2\,\frac{\det\left(I_{\bar{n}}-(A^{\dagger})^{-1}\tilde{Q}A\tilde{N}\right)}{\det\left(I_{\bar{n}}+\tilde{Q}\tilde{N}\right)}\left[\frac{\det\left(I_{\bar{n}}+\tilde{Q}\tilde{N}+(A^{\dagger})^{-1}\tilde{C}^{\dagger}\tilde{B}^{\dagger}\right)}{\det\left(I_{\bar{n}}+\tilde{Q}\tilde{N}\right)}-1\right]\,,\\
m_{-}(x;t)&=-2\,\frac{\det\left(I_{\bar{n}}-\tilde{N}A^{\dagger}\tilde{Q}A^{-1}\right)}{\det\left(I_{\bar{n}}+\tilde{N}\tilde{Q}\right)}\left[\frac{\det\left(I_{\bar{n}}+\tilde{N}\tilde{Q}+\tilde{B}\tilde{C}A^{-1}\right)}{\det\left(I_{\bar{n}}+\tilde{N}\tilde{Q}\right)}-1\right]\,,\\
m_3(x;t)&=2\,\frac{\det\left(I_{\bar{n}}-(A^{\dagger})^{-1}\tilde{Q}A\tilde{N}\right)\det\left(I_{\bar{n}}-A^{\dagger}\tilde{Q}A^{-1}\tilde{N}\right)}{\det\left(I_{\bar{n}}+\tilde{Q}\tilde{N}\right)^2}-1\nonumber\\
&=2\,\frac{\det\left(A^{\dagger}-\tilde{Q}A\tilde{N}\right)\det\left((A^{\dagger})^{-1}-\tilde{Q}A^{-1}\tilde{N}\right)}{\det\left(I_{\bar{n}}+\tilde{Q}\tilde{N}\right)^2}-1\nonumber\\
&=2\,\frac{\det\left(A^{-1}-\tilde{N}(A^{\dagger})^{-1}\tilde{Q}\right)\det\left(A-\tilde{N}A^{\dagger}\tilde{Q}\right)}{\det\left(I_{\bar{n}}+\tilde{N}\tilde{Q}\right)^2}-1\,,
\end{align}
\end{subequations}
\end{small}
with $m_1(x;t)=\mathrm{Re}\left(m_{+}(x;t)\right)$ and $m_2(x;t)=\mathrm{Im}\left(m_{+}(x;t)\right)$.

\bibliographystyle{plain}

\end{document}